\documentclass[a4paper,11pt]{article}

\usepackage[vmargin=1in]{geometry}
\usepackage{float}
\usepackage{enumitem}
\usepackage{appendix}
\usepackage{authblk}
\usepackage{xcolor}

\usepackage[utf8]{inputenc}
\usepackage[T1]{fontenc}
\usepackage[english]{babel}
\usepackage{tgpagella} 
\usepackage{xspace}
\usepackage{hyphenat}
\usepackage{csquotes}
\usepackage{comment}
\usepackage[showdeletions]{color-edits}
\usepackage[normalem]{ulem}
\usepackage{cancel}

\usepackage{amsmath,amsthm,amssymb,mathtools}
\usepackage{aliascnt} 
\usepackage{nicefrac}
\usepackage{dsfont}
\usepackage{complexity}
\usepackage{physics}
\usepackage{graphicx}
\usepackage{subcaption}
\usepackage{tikz}
\usepackage{pgfplots}
\pgfplotsset{compat=1.18}
\usepgfplotslibrary{fillbetween}
\usetikzlibrary{nfold,quantikz2}

\definecolor{compcolor}{RGB}{33,113,181} 
\definecolor{soundlog}{RGB}{107,174,214}   
\definecolor{soundpoly}{RGB}{8,81,156}

\theoremstyle{plain}
\newtheorem{theorem}{Theorem}[section] 

\newaliascnt{lemma}{theorem}
\newtheorem{lemma}[lemma]{Lemma}
\aliascntresetthe{lemma}

\newaliascnt{proposition}{theorem}

\aliascntresetthe{proposition}

\newaliascnt{corollary}{theorem}
\newtheorem{corollary}[corollary]{Corollary}
\aliascntresetthe{corollary}

\newaliascnt{procedure}{theorem}
\newtheorem{procedure}[procedure]{Procedure}
\aliascntresetthe{procedure}

\newaliascnt{definition}{theorem}
\newtheorem{definition}[definition]{Definition}
\aliascntresetthe{definition}

\theoremstyle{definition}
\newaliascnt{remark}{theorem}
\newtheorem{remark}[remark]{Remark}
\aliascntresetthe{remark}

\usepackage[breaklinks]{hyperref}
\hypersetup{
  colorlinks = true,
  linkcolor  = blue,
  citecolor  = darkcyan,
  urlcolor   = blue
}

\usepackage{cite}
\usepackage[nameinlink, capitalise]{cleveref}

\crefname{theorem}{Theorem}{Theorems}
\crefname{lemma}{Lemma}{Lemmas}
\crefname{proposition}{Proposition}{Propositions}
\crefname{corollary}{Corollary}{Corollaries}
\crefname{definition}{Definition}{Definitions}
\crefname{procedure}{Procedure}{Procedures}
\crefname{remark}{Remark}{Remarks}
\crefname{app}{Appendix}{Appendices}
\crefname{figure}{Figure}{Figures}

\definecolor{darkcyan}{rgb}{0.0,0.55,0.55}
\definecolor{darkspringgreen}{rgb}{0.09,0.45,0.27}
\definecolor{amethyst}{rgb}{0.6, 0.4, 0.8}

\usepackage{specific_symbols}

\title{\textbf{The complexity of Stoquastic Sparse Hamiltonians}}
\author[1]{Alex B. Grilo}
\author[1,2]{Marios Rozos}
\affil[1]{Sorbonne Universit\'e, CNRS and LIP6, France}
\affil[2]{National Technical University of Athens}
\date{}

\begin{document}

\maketitle

\begin{abstract}
Despite having an unnatural definition, $\StoqMA$ plays a central role in Hamiltonian complexity, e.g., in the classification theorem of the complexity of Hamiltonians by Cubitt and Montanaro (SICOMP 2016). Moreover, it lies between the two randomized extensions of $\NP$, $\MA$ and $\AM$. Therefore, understanding the exact power of $\StoqMA$ (and hopefully collapsing it with more natural complexity classes) is of great interest for different reasons. In this work, we take a step further in understanding this complexity class by showing that the Stoquastic Sparse Hamiltonians problem ($\StoqSH$) is in $\StoqMA$. Since Stoquastic Local Hamiltonians are $\StoqMA$-hard, this implies that $\StoqSH$ is $\StoqMA$-complete. We complement this result by showing that the separable version of $\StoqSH$ is $\StoqMA(2)$-complete, where $\StoqMA(2)$ is the version of $\StoqMA$ that receives two unentangled proofs.  
\end{abstract}

\vspace{0.8cm}

\section{Introduction}

Stoquastic Hamiltonians and $\StoqMA$ bridge a surprising connection between quantum Hamiltonian complexity and fundamental questions in classical complexity theory.

We recall that in the $k$-local Hamiltonian problem, we are given the description of a Hamiltonian $H= \sum_{j \in [m]} H_j$, where each term $H_i$ acts non-trivially on $k$ qubits of an $n$-qubit system, and two numbers $\alpha$ and $\beta$, and the task is to decide whether the ground-state energy of $H$ is below $\alpha$ or above $\beta$, promised that one of the two cases holds. This is the quantum analogue of classical constraint satisfaction problems and was shown by Kitaev to be complete for $\QMA$ \cite{Kit99}, the quantum analog of $\NP$. The $\QMA$-completeness of the local Hamiltonian problem presents a promising way of studying key problems in condensed matter physics, such as entanglement in ground states of Hamiltonians, via a complexity-theoretic lens.

Given the importance of this problem, there has been much effort in understanding the complexity of the local Hamiltonian problem for subfamilies of Hamiltonians of practical interest~\cite{2localkempe, 1dlineaharonov,  hubbardchilds, BH17, piddockmontanaro}. In~\cite{BBT06,BDOT08,BT10}, the authors study the complexity of the local {\em stoquastic} Hamiltonian problem, in which each term $H_j$ has nonpositive off-diagonal elements. These Hamiltonians have ground states with non-negative amplitudes, which can be interpreted as a probability distribution, and thus they do not suffer from the ``sign problem'', being more responsive to classical methods such as Monte Carlo simulations.

\cite{BBT06,BDOT08} showed that the stoquastic local Hamiltonian problem with inverse polynomial $\beta-\alpha$ is complete for $\StoqMA$. $\StoqMA$ is defined to be the complexity class where we have a quantum proof, a classical reversible circuit, and a measurement in the Hadamard basis to decide acceptance/rejection. While such a complexity class may seem somewhat artificial, it plays a key role in the classification theorem on the complexity of local Hamiltonian problems~\cite{classiciation2localHamiltonians}. From a complexity theory perspective, $\StoqMA$ lies between the two randomized generalizations of $\NP$:
\begin{align}
    \NP \subseteq \MA \subseteq \StoqMA \subseteq \AM.
\end{align}

We note that \cite{BT10} also showed that the stoquastic local Hamiltonian problem with $\alpha = 0$ and inverse polynomial $\beta$ (also known as {\em frustration-free} Hamiltonian), is \MA{}-complete. \cite{AG19} shows that such a problem actually sits in $\NP$ if $\beta = \Omega(1)$. 
Thus, understanding stoquastic Hamiltonians and $\StoqMA$ provides an interesting path toward progress in the fundamental questions  $\NP$ vs. $\MA$ and $\NP$ vs. $\AM$.

One direction to shed new light on $\StoqMA$ is to identify new problems that are complete for this class. Previous progress in this direction has focused on adding more structure to stoquastic local Hamiltonians. These include certain restricted Hamiltonian models -- such as the Transverse field Ising Model on degree-3 graphs, \cite{BH17}, and the Antiferromagnetic Transverse field Ising Model, \cite{piddockmontanaro} -- or geometrically local stoquastic Hamiltonians, such as on 2D-lattices or 1D-lines, \cite{raza2025complexitygeometricallylocalstoquastic}, or on spatially sparse graphs, \cite{waite2025complexitylocalstoquastichamiltonians}.

In this work, we take a complementary approach by proving $\StoqMA$-completeness of a problem with \emph{less} structure. In particular, we examine \emph{sparse stoquastic Hamiltonians}. A Hamiltonian $H$ is $d$-sparse, where $d=\poly(n)$, if each row of $H$ has at most $d$ nonzero entries, which either can be computed efficiently with a circuit or we assume that we have access to them via an oracle. Sparse Hamiltonians form a more general family of Hamiltonians than local Hamiltonians -- in fact, every $k$-local Hamiltonian with $m$ terms is $2^km$-sparse. The main focus of this paper is understanding the computational complexity of stoquastic sparse Hamiltonians ($\StoqSH$).

\subsection{Results}

Our main contribution is showing that $\StoqSH$ is $\StoqMA$-complete.

\begin{theorem}\label{thm-intro:StoqSH}
    $\StoqSH_{\alpha, \beta} \text{ is } \StoqMA_{a, b}\text{-complete}$, where $a-b\geq 1/\poly(n)$.
\end{theorem}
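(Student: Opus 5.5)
Hardness is immediate. Every $k$-local Hamiltonian with $m$ terms is $2^k m$-sparse, and its nonzero entries in any row can be listed efficiently from the local terms. Stoquasticity is preserved, since the off-diagonal entries of $H$ are sums of off-diagonal entries of the stoquastic terms. The stoquastic local Hamiltonian problem with inverse-polynomial gap is $\StoqMA$-hard \cite{BBT06,BDOT08}, so $\StoqSH_{\alpha,\beta}$ is $\StoqMA_{a,b}$-hard after the standard rescaling of $\alpha,\beta$ into the thresholds $a,b$.

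The substance is containment. Given a $d$-sparse stoquastic $H$ with $\norm{H}\le \poly(n)$, I would first pass to a nonnegative matrix. Set $G = \lambda I - H$ with $\lambda \ge \max_x \sum_y |H_{xy}|$ (computable row by row), so $G$ is entrywise nonnegative with row sums at most $2\lambda$. Then rescale to $P = G/(2\lambda)$, which is substochastic. The ground energy of $H$ below $\alpha$ becomes top eigenvalue of $P$ at least $a' = (\lambda-\alpha)/2\lambda$, and energy above $\beta$ becomes top eigenvalue at most $b' = (\lambda-\beta)/2\lambda$, with $a'-b' \ge 1/\poly$.

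The verifier must estimate $\bra{\psi}P\ket{\psi}$ with one Hadamard-basis measurement after a classical reversible circuit. I would decompose the sparse $P$ into $O(d^2)$ terms via edge colouring of the sparsity graph, as in sparse-Hamiltonian simulation. Each term is $c_j$ times a matching $M_j$, which is a partial permutation computable reversibly from the oracle, plus nonnegative diagonal pieces.
\begin{itemize}
\item For a matching term, the verifier picks $j$ at random and reversibly maps $x \mapsto (\min(x,M_j(x)), \text{bit})$. It then measures the bit in the Hadamard basis, accepting on $+$, which yields $(1+\bra{\psi}M_j\ket{\psi})/2$.
\item Entry weights $P_{xy}$ that are not constant along a matching are handled the standard $\StoqMA$ way. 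An ancilla $\ket{+}$ is used, a reversible comparison against a fresh random string (itself fed as $\ket{+}$ ancillas) yields a conditional coin, and the weight is realized as an acceptance probability.
\item Diagonal terms become classical coin flips.
\end{itemize}
The result is an acceptance probability affine in $\bra{\psi}P\ket{\psi}$, which separates by $1/\poly$. Since $\StoqMA_{a,b}$ with $a-b\ge1/\poly$ is the target class, no amplification is needed.

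The main obstacle is this second step: turning arbitrary polynomially-bounded, efficiently computable matrix entries into a single Hadamard-basis measurement without losing the inverse-polynomial gap or violating the $\StoqMA$ format. That format allows only a classical reversible circuit, $\ket{0}$ and $\ket{+}$ ancillas, and a measurement of one qubit. The danger is the cross terms between $x$ and $M_j(x)$ when the encoded weights differ, since these break the clean $(1\pm\text{overlap})/2$ form.

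I would first try to symmetrize. $P$ is symmetric, and for a matching $P_{x,M_j(x)}$ is a single value shared by both endpoints. I would therefore write the weight register as a function of the unordered pair, making both branches carry identical garbage so they interfere coherently. Diagonal weights would be delegated to separate branches selected by the random index. Because the final acceptance probability is the uniform average over the $O(d^2)$ colours plus the diagonal branch, the gap shrinks only by $O(d^2)$ and the normalisation $\lambda$, both polynomial, which gives membership in $\StoqMA_{a,b}$.
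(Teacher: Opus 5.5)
Your proposal is correct, but its containment argument uses a different key mechanism from the paper's. The ingredients are shared: the same hardness reduction, the $d^2$-colour 1-sparse decomposition, and the same dyadic-weight coin, obtained by comparing a $\ket{+}^{\otimes\ell}$ string with the numerator. That coin carries identical garbage on $x$ and its partner because the entry is symmetric.

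\textbf{How the paper does it.} The paper realizes the interference with a controlled-$f$ Hadamard test: a $\ket{+}$ control qubit is measured in the $X$ basis and the weight flag in the $Z$ basis. Because this is a two-qubit measurement, the paper must prove $\gStoqMA_{a,b}\subseteq\StoqMA_{(1+a)/2,(1+b)/2}$, via a SWAP test against a prepared $\ket{+}\ket{1}$. It also needs a second sub-verifier $\mathcal{V}_2$ to cancel the spurious diagonal term $\sum_x\alpha_x^2|H_{x,f(x)}|$.

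\textbf{How yours differs.} Your reversible pair compression $x\mapsto(\min(x,M_j(x)),b)$ puts the interference on a single bit, so you can stay inside the original one-qubit $\StoqMA$ format. You should, however, spell out the step you call ``the standard $\StoqMA$ way''. That step is exactly where the paper resorts to multi-qubit measurements.
\begin{itemize}
\item When the coin fails, a controlled swap should move a fresh $\ket{0}$ ancilla into the output position. Per colour, with self-loops moved to the diagonal branch, this gives exactly $\tfrac12\bigl(1+\bra{\psi}P_j\ket{\psi}\bigr)$, for complex amplitudes as well.
\item A diagonal branch that outputs a $\ket{+}$ ancilla on coin success and a $\ket{0}$ ancilla otherwise gives $\tfrac12\bigl(1+\bra{\psi}P_{\mathrm{diag}}\ket{\psi}\bigr)$.
\item If you used a $\ket{+}$ fallback instead, which is one reading of your ``an ancilla $\ket{+}$ is used'', you would pick up an extra $-\tfrac12\sum_x w_x|\alpha_x|^2$. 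That term must then be cancelled by a diagonal branch, which is precisely the role of the paper's $\mathcal{V}_2$.
\end{itemize}

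\textbf{What each buys.} Your route gives a self-contained proof in the original $\StoqMA$ format. It avoids the extra halving of the distance to $1$ that the SWAP-test conversion incurs. The paper's detour buys a reusable multi-qubit-measurement tool, which it needs later anyway for the product test and the $\StoqMA(2)$ results.

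One small fix: only the individual entries of $P$ need to lie in $[0,1]$; row sums play no role in your verifier. So you can simply take $P=(I-H)/2$, which has dyadic entries in $[0,\tfrac12]$. This also removes the problem that $\max_x\sum_y|H_{xy}|$ is a maximum over exponentially many rows and is not ``computable row by row''.
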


The hardness direction follows immediately from the $\StoqMA$-hardness of the local variant~\cite{BDOT08, BBT06}, since every $k$-local stoquastic Hamiltonian is a special case of a sparse stoquastic Hamiltonian. Our focus here is, therefore, to establish containment in $\StoqMA$.

The core difficulty lies in constructing a stoquastic verification procedure whose acceptance probability is a linear function of the energy expectation value $\ev{H}{\psi}$ of a given witness state $\ket{\psi}$. In the general (non-stoquastic) case, \cite{ATS03} show that the sparse Hamiltonian problem is in $\QMA$ by showing that one can use the Phase Estimation algorithm \cite{kitaevbook, NC10} to estimate the energy of a sparse Hamiltonian. This verification approach, however, is inherently quantum and cannot be leveraged by stoquastic verifiers. Moreover, the techniques of \cite{BDOT08} to put stoquastic {\em local} Hamiltonians in $\StoqMA$ do not extend to the sparse setting, since their approach consists of converting each local term of the Hamiltonian into a $\ketbra{+}$ projection rotated by a classically reversible circuit.

We resolve this obstacle by exploiting the decomposition of a sparse Hamiltonian into $1$-sparse terms \cite{BCCKS14}. For each term in this decomposition, we construct a verifier that runs a ``Hadamard-like" test using the witness state and the oracle that outputs the Hamiltonian matrix elements, and whose acceptance probability depends on $\ev{H}{\psi}$. 

A crucial ingredient of our verification procedure is a generalization of $\StoqMA$ veri\-fiers that allows measurements on a polynomial number of qubits. We show that such a generalization does not increase the power of the class, incurring just a loss in the completeness and soundness parameters. This generalization of $\StoqMA$ defines a class called $\gStoqMA$ and showcases this interesting property of multi-qubit measurements in  $\StoqMA$, which had not been considered before and may be of independent interest.

\begin{theorem}\label{thm-intro:gStoqMA=StoqMA}
    $\StoqMA_{a,b} \subseteq \gStoqMA_{a,b} \subseteq \StoqMA_{a',b'}$, where $ a' = \frac{1+ a}{2}$ and $b' = \frac{1 + b}{2}$.
\end{theorem}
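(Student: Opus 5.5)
The first inclusion is immediate: a $\StoqMA$ verifier is the special case of a $\gStoqMA$ verifier that measures a single output qubit in the Hadamard basis and accepts on $\ket{+}$. I therefore focus on $\gStoqMA_{a,b}\subseteq \StoqMA_{a',b'}$. I take a $\gStoqMA$ verifier to be a classical reversible circuit $V$ acting on the witness, ancillas in $\ket{0}$ and ancillas in $\ket{+}$, followed by a measurement of $k=\poly(n)$ output qubits, accepting iff the outcome is $\ket{+}^{\otimes k}$. Its acceptance probability is $p=\Tr(\Pi\,\rho)$ with $\Pi=\ketbra{+}^{\otimes k}$, where $\rho$ is the state after $V$. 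The goal is a $\StoqMA$ verifier whose acceptance probability is exactly $(1+p)/2$; completeness and soundness then map to $a'=(1+a)/2$ and $b'=(1+b)/2$.

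The key identity is the expansion
\[
\Pi=\prod_{i=1}^k \frac{I+X_i}{2} = \frac{1}{2^k}\sum_{S\subseteq[k]} X_S ,\qquad X_S=\prod_{i\in S}X_i .
\]
Hence $p=\mathbb{E}_S\,\Tr(X_S\rho)$ for $S$ uniform. A single-qubit Hadamard measurement of a qubit $o$ accepts with probability $(1+\langle X_o\rangle)/2$, so it suffices to make $\langle X_o\rangle$ equal to $\mathbb{E}_S\langle X_S\rangle$ by reversible classical gates.

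The construction is as follows. Append a fresh output qubit $o$ in $\ket{+}$ and $k$ "randomness" qubits $r_1,\dots,r_k$ in $\ket{+}$. After $V$, apply for each $i$ a Toffoli gate with controls $r_i$ and $o$ and target the $i$-th output qubit of $V$. Then measure $o$ in the Hadamard basis.

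To analyse this, first note that the $r_i$ are only ever used as controls and are never measured. Tracing them out in the computational basis therefore turns them into a uniformly random classical string selecting $S=\{i: r_i=1\}$, so the resulting state is a uniform mixture over $S$ of the state obtained by applying $\prod_{i\in S}\mathrm{CNOT}_{o\to i}$. Second, use the conjugation rule $\mathrm{CNOT}_{o\to i}\,X_o\,\mathrm{CNOT}_{o\to i}=X_oX_i$. This gives $\langle X_o\rangle_{\text{after}}=\langle X_o X_S\rangle_{\text{before}}=\langle X_S\rangle_\rho$, since $o$ starts in $\ket{+}$ and is in product with $\rho$. Averaging over $S$ yields an acceptance probability of $\tfrac12\bigl(1+\mathbb{E}_S\langle X_S\rangle\bigr)=\tfrac{1+p}{2}$.

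The main obstacle, and the step I would check most carefully, is the legitimacy of treating the coherent $\ket{+}$ controls as classical randomness. This reduces to the observation that gates controlled in the computational basis commute with dephasing of the control register, and that unmeasured registers may be traced out in any basis. I also need to confirm that Toffoli gates are in the allowed gate set of $\StoqMA$ verifiers, which holds because $\StoqMA$ permits classical reversible circuits with $\ket{0}$ and $\ket{+}$ ancillas.
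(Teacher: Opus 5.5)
Your argument is correct for the case you analyse, but that case is narrower than the paper's $\gStoqMA$. In the paper's definition the first measured qubit is projected onto $\ket{+}$, while each of the remaining measured qubits may be projected onto $\ket{+}$, $\ket{0}$ or $\ket{1}$. The $Z$-basis outcomes are not a side case: the later application needs them, since $\cV_1$ accepts on $\ket{+}$ for the first qubit and $\ket{1}$ for the last qubit, and $\cV_2$ accepts on $\ket{0}$ for the last qubit. Your mechanism cannot produce such projectors as it stands. Conjugating $X_o$ by CNOTs $o\to i$ only ever yields $X$-strings $X_oX_S$, and averages of $X$-strings give only $\ket{+}$-type projectors, whereas $\ketbra{0}=(I+Z)/2$ has a $Z$ component. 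So the identity $\Pi=2^{-k}\sum_S X_S$ together with your CNOT layer does not establish $\gStoqMA_{a,b}\subseteq\StoqMA_{a',b'}$ for the class in the statement.

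The missing ingredient is small. For each measured qubit $q$ that must be projected onto $\ket{0}$, add a fresh ancilla $a_q$ in $\ket{0}$ and apply a Toffoli with controls $o,q$ and target $a_q$; for a $\ket{1}$-projection, first apply an $X$ gate to $q$. This Toffoli maps $X_o$ to $\ketbra{0}_q X_o+\ketbra{1}_q X_oX_{a_q}$, and the second term has zero expectation on $\ket{0}_{a_q}$. These gates commute with your CNOT layer, so you get $\langle X_o\rangle_{\mathrm{after}}=\mathbb{E}_S\Tr\big[(\textstyle\prod_q\ketbra{0}_q)X_S\,\rho\big]$, which equals the generalized verifier's acceptance probability $p$; hence the new verifier accepts with probability $(1+p)/2$. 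The paper instead handles all cases at once. It prepares the target product state $\ket{\phi}\in\{\ket{+},\ket{0},\ket{1}\}^{\otimes\ell}$ from ancillas and runs a SWAP test between $\ket{\phi}$ and the $\ell$ measured qubits, so the mixed-state SWAP-test lemma gives acceptance $\tfrac12\big(1+\bra{\phi}\rho_\ell\ket{\phi}\big)$ directly. For the all-$\ket{+}$ case, your randomized-parity construction is a valid alternative that yields the same affine map; the SWAP test simply makes independence from the measurement basis immediate.
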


We summarize the complexity landscape of local and sparse Hamiltonians in \cref{fig:results}, in the general and in the stoquastic setting.
\begin{figure}[H]
\centering
    \begin{tikzpicture}[
        box/.style={minimum width=4cm, minimum height=2.5cm, align=center},
        general_local/.style={box, fill=red!15},
        general_sparse/.style={box, fill=red!30},
        stoquastic_local/.style={box, fill=darkcyan!15},
        stoquastic_sparse/.style={box, fill=darkcyan!30}
    ]
    
    \begin{scope}
    \clip[rounded corners=8pt] (-2,-1.25) rectangle (6,3.75);
    
    \node[general_local] at (0,2.5)
    { \textbf{General Local}\\
    $\QMA$-complete\\
    \cite{kitaevbook} };
    
    \node[general_sparse] at (4,2.5)
    { \textbf{General Sparse}\\
    $\QMA$-complete\\
    \cite{ATS03} };
    
    \node[stoquastic_local] at (0,0)
    { \textbf{Stoquastic Local}\\
    $\StoqMA$-complete\\
    \cite{BBT06} };
    
    \node[stoquastic_sparse] at (4,0)
    { \textbf{Stoquastic Sparse}\\
    $\StoqMA$-complete\\
    \emph{[this work]} };
    
    \end{scope}
    
    \node[rotate=90] at (-2.5, 2.5) {\textbf{General}};
    \node[rotate=90] at (-2.5, 0) {\textbf{Stoquastic}};
    \node at (0,-1.8) {\textbf{Local}};
    \node at (4,-1.8) {\textbf{Sparse}};
    
    \end{tikzpicture}
    \caption{Complexity classification of Hamiltonian problems under locality, sparsity, and stoquasticity constraints.}
    \label{fig:results}
\end{figure}

Now, let us discuss some direct implications of \cref{thm-intro:StoqSH}.  Ioannou et al. \cite{Ioannou} study the distinction between termwise stoquastic Hamiltonians and globally stoquastic ones. A stoquastic $O(1)$-local Hamiltonian $H$ is called \emph{globally stoquastic}, if the off-diagonal entries $H$ are nonpositive, and \emph{termwise stoquastic}, if there exists a decomposition into local stoquastic terms. Most complexity-theoretic results about stoquasti\-city (e.g. $\StoqMA$-completeness of \cite{BBT06}) focused only on the termwise case. In \cite{Ioannou} they show that for a fixed locality parameter $k$, globally stoquastic Hamiltonians are a strictly larger class of Hamiltonians, so the corresponding local Hamiltonian problem is $\StoqMA$-hard, and they show that it is also contained in $\StoqMA$ with an approach similar to \cite{BBT06}. Notice that a globally stoquastic Hamiltonian is also sparse, so our verification procedure provides an alternative proof for the following result.

\begin{corollary}
    The local Hamiltonian problem for globally stoquastic $k$-local Hamiltonians is $\StoqMA$-complete.
\end{corollary}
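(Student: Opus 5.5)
The corollary should follow by reducing the globally stoquastic $k$-local problem to $\StoqSH$ and then invoking \cref{thm-intro:StoqSH}. There are two directions. Hardness is immediate: every termwise stoquastic $k$-local Hamiltonian is in particular globally stoquastic, since a sum of matrices with nonpositive off-diagonal entries again has nonpositive off-diagonal entries. Hence the $\StoqMA$-hardness of the termwise stoquastic local Hamiltonian problem~\cite{BBT06,BDOT08} transfers directly, with the same promise gap $\beta-\alpha \geq 1/\poly(n)$. So the whole task is to show containment in $\StoqMA$.

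For containment, I will show that a globally stoquastic $k$-local instance $H=\sum_{j\in[m]}H_j$, with $k=O(1)$ and $m=\poly(n)$, is a valid $\StoqSH$ instance. The first step is sparsity. Each $H_j$ acts nontrivially on a set $S_j$ of at most $k$ qubits, so in row $x$ the entry $\bra{x}H_j\ket{y}$ can be nonzero only if $y$ agrees with $x$ outside $S_j$. Hence row $x$ of $H$ has nonzero entries only in columns $y$ obtained by modifying $x$ on some $S_j$. There are at most $2^k m = \poly(n)$ such columns, so $H$ is $d$-sparse with $d=2^km$.

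The second step is to give the efficient access that the sparse model requires. Given a row $x$ and an index $\ell\in[d]$, I write $\ell=(j,z)$ with $z\in\{0,1\}^k$. The $\ell$-th candidate column is $x$ with its $S_j$ bits replaced by $z$. The value
\[
\bra{x}H\ket{y}=\sum_{j'} \bra{x_{S_{j'}}}h_{j'}\ket{y_{S_{j'}}}\,[x_{\overline{S_{j'}}}=y_{\overline{S_{j'}}}]
\]
is a sum of $m$ terms, each computable in polynomial time from the $2^k\times 2^k$ local matrices $h_{j'}$. Different pairs $(j,z)$ can produce the same column $y$, so I remove duplicates, for instance by returning a zero or dummy entry for every occurrence of $y$ except the first. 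The entries also need polynomial bit precision, which holds because the local terms are given explicitly. Global stoquasticity says exactly that all off-diagonal entries computed this way are nonpositive. The thresholds $\alpha,\beta$ are kept unchanged.

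Applying \cref{thm-intro:StoqSH} then places the problem in $\StoqMA$. The main obstacle is checking that the row-access oracle matches the exact input model of $\StoqSH$: a reversible classical circuit that lists the nonzero entries of each row in order, with consistent duplicate handling. This is what the $1$-sparse decomposition of~\cite{BCCKS14} needs. The fix I would try first is to pad with explicit zero entries, which the sparse model allows, and to order entries by the pair $(j,z)$. The reduction is polynomial time and preserves the inverse-polynomial gap, so the corollary follows.
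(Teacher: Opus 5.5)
Your proposal is correct and follows the paper's argument. Hardness holds because termwise stoquastic Hamiltonians are a special case of globally stoquastic ones, and containment holds because a globally stoquastic $k$-local Hamiltonian with $m$ terms is a $2^k m$-sparse stoquastic Hamiltonian, so \cref{thm-intro:StoqSH} applies; your treatment of the row oracle and duplicate columns fills in details the paper leaves implicit.

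One small correction: the thresholds cannot be kept literally unchanged, because $\StoqSH$ requires $0\preceq H\preceq I$. You should first apply an affine map such as $H\mapsto (H+\Lambda I)/(2\Lambda)$, with $\Lambda\geq\sum_j\|H_j\|$ an efficiently computable power of two. This map preserves global stoquasticity, sparsity and dyadic entries, and it shrinks the promise gap only by a polynomial factor.
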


We then proceed with our contribution about the complexity of the Separable Stoquastic Sparse Hamiltonian problem ($\SepStoqSH$). In this problem, we want to decide if the {\em bipartite product state} that minimizes the energy of the Hamiltonian has value below $\alpha$ or above $\beta$.

We define $\StoqMA(k)$, a version of $\StoqMA$ with $k$ unentangled proofs, and we show that $\SepStoqSH$ is complete for $\StoqMA(2)$. While membership follows directly from our verification procedure that we developed above, the proof for hardness requires a careful adaptation of the circuit-to-Hamiltonian construction for separable sparse Hamiltonians \cite{CS12} to the stoquastic setting, together with a perturbative analysis to compensate for the lack of error amplification in $\StoqMA(2)$, as was done in \cite{BBT06}. \Cref{fig:hamiltonian_complexity} summarizes the known results on the complexity of the Separable Hamiltonian problem in the general and the stoquastic setting.

\begin{figure}[htpb]
    \centering
    \begin{tikzpicture}[
        box/.style={minimum width=4cm, minimum height=2.5cm, align=center},
        general_local/.style={box, fill=red!15},
        general_sparse/.style={box, fill=red!40},
        stoquastic_local/.style={box, fill=darkspringgreen!15},
        stoquastic_sparse/.style={box, fill=darkspringgreen!40}
    ]
    
    \begin{scope}
    \clip[rounded corners=8pt] (-2,-1.25) rectangle (6,3.75);
    
    \node[general_local] at (0,2.5)
    { \textbf{Separable Local}\\
    $\QMA$-complete\\
    \cite{CS12} };
    
    \node[general_sparse] at (4,2.5)
    { \textbf{Separable Sparse}\\
    $\QMA(2)$-complete\\
    \cite{CS12} };
    
    \node[stoquastic_local] at (0,0)
    { \textbf{Separable}\\ 
    \textbf{Stoquastic Local}\\
    (open problem) \\
    };
    
    \node[stoquastic_sparse] at (4,0)
    { \textbf{Separable} \\
    \textbf{Stoquastic Sparse}\\
    $\StoqMA(2)$-complete\\
    \emph{[this work]} };
    
    \end{scope}
    
    \node[rotate=90] at (-2.5, 2.5) {\textbf{General}};
    \node[rotate=90] at (-2.5, 0) {\textbf{Stoquastic}};
    \node at (0,-1.8) {\textbf{Local}};
    \node at (4,-1.8) {\textbf{Sparse}};
    
    \end{tikzpicture}
    \caption{Complexity classification of Hamiltonian problems with separable witness under locality, sparsity, and stoquasticity constraints.}
\label{fig:hamiltonian_complexity}
\end{figure}

Finally, we examine whether allowing multiple Merlins increases the power of stoquastic Merlin-Arthur proof systems. We show that for suitable choices of the completeness and soundness parameters, $k$ Merlins can be simulated by two Merlins. More precisely, for $k = \poly(n)$ we show that $\StoqMA_{c,s}(k) \subseteq \StoqMA_{c',s'}(2)$, for certain values of $c,s$. The proof follows the protocol of \cite{HM10} for $\QMA(k) = \QMA(2)$: the two Merlins have to send Arthur the same state of the form $\ket{\psi_1} = \ket{\psi_2} = \ket{\phi_1} \otimes \dots \otimes \ket{\phi_k}$. Then, Arthur randomly chooses between  two actions: either he runs the original verification on either $\ket{\psi_1}$ or $\ket{\psi_2}$, or performs the product test of \cite{HM10} on the two proofs. The product test consists of applying swap tests between the corresponding registers of the two proofs. Notice that this is a valid protocol for stoquastic verifiers, since, as discussed above, they admit multi-qubit measurements. Using this approach, we get the following theorem.

\begin{theorem}\label{thm-intro:stoqma(k)=stoqma(2)}(simplified)
    $\StoqMA_{c,s}(k) \subseteq \StoqMA_{c', s'}(2)$, for certain values of $c,s, c'$ and $s'$.
\end{theorem}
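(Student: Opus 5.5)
We follow the Harrow--Montanaro protocol \cite{HM10}, adapted to stoquastic verifiers using the multi-qubit measurements allowed by $\gStoqMA$. By \cref{thm-intro:gStoqMA=StoqMA} (and its natural extension to two unentangled proofs), it suffices to build a two-proof verifier in the generalized model with polynomial-size final measurement, since the conversion back to $\StoqMA(2)$ only maps $(c,s)\mapsto((1+c)/2,(1+s)/2)$.

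\textbf{Verifier.} The verifier $V'$ receives two proofs $\ket{\psi_1},\ket{\psi_2}$, each on $k$ registers of the original proof sizes. With probability $p$ it runs the original $\StoqMA(k)$ verifier $V$ on $\ket{\psi_1}$, treating the $k$ registers as the $k$ proofs. With probability $1-p$ it runs the product test: swap tests between the corresponding registers of $\ket{\psi_1}$ and $\ket{\psi_2}$, accepting if all of them pass. The swap test must be written in stoquastic form. One applies Hadamard to a control qubit, a controlled-SWAP (classical reversible), then Hadamard, and accepts on $\ket{0}$. The first Hadamard is not allowed, so we use the ancilla $\ket{+}$ directly, which is permitted in $\StoqMA$. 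The final Hadamard measurement becomes a projection of the control qubit onto $\ket{+}$. All $k$ controls together with $V$'s own $\ket{+}$-measured qubits must be projected onto $\ket{+}^{\otimes m}$ for polynomial $m$, which is exactly what $\gStoqMA$ allows.

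\textbf{Completeness.} An honest prover sends $\ket{\psi_1}=\ket{\psi_2}=\ket{\phi_1}\otimes\dots\otimes\ket{\phi_k}$. The product test then accepts with certainty, so $c'' = pc+(1-p)$.

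\textbf{Soundness.} We invoke the HM10 analysis. Let $\epsilon = 1-\max_{\text{product}}|\langle\phi_1\cdots\phi_k|\psi\rangle|^2$. The product test on two copies of $\ket{\psi}$ rejects with probability $\Omega(\epsilon)$, and a state $\epsilon$-far from product changes $V$'s acceptance by at most $O(\sqrt{\epsilon})$. This gives $s'' \le p(s+O(\sqrt\epsilon)) + (1-p)(1-\Omega(\epsilon))$. Optimizing over $\epsilon$ yields a gap of order $\Theta((c-s)^2)$ for a suitable constant $p$, which fixes the ``certain values'' of $c,s,c',s'$.

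\textbf{Main obstacle.} The hard step is that the two proofs need not be equal, while HM10 assumes the product test acts on two identical copies; their general lemma handles $\ket{\psi_1}\ne\ket{\psi_2}$ via symmetrization. A second concern arises because there is no amplification in $\StoqMA$: the constant losses here and in \cref{thm-intro:gStoqMA=StoqMA} must be tracked explicitly. To handle asymmetry, we first symmetrize by choosing at random which proof $V$ is run on. We then bound acceptance using the fact that $\mathrm{tr}(\rho_1\rho_2)$-type swap statistics control the distance to both proofs, as in HM10.
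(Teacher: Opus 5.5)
Your protocol and toolkit match the paper's. You use the Harrow--Montanaro product test versus running $\cA$ on a randomly chosen proof, swap tests built from $\ket{+}$ controls and a multi-qubit $\ket{+}$ projection inside $\gStoqMA(2)$, the averaging bound $\mathrm{PT}_k(\rho,\sigma)\le(\mathrm{PT}_k(\rho)+\mathrm{PT}_k(\sigma))/2$ for unequal proofs, the $\sqrt{\varepsilon}$ closeness lemma, and the final $(1+c)/2$ conversion. Where you genuinely differ is the weighting of the two branches, and this changes the parameter regime.

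The paper fixes the mixing probability at $1/2$. An uncapped optimization over $\varepsilon$ is then useless, because the $\sqrt{\varepsilon}$ gain beats the linear product-test penalty by a constant. So the paper caps the verification branch at $1$, places the optimum at $(\sqrt{\varepsilon_1}+\sqrt{\varepsilon_2})/2=1-s$, and uses the Soleimanifar--Wright bound and the AM--GM helper lemma to get $s'=\frac56+\frac16(s^2-2s)^2$ against $c'=\frac{1+c}{2}$. That is positive only when $1-c<\frac23(1-s)^2-\frac13(1-s)^4$, the thin near-perfect-completeness region of the figure.

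Your tunable $p$, once the optimization is written out, gives more. Use only the linear consequence $\mathrm{PT}_k(\ket{\psi_i})\le 1-\varepsilon_i/3$ of the quoted product-test bound (any $1-\Omega(\varepsilon_i)$ bound would do) and set $u_i=\sqrt{\varepsilon_i}$. Then
\begin{equation*}
s''\le 1-p(1-s)+\max_{u_1,u_2\ge 0}\sum_{i=1}^{2}\Big(\frac{p}{2}\,u_i-\frac{1-p}{6}\,u_i^2\Big)=1-p(1-s)+\frac{3p^2}{4(1-p)}.
\end{equation*}
Hence $c''-s''\ge p(c-s)-\frac{3p^2}{4(1-p)}\ge\frac{(c-s)^2}{8}$ for $p=\frac{c-s}{2}$. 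This holds for \emph{every} $c>s$, with no cap needed, which is stronger than the paper's result.

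The point to correct is your phrase ``a suitable constant $p$''. If $p$ is fixed independently of $c-s$, the slack $\frac{3p^2}{4(1-p)}$ exceeds $p(c-s)$ once $c-s<\frac{3p}{4(1-p)}$, in particular for inverse-polynomial gaps. You must take $p$ of order $c-s$: any dyadic value in $[\frac{c-s}{4},\frac{c-s}{2}]$, realized with $\ket{+}$ coins and a comparator, gives gap at least $(c-s)^2/16$. The paper's $p=1/2$ escapes this problem only through the cap, which is exactly what confines it to its restricted region.

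Also make explicit that the two branches measure different qubits. You should therefore swap fresh $\ket{+}$ ancillas, controlled on the coin, into the positions not used by the chosen branch, so that a single fixed $\gStoqMA$ projector is applied.
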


The values of $c$ and $s$ that satisfy \cref{thm-intro:stoqma(k)=stoqma(2)} can be seen in \cref{fig:stoqma(k)=stoqma(2)-values}.

\begin{figure}[H]
  \centering
    \begin{tikzpicture}
    \begin{axis}[
        width=6cm,
        height=6cm,
        xmin=0.5, xmax=1.05,
        ymin=0.5, ymax=1.05,
        xlabel={Completeness $c$},
        ylabel={Soundness $s$},
        axis lines=left,
        samples=300,
        domain=0.5:1,
    ]
    
    \addplot[
        thick,
        color=blue!80!black,
        name path=curve
    ]
    ({(2 + (x^2 - 2*x)^2)/3}, {x});
    
    \addplot[
        draw=none,
        name path=right
    ]
    ({1}, {x});
    
    \addplot[
        blue!30,
        fill opacity=0.8
    ]
    fill between[
        of=curve and right,
    ];
    
    \end{axis}
    \end{tikzpicture}
    \caption{Region in the $(c,s)$-plane where \cref{thm-intro:stoqma(k)=stoqma(2)} holds.}
    \label[figure]{fig:stoqma(k)=stoqma(2)-values}
\end{figure}

\subsection{Related work}

Stoquastic sparse Hamiltonians have already appeared implicitly in several contexts, even though the corresponding Hamiltonian problem has not been studied directly. For example, in the proof that $\StoqMA \subseteq \AM$ \cite{BDOT08}, Bravyi et al. observe that the proof does not require the Hamiltonian to be local, but works also in the case of sparsity. Regarding adiabatic evolution, \cite{Hastings2021powerofadiabatic, gilyenHastingsVazirani} exhibited a subexponential oracle separation between classical and adiabatic quantum computation using frustrated (i.e. the ground energy is nonzero) sparse stoquastic Hamiltonians. More recently, building on the fixed-node Hamiltonian construction introduced in \cite{FixedNodeHamiltonian}, Bravyi et al. \cite{Bravyi2023rapidlymixingmarkov} showed that, given suitable guiding states, one can reduce a general Hamiltonian to a frustration-free stoquastic sparse Hamiltonian and use the latter within classical algorithms.  \cite{JiangSuccinct} subsequently used this framework to prove that the Local Hamiltonian Problem with succinctly describable ground states is $\MA$-complete, even for general Hamiltonians.

\subsection{Conclusions and open questions}

In this work, we studied the complexity of the \emph{Stoquastic Sparse Hamiltonian problem} and proved that it is $\StoqMA$-complete. Our approach relies on showing that $\StoqMA$ admits certain multi-qubit measurements, which may be of independent interest. These results naturally motivate an investigation of the power of stoquastic verifiers, when the witness is restricted to be separable. To this end, we introduced $\StoqMA(k)$, the stoquastic analogue of $\QMA(k)$, and showed that $\StoqMA(2)$ is complete for the Stoquastic Sparse Hamiltonian problem, when we have the guarantee of separable witnesses. 

This places $\StoqMA(2)$ as an interesting intermediate model between single-prover stoquastic verification and general multi-prover quantum verification, and raises several questions about the structure and limitations of stoquastic proof systems, which we mention below.

\begin{itemize}
    \item \textbf{Robustness of $\StoqMA(k)$ collapse.}
    We showed that $\StoqMA(k) \subseteq \StoqMA(2)$ for certain values of completeness and soundness parameters. An important open problem would be whether this inclusion can be made more robust under improved values of parameters, or whether tighter error bounds introduce new subtleties in the stoquastic multi-prover setting.

    \item \textbf{Separable Stoquastic Local Hamiltonians.}
    An interesting problem, which we leave as an open question, is the complexity of the \emph{Separable Stoquastic Local Hamiltonian} problem. 
    Since we do not have an algorithm for the stoquastic analogue of the Consistency of Local Density Matrices problem (which is of independent interest~\cite{stoquasticconsistencyliu}), the proof of \cite{CS12} that the Separable Local Hamiltonian problem is $\QMA$-complete cannot be directly adapted to this setting.

    \item \textbf{Upper bounds on $\StoqMA(2)$.} 
     Perhaps the most intriguing question is to determine a nontrivial upper bound on $\StoqMA(2)$. A trivial bound is $\StoqMA(2) \subseteq \QMA(2)$, but it is not clear whether this containment can be strengthened.

    This question highlights an interesting tradeoff: $\StoqMA(2)$ involves two unentangled quantum provers interacting with a highly restricted, nearly classical verifier. While it is known that $\StoqMA \subseteq \AM$ \cite{BDOT08}, and that $\AM(k) = \AM(2) = \AM$, for $k = \poly(n)$~\cite{am2}, none of these techniques seem to directly extend to $\StoqMA(2) \stackrel{?}{\subseteq} \AM$.

\end{itemize}

\subsection{Concurrent work}

At the time of submission, we became aware of an independent work \cite{liuUnentangledStoquasticMerlinArthur2026} that has as central topic the study of $\StoqMA(k)$. That work develops a broader complexity-theoretic understanding of $\StoqMA(k)$, including showing that $\StoqMA(k) \subseteq \EXP$, stron\-ger parameter regimes for simulating $\StoqMA(k)$ by $\StoqMA(2)$, and some complete problems for $\StoqMA(k)$. Our results are complementary: while their work focuses on the structural and complexity-theoretic properties of unentangled stoquastic proof systems, we provide a Hamiltonian-complexity perspective by showing that the separable stoquastic sparse Hamiltonian problem is complete for $\StoqMA(2)$. This gives a natural complete problem for $\StoqMA(2)$, analogous to the role of separable sparse Hamiltonians for $\QMA(2)$ of \cite{CS12}.

\section{Preliminaries}

In this section we introduce the basic notions and tools used throughout the paper.

\subsection{ The Local Hamiltonian Problem}

The local Hamiltonian problem is a quantum analogue of classical constraint satisfaction problems and plays a central role in quantum complexity theory, serving as the canonical complete problem for $\QMA$.

\begin{definition}[$k$-local Hamiltonian] A Hamiltonian  $H$ on $n$ qubits is a Hermitian operator on $\C^{2^n}$, i.e., a complex Hermitian matrix of dimension $2^n\times 2^n$. A Hamiltonian on $n$ qubits is called \emph{$k$-local} if it can be written as $H = \sum_{i=1}^m H_i$, where each $H_i$ acts on at most $k = O(1)$ out of the $n$ qubits.
\end{definition}

\begin{definition}[$k$-Local Hamiltonian Problem ($\LH_{\alpha, \beta}$)]
    For a $k$-local Hamiltonian $H=\sum_{j=1}^m H_j$ on $n$ qubits, where each $0 \preceq H_j \preceq I$, decide whether $\lambda_{\min{}}(H)\leq \alpha$ or $\lambda_{min{}}(H) \geq \beta$, where $\beta - \alpha \geq 1/\poly(n)$.
\end{definition}

Kitaev \cite{Kit99} has shown that for $\alpha \text{ and } \beta$ such that $\beta - \alpha = 1/\poly(n)$, the $k$-local Hamiltonian Problem is $\QMA$-complete.

\subsection{Sparse Hamiltonians}

\begin{definition}[Sparse Hamiltonian]
    A Hamiltonian $H$ on $n$ qubits is said to be \emph{$d$-sparse}, where $d = \poly(n)$, if each row and column has at most $d$ nonzero elements.
\end{definition}
\begin{definition}[Sparse Hamiltonian Problem ($\SH_{\alpha, \beta}$)]
    For a $d$-sparse Hamiltonian $H$ on $n$ qubits, where  $0 \preceq H \preceq I$, decide whether $\lambda_{\min{}}(H)\leq \alpha$ or $\lambda_{min{}}(H) \geq \beta$, where $\beta - \alpha \geq 1/\poly(n)$.
\end{definition}

A $k$-local Hamiltonian $H$ is a special case of a sparse Hamiltonian, since each term acts only on $k$ qubits and is therefore $2^k$-sparse, implying that $H$ is $2^km$-sparse, where $m$ is the total number of local terms.

Furthermore, we suppose that the Hamiltonian is accessed via a pair of oracle unitaries $O_F$ and $O_H$. The $O_F$ oracle returns the location of the $i$-th nonzero entry $y_i$ in a given row $x$,
\begin{equation*}
    O_F \ket{x,i} = \ket{x, y_i},
\end{equation*}
while the $O_H$ oracle returns the entry of the Hamiltonian $H$ for any given row $x$ and column $y$,
\begin{equation*}
    O_H \ket{x,y, z} = \ket{x,y, z \oplus H_{x,y}}.
\end{equation*}

We give now a useful lemma regarding the decomposition of $d$-sparse Hamiltonians.

\begin{lemma}[Sparse decomposition, Lemma 4.4 of \cite{BCCKS14}]\label{lemma:1-sparse-decomposition}
    If $H$ is a $d$-sparse Hamiltonian, there exists a decomposition $H = \sum_{j=1}^{d^2} H_j$, where each $H_j$ is 1-sparse and a query to any $H_j$ can be simulated with $O(1)$ queries to $H$.
\end{lemma}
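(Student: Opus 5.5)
We will follow the standard edge-colouring argument of \cite{BCCKS14}. View the $d$-sparse Hamiltonian $H$ as a weighted undirected graph $G$ on vertex set $\{0,1\}^n$, with an edge $\{x,y\}$ whenever $H_{x,y}\neq 0$; diagonal entries are self-loops. Each vertex has degree at most $d$. The goal is to colour the edges with $d^2$ colours so that two edges sharing a vertex never get the same colour, and so that the colour of an edge $\{x,y\}$ is computable locally from $O(1)$ oracle queries around $x$ and $y$. We then set $H_j$ to be the restriction of $H$ to the edges of colour $j$. Each colour class is a matching (plus self-loops), so $H_j$ is $1$-sparse and Hermitian, and $\sum_j H_j = H$.

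\textbf{Colouring.} For an edge $\{x,y\}$ with $x\le y$ (lexicographically), let $i$ be the index of $y$ in the list of nonzero entries of row $x$, so $O_F\ket{x,i}=\ket{x,y}$. Let $i'$ be the index of $x$ in row $y$. Assign the colour $j=(i,i')\in[d]\times[d]$, ordered with the smaller endpoint first; this gives $d^2$ colours. Self-loops $\{x,x\}$ get colour $(i,i)$, where $i$ is the index of $x$ in its own row. To see that the colouring is proper, suppose two distinct edges of the same colour $(i,i')$ share a vertex $v$. There are two cases:
\begin{itemize}
\item If $v$ is the smaller endpoint of both edges, then both neighbours are the $i$-th nonzero entry of row $v$, so they coincide.
\item If $v$ is the larger endpoint of both edges, then both neighbours are the $i'$-th entry of row $v$, so they again coincide.
\end{itemize}
The delicate case is a vertex $v$ that is the smaller endpoint of one edge and the larger endpoint of the other. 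Here the two edges could in principle receive the same pair. We resolve this by refining the colour to the triple $(i,i',\text{bit})$ restricted to a consistent ordering, or more cleanly by following \cite{BCCKS14}'s use of ordered pairs $(i,i')$ whose meaning depends on whether $v<$ or $v>$ its neighbour. Concretely, from $v$'s perspective, colour $(i,i')$ specifies ``the $i$-th neighbour if I am smaller'' or ``the $i'$-th neighbour if I am larger''. Both candidates must then be checked for consistency, i.e., that the candidate neighbour's index of $v$ matches. If both candidates are consistent, we break the tie by a fixed rule, for example declaring one edge uncoloured for $j$ and assigning it to an extra colour. This costs at most a constant factor, absorbed by taking $O(d^2)$, or, as in \cite{BCCKS14}, shown to be impossible by the indexing conventions. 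This conflict case is the step I expect to be the main obstacle, and I would simply check the argument in \cite{BCCKS14} that the pair-labelling is injective at each vertex.

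\textbf{Simulating queries.} Given $j=(i,i')$ and a row $x$, we compute $y=O_F(x,i)$ if we test the ``$x$ smaller'' branch, or $y=O_F(x,i')$ for the ``$x$ larger'' branch. We then query $O_F$ on row $y$ at the complementary index to verify that it returns $x$, and check that the order of $x,y$ agrees with the branch. If verification succeeds, the unique neighbour of $x$ in $H_j$ is $y$, and the entry $(H_j)_{x,y}$ is obtained by one call to $O_H$. Otherwise row $x$ of $H_j$ is zero. Garbage registers are uncomputed with a second application of the same queries. Altogether this uses a constant number of queries to $O_F$ and $O_H$, which proves the query bound.
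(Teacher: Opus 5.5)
There is a genuine gap, exactly at the step you flag: colouring an edge by $(i,i')$ read from its smaller endpoint is \emph{not} a proper colouring, and no indexing convention excludes the mixed case. A $2$-sparse example already breaks it. Take $b<v<a$ whose only nonzeros are $H_{bb}$, $H_{bv}=H_{vb}$ and $H_{va}=H_{av}$, with rows listed in increasing column order, so row $b=(b,v)$, row $v=(b,a)$ and row $a=(v)$. The edge $\{b,v\}$ gets colour $(2,1)$: $v$ is 2nd in row $b$ and $b$ is 1st in row $v$. The edge $\{v,a\}$ also gets $(2,1)$: $a$ is 2nd in row $v$ and $v$ is 1st in row $a$. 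So row $v$ of $H_{(2,1)}$ has two nonzeros.

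This is not an isolated accident. Inside a colour class $(i,i')$, every vertex has at most one edge to a larger neighbour (its $i$-th) and at most one to a smaller neighbour (its $i'$-th). So each class is a disjoint union of monotone paths $v_0<v_1<\dots<v_m$. On a long path with sorted rows, every edge except the first receives the same colour $(2,1)$. Your fallback of moving one edge of each conflicting pair to an extra colour therefore does not produce a matching; it just recreates a long path in the extra class. Splitting such paths into boundedly many matchings by a rule that inspects $O(1)$ neighbours runs into the Linial-type $\log^* n$ lower bound, which is why earlier decompositions needed $O(\log^* n)$ queries. So this cannot be repaired ``at the cost of a constant factor''. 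The appeal to indexing conventions in \cite{BCCKS14} also fails, because the conflict genuinely occurs.

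The missing idea, which is what makes the $d^2$-colour, $O(1)$-query bound work in \cite{BCCKS14}, is to make the bipartition explicit by passing to $\sigma_x\otimes H$. Its graph is the bipartite double cover: left vertices $(0,x)$, right vertices $(1,y)$, and an edge iff $H_{xy}\neq 0$. Diagonal entries simply become edges between $(0,x)$ and $(1,x)$, so self-loops need no special rule. Colour the edge $\{(0,x),(1,y)\}$ by $(i,i')$, where $y$ is the $i$-th nonzero of row $x$ and $x$ is the $i'$-th nonzero of row $y$.

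Each vertex knows its side from the extra bit, so the colour is read unambiguously. The vertex $(0,x)$ finds its colour-$(i,i')$ partner as $y=O_F(x,i)$ and checks that the $i'$-th entry of row $y$ is $x$; $(1,y)$ does the symmetric computation. Two equally coloured edges at a left vertex share $i$, and at a right vertex share $i'$, so they coincide. This gives exactly $d^2$ Hermitian $1$-sparse terms. Your query-simulation paragraph then goes through with the comparison of $x$ and $y$ replaced by the side bit.

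Be aware that this decomposes $\sigma_x\otimes H$, not $H$. That is enough for simulation, since $e^{-i(\sigma_x\otimes H)t}\ket{+}\ket{\psi}=\ket{+}e^{-iHt}\ket{\psi}$. The induced pieces $A_{(i,i')}$ of $H$ itself have at most one nonzero per row and column, but they satisfy $A_{(i,i')}^\dagger=A_{(i',i)}$, so they are Hermitian only when $i=i'$.
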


\begin{remark}\label{remark:each_element_appears_once}
An important property of this decomposition is that each nonzero matrix element of the Hamiltonian is uniquely assigned to one of the terms $H_j$ of the decomposition. This will ensure that if $H$ is stoquastic, then each term $H_j$ will be stoquastic as well.
\end{remark}

For a given $1$-sparse term $H_j$ with real coefficients and a given row $x$, we denote $x\in H_j$, if there exists a nonzero element at row $x\in H_j$, and $x\not\in H_j$ otherwise.

\begin{remark}[Oracle Simulation] \label{remark:oracle_simulation}

If $x \in H_j$, we can access the location $f(x)$ and the value $H^{(j)}_{x,f(x)}$ of the unique nonzero element by querying an oracle $O_{f, H_j}$, which can be simulated with one query from $O_f$ and $O_H$:
\begin{align*}
    \ket{x, 0, 0} &\xmapsto{O_f} \ket{x, f(x), 0}\\
                  &\xmapsto{O_H} \ket*{x, f(x), H_{x,f(x)}}.
\end{align*}    
In the case of 1-sparse Hamiltonians with real elements, it holds that $f(f(x)) = x$. This means that starting from $\ket{x}$ and with some ancilla qubits, one can produce the state $\ket{f(x)}$ as follows:
\begin{align*}
    \ket{x, 0, 0} 
        &\xmapsto{O_{f, H_j}} \ket*{x, f(x), H_{x,f(x)}} \\
        &\xmapsto{\text{SWAP}} \ket*{f(x), x, H_{x,f(x)}}\\
        &\xmapsto{O_{f,H_j}} \ket*{f(x), x \oplus f(f(x)), H_{x,f(x)} \oplus H_{x,f(x)}} = \ket{f(x), 0,0},
\end{align*}
and finally, we can ignore the ancilla register.

In order to also treat the case of applying $O_{f,H_j}$ while $x\not\in H_j$, we can add an extra flag qubit that becomes $0$ in that case and then we can condition the application of SWAP and the second oracle query on that qubit. So, in this case, we have that $f(x) = x$.
\end{remark}

\subsection{StoqMA and Stoquastic Hamiltonians}

We recall now the definition of the complexity class $\StoqMA$, introduced in \cite{BBT06}, as well as the definition of Stoquastic Hamiltonians. We begin with the definition of stoquastic verifiers.

\begin{definition}[Stoquastic Verifier, \cite{BBT06}] \label{def:stoquastic-verifier}
    A \emph{stoquastic verifier} is a tuple $\cV = (n, n_w, n_0, n_+, U)$, where:
    \begin{itemize}
        \item $n$ is the number of input bits,
        \item $n_w$ is the number of input witness qubits, 
        \item $n_0$ is the number of input ancillas in the state $\ket{0}$, 
        \item $n_+$ is the number of input ancillas in the state $\ket{+}$, and
        \item $U$ is a quantum circuit on $n+n_w + n_0 + n_+$ qubits with X, CNOT, and Toffoli gates.
    \end{itemize}
    The acceptance probability of a stoquastic verifier $\cV$  on input string $x\in\Sigma^n$ and witness state $\ket{\psi}\in (\C^2)^{\otimes n_w}$ is defined as $\Pr[\cV \text{ accepts } (x, \ket{\psi})] = \ev{U^\dagger \Pi_\out U}{\psi_\inp}$, where $\ket{\psi_\inp} = \ket{x}\otimes\ket{\psi}\otimes\ket{0}^{\otimes n_0} \otimes \ket{+}^{\otimes n_+}$ is the initial state and $\Pi_\out = \ketbra{+}_1\otimes I_\text{else}$ projects the first qubit onto the state $\ket{+}$.
\end{definition}

\begin{definition}[$\StoqMA$, \cite{BBT06}]\label{def:StoqMA}
    A promise problem $\cL = (\cL_\yes, \cL_\no) \text{ is in } \StoqMA_{a,b}$ if there exists a uniform family of stoquastic verifiers, such that for any input $x\in\cL$ and for any fixed number of input bits $n$, there exist efficiently computable functions $a(n)$ and $b(n)$ and the corresponding verifier $\cV$ obeys:

    \begin{itemize}
        \item[] \emph{\textbf{Completeness}}: If $x \in \cL_\yes$, then there is some $\ket{\psi}$ s.t. $\Pr[\cV \text{ accepts } (x, \ket{\psi})] \geq a(n)$.
        \item[] \emph{\textbf{Soundness}}: If $x \in \cL_\no$, then for any $\ket{\psi}$, $\Pr[\cV \text{ accepts } (x, \ket{\psi})] \leq b(n)$.
        \end{itemize}
\end{definition}

One can observe that, due to the fact the input state is non-negative, the soundness parameter is always at least $\frac{1}{2}$ (see Remark 2.4 in \cite{AGL25}).

It is not known currently how to reduce the error probability for $\StoqMA$, and \cite{AGL25} has shown that error reduction implies $\StoqMA = \MA$, as well as that $\StoqMA$ with perfect completeness is contained in $\MA$. Therefore, one has to be careful on the parameters of completeness and soundness and how they translate on the hardness and containment proofs.

\begin{definition}[Stoquastic Hamiltonian]
    A $k$-local Hamiltonian $H = \sum_{j=1}^m H_j$ is called stoquastic in the computational basis, if for all $j\in[m]$ the off-diagonal elements of $H_j$ in this basis are nonpositive. Similarly, a sparse Hamiltonian $H$ is stoquastic in the computational basis if all the off-diagonal elements of $H$ are nonpositive.
\end{definition}
\begin{definition}[$k$-Local Stoquastic Hamiltonian Problem ($\StoqLH_{\alpha, \beta}$)]
    For a $k$-local Hamiltonian $H=\sum_{j=1}^m H_j$ on $n$ qubits, where each $H_j$ is stoquastic and $\lVert  H_j \rVert \leq \poly(n)$ for all $j\in[m]$, decide whether $\lambda_{\min{}}(H)\leq \alpha$ or $\lambda_{min{}}(H) \geq \beta$, where $\beta - \alpha \geq 1/\poly(n)$.
\end{definition}

In this work, we will focus on the stoquastic Sparse Hamiltonian, which we formulate below.

\begin{definition}[Stoquastic Sparse Hamiltonian Problem ($\StoqSH_{\alpha, \beta}$)]
    For a $d$-sparse stoquastic Hamiltonian $H$ on $n$ qubits, where $0\preceq H \preceq I$, decide whether $\lambda_{\min{}}(H)\leq \alpha$ or $\lambda_{min{}}(H) \geq \beta$, where $\beta - \alpha \geq 1/\poly(n)$.
\end{definition}

\subsection{Other useful lemmas}

We conclude this section by stating two standard lemmas that will be used throughout the paper.

 \begin{lemma}[SWAP test for mixed states, based on \cite{swaptest}]\label{lemma:swap-test}
     Let $\rho, \sigma$ be two $n$-qubit quantum states. Consider the swap test applied to $\rho\otimes \sigma$. The probability that the test accepts is
     \begin{equation*}
         \frac{1 + \Tr(\rho\sigma)}{2}.
     \end{equation*}
 \end{lemma}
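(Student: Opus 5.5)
The plan is the standard two-step computation: first treat pure product inputs using the circuit that defines the swap test, then pass to mixed states by linearity. Let $F$ denote the swap operator on $(\C^2)^{\otimes n}\otimes(\C^2)^{\otimes n}$, so $F\ket{a}\ket{b}=\ket{b}\ket{a}$. The swap test prepares a control qubit in $\ket{0}$, applies a Hadamard, applies controlled-$F$ onto the two $n$-qubit registers, applies a Hadamard to the control, and accepts on outcome $0$. Equivalently, it prepares the control in $\ket{+}$, applies controlled-$F$, and projects the control onto $\ket{+}$; this is the form relevant to stoquastic verifiers.

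For an arbitrary joint input state $\tau$ on the two registers (not necessarily a product), I would show directly that
\[
\Pr[\text{accept}] = \Tr\!\Big(\tfrac{I+F}{2}\,\tau\Big).
\]
For pure $\tau=\ketbra{\phi}$ this is a short calculation. The state after controlled-$F$ is $\frac{1}{\sqrt2}(\ket{0}\ket{\phi}+\ket{1}F\ket{\phi})$. Projecting the control onto $\ket{+}$ leaves the unnormalized vector $\frac12(\ket{\phi}+F\ket{\phi})$. Its squared norm is $\frac14(2+2\,\mathrm{Re}\ev{F}{\phi})$, and $\mathrm{Re}\ev{F}{\phi}=\ev{F}{\phi}$ because $F$ is Hermitian. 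So the acceptance probability is $\ev{\frac{I+F}{2}}{\phi}$, which says the measurement realizes the POVM element $\frac{I+F}{2}$ (the symmetric-subspace projector). The general mixed case then follows from linearity of $\tau\mapsto\Tr(\Pi\tau)$ over a convex decomposition of $\tau$.

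Next I would substitute $\tau=\rho\otimes\sigma$ and invoke the swap trick $\Tr(F(\rho\otimes\sigma))=\Tr(\rho\sigma)$. To prove it, expand in the computational basis, $\rho=\sum\rho_{ab}\ketbra{a}{b}$ and $\sigma=\sum\sigma_{cd}\ketbra{c}{d}$. Then
\[
\Tr(F(\rho\otimes\sigma))=\sum_{a,b,c,d}\rho_{ab}\sigma_{cd}\,\braket{b,d}{c,a},
\]
and the inner product forces $b=c$ and $d=a$, leaving $\sum_{a,b}\rho_{ab}\sigma_{ba}=\Tr(\rho\sigma)$. Combining the two facts gives $\frac{1+\Tr(\rho\sigma)}{2}$, as claimed.

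There is no real obstacle here. The only point needing care is a convention: the sign inside $(1\pm\Tr(\rho\sigma))/2$ depends on accepting on the control outcome $\ket{+}$ (equivalently $0$ after the final Hadamard). I would fix that convention explicitly so the statement matches how the test is later used with $\Pi_\out=\ketbra{+}$.
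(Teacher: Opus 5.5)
Your argument is correct. Writing the test as a measurement with accepting POVM element $\frac{I+F}{2}$, and then applying the swap trick $\Tr(F(\rho\otimes\sigma))=\Tr(\rho\sigma)$, is the standard derivation. The paper gives no proof of its own here and simply cites Buhrman et al., but it uses exactly your formalism later in the proof of \cref{lemma:prob-acc-pt}, where the acceptance operator is written as $\bigotimes_i \frac{I+\mathrm{SWAP}_i}{2}$ and traced against $\rho\otimes\sigma$.
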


\begin{lemma}[Lemma 22 of \cite{HM10}]\label{lemma:close-overlap}
Let $\ket{\psi}, \ket{\phi}$ be pure states such that $|\braket{\psi}{\phi}|^2 = 1-\varepsilon$, and let $\Pi$ a projector. Then, $|\ev{\Pi}{\psi} - \ev{\Pi}{\phi}| \leq \sqrt{\varepsilon}$.
\end{lemma}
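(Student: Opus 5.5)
The plan is to rewrite the difference of expectation values as a single trace, $\ev{\Pi}{\psi} - \ev{\Pi}{\phi} = \Tr(\Pi A)$ with $A := \ketbra{\psi} - \ketbra{\phi}$. Then it suffices to bound $|\Tr(\Pi A)|$ using the spectrum of $A$ and the fact that $0 \preceq \Pi \preceq I$. This is the usual trace-distance argument: for pure states the trace distance is exactly $\sqrt{1-|\braket{\psi}{\phi}|^2}$, and a projector can distinguish the two states by at most that amount.

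\textbf{Step 1: spectrum of $A$.} $A$ is Hermitian and its range lies in $\mathrm{span}\{\ket{\psi},\ket{\phi}\}$, so it has at most two nonzero eigenvalues $\lambda_1,\lambda_2$. Since $\Tr A = 1 - 1 = 0$, we have $\lambda_1 = -\lambda_2 =: \lambda \geq 0$. Next, compute
\[
\Tr A^2 = \Tr\bigl(\ketbra{\psi}\bigr) + \Tr\bigl(\ketbra{\phi}\bigr) - 2|\braket{\psi}{\phi}|^2 = 2 - 2(1-\varepsilon) = 2\varepsilon .
\]
Since also $\Tr A^2 = 2\lambda^2$, it follows that $\lambda = \sqrt{\varepsilon}$. (If $\varepsilon = 0$, then $A = 0$ and the claim is trivial.)

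\textbf{Step 2: bounding $\Tr(\Pi A)$.} Write $A = \sqrt{\varepsilon}\,(\ketbra{u} - \ketbra{v})$ for orthonormal eigenvectors $\ket{u},\ket{v}$. Then
\[
\Tr(\Pi A) = \sqrt{\varepsilon}\,\bigl(\ev{\Pi}{u} - \ev{\Pi}{v}\bigr).
\]
Because $\Pi$ is a projector, both $\ev{\Pi}{u}$ and $\ev{\Pi}{v}$ lie in $[0,1]$. Hence their difference lies in $[-1,1]$, which gives $|\Tr(\Pi A)| \leq \sqrt{\varepsilon}$ as claimed.

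There is no real obstacle here. The only point needing care is the eigenvalue computation in Step 1, specifically the use of $\Tr A^2$ together with $\Tr A = 0$. Note also that the argument applies to any operator $0 \preceq \Pi \preceq I$, not only projectors.
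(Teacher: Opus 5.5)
Your proof is correct. The paper gives no proof of its own and simply cites Lemma 22 of HM10. Your argument is the standard trace-distance proof of this fact: the pure-state trace distance $\frac{1}{2}\|\ketbra{\psi}-\ketbra{\phi}\|_1$ equals $\sqrt{1-|\braket{\psi}{\phi}|^2}=\sqrt{\varepsilon}$, and any $0 \preceq \Pi \preceq I$ satisfies $|\Tr(\Pi(\rho-\sigma))| \leq \frac{1}{2}\|\rho-\sigma\|_1$. Your eigenvalue computation carries out both steps explicitly, so there is nothing to add.
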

\section{The Stoquastic Sparse Hamiltonian Problem is StoqMA-complete}\label{section:StoqSH}

\noindent In this section we prove the main theorem.

\begin{theorem}\label{thm:StoqSH}
    $\StoqSH_{\alpha, \beta} \text{ is } \StoqMA_{a, b}\text{-complete}$, where $a-b\geq 1/\poly(n)$.
\end{theorem}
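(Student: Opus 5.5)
Hardness is immediate: every instance of $\StoqLH_{\alpha,\beta}$ (after rescaling so that $0\preceq H\preceq I$) is a $2^k m$-sparse stoquastic Hamiltonian whose oracles $O_F,O_H$ can be implemented efficiently from the list of local terms. The $\StoqMA$-hardness of \cite{BBT06,BDOT08} therefore transfers directly. The real work is containment.

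For containment, I would first apply \cref{lemma:1-sparse-decomposition} to write $H=\sum_{j=1}^{d^2}H_j$. By \cref{remark:each_element_appears_once}, each $H_j$ is $1$-sparse and stoquastic. I would also split each $H_j=D_j+O_j$ into its diagonal and off-diagonal parts. The verifier picks $j$ (and diagonal versus off-diagonal) uniformly at random, which can be simulated stoquastically using $\ket{+}$ ancillas. This reduces the task to a test whose acceptance probability is an affine function $p_j=c_0-c_1\ev{H_j}{\psi}$ with fixed constants $c_1>0$. Averaging gives acceptance probability $c_0-\frac{c_1}{d^2}\ev{H}{\psi}$, so the gap $\beta-\alpha$ becomes an inverse-polynomial completeness/soundness gap.

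\emph{Off-diagonal part.} For a $1$-sparse real $H_j$ with pairing $f$ (an involution, with $f(x)=x$ off the support), \cref{remark:oracle_simulation} gives a reversible classical map $\ket{x}\mapsto\ket{f(x)}$. I would perform a Hadamard-like test. A weight register is prepared from $\ket{+}$ ancillas, and a controlled comparison with the value register $|H_{x,f(x)}|$ (written in binary, bounded by $1$) produces an amplitude proportional to $\sqrt{|H_{x,f(x)}|}$ or a threshold flag. This flag controls whether $x\mapsto f(x)$ is applied, conditioned on an ancilla in $\ket{+}$. Measuring that ancilla, together with the flag and the register, in the $\ket{+}$ basis yields a projection onto a state such as $\frac{1}{\sqrt2}(\ket{x}+\ket{f(x)})$. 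The acceptance probability then contains the cross term $\sum_x \psi_x\psi_{f(x)}|H_{x,f(x)}| = -\ev{O_j}{\psi}$, and this has the correct sign because $O_j\le 0$ entrywise. Since the witness can be taken non-negative in the yes case (Perron–Frobenius), and in the no case any state only helps soundness via $\ev{H}{\psi}\ge\beta$, the sign conventions work out.

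\emph{Diagonal part.} I would compute $H_{x,x}$ into a register and compare it with a uniformly random threshold drawn from $\ket{+}$ ancillas. Leaving the flag as the output gives a classical acceptance probability of $1-\ev{D_j}{\psi}$, up to discretization error $2^{-\poly}$.

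The key step is that these tests require a $\ket{+}$-projection on many qubits (the weight ancillas, the control, the flag), not just on the first qubit. So I would prove, or invoke, the generalization that a verifier measuring $\ketbra{+}^{\otimes m}$ on polynomially many qubits is in $\StoqMA$ with parameters $(\frac{1+a}{2},\frac{1+b}{2})$, as in \cref{thm-intro:gStoqMA=StoqMA}. The idea is to compute the AND of the $m$ qubits reversibly. However, reversibility in the Hadamard basis is delicate, since a Toffoli in the $X$-basis is a different gate. I would instead use the identity that $\ketbra{+}^{\otimes m}$ equals the average of controlled tests, or rely on the trick that a multi-controlled phase conjugated by $H^{\otimes m}$ is reversible classical after adding one $\ket{+}$ ancilla and one $\ket{-}$-like check. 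I expect this multi-qubit measurement step, together with tracking the exact affine constants so that the halved gap remains $1/\poly(n)$, to be the main obstacle.
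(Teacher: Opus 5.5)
Your hardness direction matches the paper. The containment argument, however, has a gap exactly where the paper locates the core difficulty: the off-diagonal Hadamard-like test does not have acceptance probability affine in $\ev{O_j}{\psi}$.

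Write $h_x=|H_{x,f(x)}|$, $D_h=\mathrm{diag}(h_x)$, and $F$ for the permutation $x\mapsto f(x)$. Then $O_j=-D_hF$, and $D_h$ commutes with $F$. Projecting the $\ket{+}$ control onto $\ket{+}$ after controlled-$F$ acts on the witness as $\frac12(I+F)$.
\begin{itemize}
\item With the flag projected onto $\ket{1}$, the effective POVM element is $\frac14(I+F)D_h(I+F)=\frac12 D_h-\frac12 O_j$, so the acceptance probability is $\frac12\sum_x\psi_x^2h_x+\frac12\ev{-O_j}{\psi}$.
\item In your variant (flag and ancilla jointly controlling $F$, flag projected onto $\ket{+}$), one gets $\frac12-\frac14\ev{D_h+O_j}{\psi}$.
\item If the weight register is also projected onto $\ket{+}^{\otimes\ell}$, the cross term acquires the factor $h_x(2-h_x)$, which is not even linear in the entries.
\end{itemize}
In every case the witness-dependent term $\sum_x\psi_x^2|H_{x,f(x)}|$ survives. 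It is not a function of $\ev{H}{\psi}$. Your diagonal test thresholds $H_{x,x}$ rather than $|H_{x,f(x)}|$, so it does not cancel this term. Hence $p_j=c_0-c_1\ev{H_j}{\psi}$ is asserted but not achieved.

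The missing idea is the paper's second verifier $\cV_2$. It builds the same flag only on the control-$1$ branch and accepts when the flag is $0$. Its acceptance probability is $1-\frac12\sum_x\psi_x^2|H_{x,f(x)}|$ per term. Running $\cV_1$ and $\cV_2$ with probability $\frac12$ each cancels the term exactly and gives $\frac12-\frac{1}{4d^2}\ev{H}{\psi}$. (The paper also avoids a separate diagonal test by passing to $(H-I)/2$. All entries then lie in $[-1,0]$, and diagonal entries are just the $f(x)=x$ case of the same test. This also removes the need to rebalance the diagonal test's slope against the off-diagonal one, which a uniform choice between them would not do.)

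Second, the multi-qubit measurement lemma, which you identify as the main obstacle, is left open, and the routes you sketch do not work. Phase kickback needs a $\ket{-}$ ancilla and Hadamard conjugation. Neither is available to a verifier whose gates are permutation matrices acting on non-negative ancillas.

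You also need more than $\ketbra{+}^{\otimes m}$ projections. ``Leaving the flag as the output'' of your diagonal test gives acceptance exactly $\frac12$. The flag is a classical function of the retained registers, so its reduced state is diagonal, while the output qubit is measured in the $X$ basis. Computational-basis projections must therefore be supported as well.

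The paper handles both issues at once. It prepares the target product state $\ket{\phi}\in\{\ket{0},\ket{1},\ket{+}\}^{\otimes\ell}$ from ancillas. It then runs a SWAP test between $\ket{\phi}$ and the $\ell$ measured qubits, using a $\ket{+}$ control, Fredkin gates, and an $X$-basis measurement of the control. By the mixed-state SWAP-test formula, this accepts with probability $\frac12+\frac12\Pr[\cV_g\text{ accepts}]$. That is the source of the parameters $(\frac{1+a}{2},\frac{1+b}{2})$.
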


The $k$-local stoquastic Hamiltonian problem is already known to be $\StoqMA$-hard \cite{BBT06}, and every $k$-local Hamiltonian is a special case of a sparse Hamiltonian. Hence, $\StoqSH$ is $\StoqMA$-hard. To complete the proof, it remains to show that $\StoqSH \in \StoqMA$.  For that, we introduce in \cref{sec:genstoqmav} a generalization of $\StoqMA$ verifiers, that does not change the computational power of the class up to an inverse polynomial decrease of the completeness/soundness gap. In \cref{sec:protocol}, we present the generalized $\StoqMA$ verifier for the sparse stoquastic Hamiltonians problem. We then analyze its correctness in \cref{sec:analysis}. 

\paragraph{Normalization and Representation Assumptions.} Throughout this section, we  assume that the stoquastic sparse Hamiltonians have real entries in $[-1, 0]$. We assume this without loss of generality: given the oracle that outputs the entries of a stoquastic $H$ such that $0 \preceq H\preceq I$, we can provide an oracle that outputs the entries of $\frac{H - I}{2}$, which only incurs a shift in the energy (for both positive and negative instances of the stoquastic sparse Hamiltonian problem). 

We also assume that the absolute value of each entry can be computed exactly as $|H_{x,y}| = k/2^\ell$, where $k \in \{0, \dots, 2^\ell\}$ and $\ell =\poly(n)$ is the maximum number of bit precision and that for a given $x$, we have access to an oracle $O_{H_j}$ that returns the numerator $k$ of the entry $|H_{x,f(x)}^{(j)}|$. 

\paragraph{Non-negative states assumptions.} In the analysis, we only consider non-negative states. This can be done without loss of generality because the groundspace of stoquastic Hamiltonians is spanned by nonnegative states. Moreover, there is always a nonnegative state that achieves the maximum acceptance probability of $\StoqMA$ verifiers.

\subsection{Generalized StoqMA verifiers} \label{sec:genstoqmav}

We show now that the computational power of $\StoqMA$ does not change if we allow the verifier to measure a polynomial number of qubits, at the cost of a change of the completeness and soundness parameters. We start with some definitions about the modifications in the measurement operator.

\begin{definition}[Generalized stoquastic verifier] \label{def:g-stoquastic verifier}
    A generalized stoquastic verifier is a stoquastic verifier, for which the acceptance probability on input string $x\in\Sigma^n$ and witness state $\ket{\psi}$ is given by  $\Pr[\cV \text{ accepts } (x, \ket{\psi})] = \ev{U^\dagger \Pi_\out U}{\psi_\inp}$, where $\Pi_\out$ projects $\ell = \poly(n)$ qubits, the first onto the $\ket{+}$ state and the rest onto some of the $\ket{+}, \ket{0} \text{ or } \ket{1}$ states.
\end{definition}

\begin{remark} \label{remark:choice_of_measurement}
The choice of measuring the first $\ell$ qubits happens without loss of generality, since if one wanted to measure any other $\ell$ qubits, they could apply SWAP gates accordingly. Notice that SWAP gates are a legal choice of gates, since each SWAP gate can be simulated with 3 CNOT gates \cite{NC10}.
\end{remark}

\begin{definition}[$\gStoqMA$]\label{def:gStoqMA}
    $\gStoqMA_{a,b}$ is the same as $\StoqMA_{a,b}$, except the verifier is a generalized stoquastic verifier for every input size $n$ with completeness parameter $a(n)$ and soundness parameter $b(n)$.
\end{definition}

We will show that with the modification in the measurement operator we remain in $\StoqMA$, but with increased completeness and soundness parameters.

\begin{theorem}\label{thm:gStoqMA=StoqMA}
    $\StoqMA_{a,b} \subseteq \gStoqMA_{a,b} \subseteq \StoqMA_{a',b'}$, where $ a' = \frac{1+ a}{2}$ and $b' = \frac{1 + b}{2}$.
\end{theorem}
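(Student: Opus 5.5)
The first inclusion $\StoqMA_{a,b} \subseteq \gStoqMA_{a,b}$ is immediate: a stoquastic verifier is a generalized stoquastic verifier with $\ell = 1$. For the second inclusion, I will take a generalized verifier $\cV$ with circuit $U$ and output projector $\Pi_\out$. I will build an ordinary stoquastic verifier $\cV'$ that runs $U$ and then a Hadamard-test-like gadget made only of X, CNOT and Toffoli gates, measuring a single control qubit in the $\ket{+}$ basis. The goal is an acceptance probability of exactly $\frac{1+\ev{U^\dagger\Pi_\out U}{\psi_\inp}}{2}$. This gives $a' = \frac{1+a}{2}$ and $b' = \frac{1+b}{2}$ directly, since the map is monotone.

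\textbf{Step 1 (normal form).} Using \cref{remark:choice_of_measurement} and X gates, I rewrite $\Pi_\out$ as
\[
\Pi_\out = \ketbra{+}^{\otimes m}_T \otimes \ketbra{0}^{\otimes k}_Z \otimes I .
\]
Here $T$ is the register of $m$ qubits to be projected onto $\ket{+}$ (including the original first qubit). $Z$ is the register of $k$ qubits to be projected onto $\ket{0}$; any qubit that was to be projected onto $\ket{1}$ is first flipped with an X gate. Next I compute, into a fresh $\ket{0}$ flag qubit $f$, the OR of the bits in $Z$. This uses Toffolis and $\ket{0}$ ancillas, with X gates to implement OR via De Morgan. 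In the computational basis of $Z$, $f=0$ exactly when $Z$ is all zeros.

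\textbf{Step 2 (gadget).} I add a control qubit $c$ in $\ket{+}$, a register $A$ of $m$ fresh $\ket{+}$ ancillas, and one fresh $\ket{0}$ ancilla $e$. The gadget applies the permutation $W$, controlled on $c$, defined as follows.
\begin{itemize}
\item When $f=0$, $W$ swaps $T$ with $A$. This controlled-swap on $c$ and $\bar f$ is built from Fredkin gates, which decompose into Toffolis together with X gates for the negated control.
\item When $f=1$, $W$ flips $e$, using a Toffoli on $c,f,e$.
\end{itemize}
Finally $c$ is measured in the $\ket{+}$ basis. The standard Hadamard-test calculation gives
\[
\Pr[\text{accept}] = \frac{1+\operatorname{Re}\bra{\phi}W\ket{\phi}}{2},
\]
where $\ket{\phi}$ is the full state after $U$ and the flag computation, together with $A$ and $e$.

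\textbf{Step 3 (evaluating $\bra{\phi}W\ket{\phi}$).} This is the step needing care. $W$ acts block-diagonally with respect to the computational basis of $f$ and never touches $Z$, $f$ or the flag ancillas. Its expectation therefore splits into two branches.
\begin{itemize}
\item On the $f=1$ branch, $e$ is mapped from $\ket{0}$ to $\ket{1}$, so that branch contributes $0$.
\item On the $f=0$ branch, which is the branch where $Z$ is all zeros, $W$ is the swap between $T$ and $\ket{+}^{\otimes m}_A$. The swap expectation $\Tr(\rho\sigma)$, as in the derivation behind \cref{lemma:swap-test}, equals the overlap of the $T$ part with $\ketbra{+}^{\otimes m}$, weighted by the probability of that branch.
\end{itemize}
Combining the two branches yields $\bra{\phi}W\ket{\phi} = \ev{U^\dagger \Pi_\out U}{\psi_\inp}$, which is real and nonnegative. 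I expect the main obstacle to be checking carefully that the flag computation and its garbage ancillas introduce no cross terms. This follows because these registers are classical functions of $Z$ that $W$ leaves invariant, so $\bra{\phi}W\ket{\phi}$ reduces exactly to $\Tr\big[(\ketbra{+}^{\otimes m}\otimes\ketbra{0}^{\otimes k})\,\rho_{TZ}\big]$.
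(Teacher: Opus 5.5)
Your proof is correct and is essentially the paper's argument. The paper also appends a SWAP test (a $\ket{+}$ control, controlled swaps built from Toffoli/CNOT gates, and a final $\ket{+}$ measurement) between the $\ell$ measured qubits and an ancilla-prepared product state $\ket{\phi}$, reading off $\frac{1+\Tr(\ketbra{\phi}\rho_\ell)}{2}$ from \cref{lemma:swap-test}. The only difference is that the paper puts the $\ket{0}/\ket{1}$-projected qubits into the same swap against prepared $\ket{0}/\ket{1}$ ancillas, where swapping a computational-basis ancilla with a qubit holding the wrong bit makes that branch orthogonal exactly as flipping your $e$ does, so your OR-flag construction is valid but unnecessary.
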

\begin{proof}
    The inclusion $\StoqMA_{a,b} \subseteq \gStoqMA_{a,b}$ is immediate, since every StoqMA verifier is also a $\gStoqMA$ verifier.
    
    For the second inclusion, let $\cV_g$ be a generalized stoquastic verifier that has as input the state $\ket{\psi}$ (the witness state along the ancilla), and suppose that $\cV_g$ is measuring with respect to some fixed state, say $\ket{\phi} = \bigotimes_{i=1}^\ell \ket{\phi_i} \in \C^{\otimes \ell}$, where $\ket{\phi_i} \in \{\ket{+}, \ket{0}, \ket{1}\}$. 
    
    We can now define a stoquastic verifier $\cV$ that constructs the state $\ket{\phi}$ using $\ket{0},\ket{1}$ and $\ket{+}$ ancilla qubits, runs the circuit $\cV_g$ on $\ket{\psi}$, and then performs a SWAP test between the $\ell$ output qubits measured by $\cV_g$ and the prepared register $\ket{\phi}$. The verifer $\cV$ accepts if and only if the SWAP test accepts.
    
    Let $\ket{\Psi} = \cV_g \ket{\psi}$. The first $\ell$ qubits of $\ket{\Psi}$ need not be in a pure state, since they may be entangled with the remaining qubits. We therefore denote their reduced density matrix by 
    \begin{equation*}
        \rho_\ell = \Tr_\text{else}(\ketbra{\Psi}).
    \end{equation*}
    Then, the acceptance probability of $\cV_g$ can be written as follows:
    \begin{equation*}
        \Pr[\cV_g \text{ accepts } \ket{\psi}] = \bra{\Psi} (\ketbra{\phi} \otimes I_\text{else}) \ket{\Psi} = \Tr(\ketbra{\phi}\rho_\ell).
    \end{equation*}
    Furthermore, the acceptance probability of $\cV$ is the acceptance probability of the SWAP test applied to $\rho_\ell$ and $\ketbra{\phi}$, so from \cref{lemma:swap-test} we get
    \begin{equation*}
        \Pr[\cV \text{ accepts } \ket{\psi}] = \frac{1 + \Tr(\ketbra{\phi}\rho_\ell)}{2}.
    \end{equation*}
    Hence,
    \begin{equation*}
        \Pr[\cV\text{ accepts } \ket{\psi}] = \frac{1}{2} + \frac{1}{2}\Pr[\cV_g \text{ accepts } \ket{\psi}].
    \end{equation*}
\end{proof}

\begin{remark}
    Notice that the result above holds for a polynomial number of qubits, which will enable us to perform multi-qubit measurements with stoquastic verifiers in the verification procedure of the next section.
\end{remark}

\subsection{The verification procedure}\label{sec:protocol}
We now define the verifier $\cV$ that, with equal probability, runs one of the two veri\-fication procedures $\cV_1$ and $\cV_2$ below.

\begin{procedure}
    Consider a stoquastic verifier $\cV_1$ that has access to the oracles $O_f, O_{H_j}$ and receives a witness state $\ket{\psi} = \sum_x \alpha_x \ket{x}$.
    \begin{enumerate}
        \item Pick $j \in \{1, \dots, d^2\}$ uniformly at random.
        \item Start from the state $\ket{+}\ket{\psi}\ket{+}^{\otimes \ell}\ket{0}^{\otimes \ell}\ket{0}$.
        \item Use the oracle $O_{H_j}$ to construct the state $\sum_{x} \alpha_x \ket{+}\ket{x}\ket{c_x}$, \text{ where }:
        \begin{equation}
             \ket{c_x} = \sqrt{1 - |H^{(j)}_{x,f(x)}|} \ket{\chi'_x} \ket{0} + \sqrt{|H^{(j)}_{x,f(x)}|}\ket*{\chi_x}\ket{1},
        \end{equation}
        for some orthogonal states $\ket{\chi_x}$ and $\ket{\chi_x'}$ defined in \cref{eq:chi_x}.
        \item Use the oracle $O_{f, H_j}$ to the state $\ket{\psi}$, controlled by the first qubit. This creates the state
        \begin{align}
           & \frac{1}{\sqrt{2}}\sum_{x}\alpha_x \ket{0}\ket{x}\ket{c_x} + 
            \frac{1}{\sqrt{2}}\sum_{x}\alpha_x \ket{1}\ket{f(x)}\ket{c_x}\\
            =& \frac{1}{\sqrt{2}}\sum_{x}\alpha_x \ket{0}\ket{x}\ket{c_x} + 
            \frac{1}{\sqrt{2}}\sum_{x}\alpha_{f(x)} \ket{1}\ket{x}\ket{c_x}.
        \end{align}
        \item Perform measurement of the first qubit in the $X$ basis and of the last qubit in the $Z$ basis, and accept if the outcomes are $\ket{+}$ and $\ket{1}$ respectively.
    \end{enumerate}
\end{procedure}

\begin{procedure}
    Consider a stoquastic verifier $\cV_2$.
    \begin{enumerate}
        \item Pick $j \in \{1, \dots, d^2\}$ uniformly at random.
        \item Start from state $\ket{+}\ket{\psi}\ket{+}^{\otimes \ell} \ket{0}^{\otimes \ell}\ket{0}$.
        \item Construct the state
            \begin{equation}
                \frac{1}{\sqrt{2}}\sum_{x}\alpha_x\ket{0}\ket{x} \ket{+}^{\otimes \ell}\ket{0}^{\otimes \ell} \ket{0} + \frac{1}{\sqrt{2}}\sum_{x}\alpha_x\ket{1}\ket{x}\ket{c_x}. 
            \end{equation}
         \item Perform measurement of the last qubit in the $Z$ basis and accept if it is $0$. 
    \end{enumerate}
\end{procedure}

\subsection{Analysis of the verification procedure} \label{sec:analysis}

We analyze now the acceptance probability of $\cV_1$ and $\cV_2$. 

\begin{lemma} \label{lemma:verifier1_StoqSH_in_StoqMA}
    The probability that $\cV_1$ accepts the witness state $\ket{\psi} = \sum_{x} \alpha_x \ket{x}$ is    
    \begin{equation*}
        \Pr[\cV_1 \text{ accepts } \ket{\psi}] = \frac{1}{2d^2}\sum_{x}\alpha_x^2 |H_{x, f(x)}| + \frac{1}{2d^2}\ev{-H}{\psi}.
    \end{equation*}
\end{lemma}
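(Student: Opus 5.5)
Fix $j$ first and compute the acceptance probability conditioned on that choice; averaging over the $d^2$ choices then gives the prefactor $1/d^2$. Write $h_x := |H^{(j)}_{x,f(x)}|$, with the convention $h_x=0$ and $f(x)=x$ when $x\notin H_j$ (\cref{remark:oracle_simulation}). Since $H$ is stoquastic with entries in $[-1,0]$, each $H^{(j)}_{x,f(x)}$ equals $-h_x$ (\cref{remark:each_element_appears_once}). By the non-negative states assumption, all $\alpha_x\ge 0$ are real.

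The acceptance projector is $\ketbra{+}\otimes I\otimes\ketbra{1}_{\text{last}}$. First apply the projection $\ketbra{1}$ on the last qubit. The state from step 4 becomes
\[
\frac{1}{\sqrt2}\sum_x \ket{x}\otimes\bigl(\alpha_x\ket{0}+\alpha_{f(x)}\ket{1}\bigr)\sqrt{h_x}\ket{\chi_x}\ket{1},
\]
where I have reordered registers for readability. I am relying on the fact that the ancilla $\ket{c_x}$ was computed from $x$ before the controlled map. It is then relabeled consistently, because $f$ is an involution and $h_x=h_{f(x)}$ (the matrix is symmetric). The point that needs care is that after the controlled application of $O_{f,H_j}$, the $\ket{1}$ branch carries $\ket{c_{f^{-1}(x)}}=\ket{c_{f(x)}}$. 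So I must check that $\ket{c_{f(x)}}=\ket{c_x}$. This holds because $\ket{c_x}$ depends only on the value $h_x$, and the states $\ket{\chi_x}$ of \cref{eq:chi_x} are built from the numerator $k$ of $h_x$, which is symmetric. Granting this, the $\ket{+}$ projection on the first qubit gives the amplitude $\frac{1}{2}(\alpha_x+\alpha_{f(x)})\sqrt{h_x}$ on $\ket{x}\ket{\chi_x}\ket{1}$. The terms for distinct $x$ are orthogonal because the $\ket{x}$ register differs.

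Summing squared norms, and using $\braket{\chi_x}=1$, gives
\[
\Pr[\text{accept}\mid j]=\frac14\sum_x h_x\bigl(\alpha_x^2+\alpha_{f(x)}^2+2\alpha_x\alpha_{f(x)}\bigr).
\]
Reindex $x\mapsto f(x)$, which is a bijection preserving $h$. Then $\sum_x h_x\alpha_{f(x)}^2=\sum_x h_x\alpha_x^2$. Also $\sum_x h_x\alpha_x\alpha_{f(x)}=-\sum_x\alpha_x\alpha_{f(x)}H^{(j)}_{x,f(x)}=\ev{-H_j}{\psi}$, since $H_j$ is 1-sparse with real entries. This holds including diagonal entries, where $f(x)=x$. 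Hence
\[
\Pr[\text{accept}\mid j]=\tfrac12\sum_x\alpha_x^2h_x+\tfrac12\ev{-H_j}{\psi}.
\]
Averaging over $j$ and using $H=\sum_jH_j$ from \cref{lemma:1-sparse-decomposition} yields the claim. The first term is read as $\sum_j\sum_x\alpha_x^2|H^{(j)}_{x,f(x)}|$, which is what the statement's shorthand $|H_{x,f(x)}|$ denotes.

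The main obstacle is the bookkeeping in step 4: making sure the ancilla register is genuinely symmetric under $x\leftrightarrow f(x)$, so that interference between the two branches occurs. If $\ket{c_x}$ retained any $x$-dependent garbage beyond $h_x$, the cross term would vanish. I would verify this against the definition of $\ket{\chi_x}$, and the diagonal case $f(x)=x$ separately.
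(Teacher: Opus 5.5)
Your proposal is correct and follows the paper's proof essentially step for step. You fix $j$ and use $f(f(x))=x$ together with $\ket{c_x}=\ket{c_{f(x)}}$ (which holds because $\ket{c_x}$ depends only on the numerator $k$ of $|H^{(j)}_{x,f(x)}|$) to reindex the controlled branch, project onto $\ket{+}$ and $\ket{1}$ to obtain $\frac14\sum_x(\alpha_x+\alpha_{f(x)})^2|H^{(j)}_{x,f(x)}|$, expand via the bijection $x\mapsto f(x)$, and average over $j$. Your reading of the first term as $\sum_j\sum_x\alpha_x^2|H^{(j)}_{x,f(x)}|$ is exactly what the paper's shorthand $\sum_x\alpha_x^2|H_{x,f(x)}|$ means, and your explicit check of the diagonal case $f(x)=x$ adds care the paper leaves implicit.
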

    
\begin{proof} 
At first, we are going to restrict ourselves to a circuit for a single 1-sparse term $H_j$ of the decomposition of \cref{lemma:1-sparse-decomposition}.  

The stoquastic verifier $\cV_1$ receives $\ket{\psi}$ and prepares the state $\ket{+}\ket{\psi}\ket{+}^{\otimes \ell}\ket{0}^{\otimes \ell}\ket{0}$. For any fixed $x$ of the witness, due to the assumptions we have made for the entries of $H$, if $x\in H_j$, suppose that $|H^{(j)}_{x,f(x)}| = k/2^\ell$ for some $k \in \{1, \dots, 2^\ell\}$. Notice that
\begin{equation*}
        \ket{x}\ket{+}^{\otimes \ell} \ket{0}^{\otimes \ell}\ket{0}
        = \frac{1}{2^{\ell/2}} \sum_{y\in\{0,1\}^\ell} \ket{x}\ket{y}\ket{0}^{\otimes \ell}\ket{0},
    \end{equation*}
    so, by applying the oracle $O_{H_j}$, our state becomes:
    \begin{equation*}
        \frac{1}{2^{\ell/2}} \sum_{y\in\{0,1\}^\ell} \ket{x}\ket{y}\ket{k}\ket{0}.
    \end{equation*}
Using a classical reversible comparator, we can construct the state
    \begin{align*}
        &\frac{1}{2^{\ell/2}} \sum_{y \geq k} \ket{x}\ket{y}\ket{k}\ket{0} + \frac{1}{2^{\ell/2}} \sum_{0\leq y < k} \ket{x}\ket{y}\ket{k}\ket{1},
    \end{align*}
    which can be rewritten as
    \begin{equation*}
        \sqrt{1-\frac{k}{2^\ell}}\ket{x}\ket{\chi_x'}\ket{0} + \sqrt{\frac{k}{2^\ell}}\ket{x}\ket{\chi_x}\ket{1},
    \end{equation*}
    where we define
    \begin{equation}\label{eq:chi_x}
        \ket{\chi_x} = \frac{1}{\sqrt{k}}\sum_{0 \leq y < k} \ket{y}\ket{k} \text{ and } \ket{\chi_x'} = \frac{1}{\sqrt{2^\ell - k}}\sum_{y \geq k} \ket{y}\ket{k}. 
    \end{equation}
    So, we can apply this procedure coherently to get the state
    \begin{equation*}
        \sum_{x} \alpha_x \ket{+}\ket{x}\ket{c_x}.
\end{equation*}
Using the oracle $O_{f, H_j}$ conditioned on the first qubit and the procedure described in \cref{remark:oracle_simulation}, the state becomes
\begin{equation*}
    \frac{1}{\sqrt{2}}\sum_x \alpha_x \ket{0} \ket{x} \ket{c_x} + \frac{1}{\sqrt{2}} \sum_x \alpha_x \ket{1}\ket{f(x)}\ket{c_x}.
\end{equation*}
Due to the fact that $H_j$ is 1-sparse, it holds that $f(f(x)) = x$ and there exists a one-to-one correspondence between each $x$ and $f(x)$. Also, since $|H_{x, f(x)}| = |H_{f(x), x}|$, we have that $\ket{c_x} = \ket*{c_{f(x)}}$. Thus, the state above, which we denote as $\ket{\Phi}$, can be rewritten as
\begin{align*}
    \ket{\Phi} 
    &= \frac{1}{\sqrt{2}}\sum_x \alpha_x \ket{0} \ket{x} \ket{c_x} + \frac{1}{\sqrt{2}} \sum_x \alpha_{f(x)} \ket{1}\ket{x}\ket{c_x} .
\end{align*}

From \cref{thm:gStoqMA=StoqMA}, the verifier $\cV_1$ can measure the first qubit with respect to $\ket{+}$ and the final qubit with respect to $\ket{1}$, so the projector operator is $\Pi_\acc = \ketbra{+} \otimes I_\text{else} \otimes \ketbra{1}_\final$. We can observe that 
\begin{equation*}
    \bra{c_x} (I_{\text{else}} \otimes \ketbra{1}_\final) \ket{c_x} = |H_{x,f(x)}^{(j)}|
\end{equation*}
and the contribution of the first qubit for a fixed $x$ is equal to
\begin{equation*}
    \bigg\|\bra*{+}\frac{\big(\alpha_x\ket{0} + \alpha_x\ket{1}\big)}{\sqrt{2}}\bigg\|^2 = \frac{1}{4}\big(\alpha_x + \alpha_{f(x)}\big)^2,
\end{equation*}
so we get 
\begin{equation*}
    \ev{\Pi_\acc}{\Phi} = \frac{1}{4}\sum_{x\in H_j} \big(\alpha_x + \alpha_{f(x)}\big)^2\, |H_{x,f(x)}|.
\end{equation*}

Notice that we ended up only with the terms of $x\in H_j$, since if $x \not \in H_j$, $\ket{c_x}$ is at the state $\ket{0}$, and thus they do not contribute to the measurement.

Thus, the acceptance probability is
\begin{align*}
    \Pr[\cV_1 \text{ accepts } \ket{\psi} \mid H_j] 
    &= \ev{\Pi_\acc}{\Phi}\\
    &= \frac{1}{4}\sum_{x\in H_j}(\alpha_x^2  + \alpha_{f(x)}^2) |H_{x,f(x)}| + \frac{1}{2}\sum_{x\in H_j}\alpha_x\alpha_{f(x)} |H_{x,f(x)}| \\
    & = \frac{1}{2}\sum_{x\in H_j}\alpha_x^2 |H_{x,f(x)}| + \frac{1}{2}\ev{-H_j}{\psi}.
\end{align*}
where in the second equality we use the one-to-one correspondence between $x$ and $f(x)$.

In the general case, the verifier $\cV_1$ chooses an integer $j$ from $1$ to $d^2$ uniformly at random. Thus, by averaging over all $j$, the total probability of $\cV_1$ accepting is 
\begin{align*}
    \Pr[\cV_1 \text{ accepts } \ket{\psi}] 
        &= \sum_{j=1}^{d^2} \frac{1}{d^2}\Pr[\cV_1 \text{ accepts} \ket{\psi} \mid H_j]\\
        &= \frac{1}{2d^2} \sum_{j=1}^{d^2} \sum_{x\in H_j}\alpha_x^2 |H_{x,f(x)}|
        + \frac{1}{2d^2}\sum_{j=1}^{d^2} \mel{\psi}{-H_j}{\psi}\\
        &= \frac{1}{2d^2} \sum_{x} \alpha_x^2 |H_{x,f(x)}|+ \frac{1}{2d^2}\ev{-H}{\psi},
\end{align*}
where in the last step we used the fact that each entry of the Hamiltonian appears only in one term of the decomposition (see  \cref{remark:each_element_appears_once}).
\end{proof}

\begin{lemma}\label{lemma:verifier2_StoqSH_in_StoqMA}
    The probability that $\cV_2$ accepts the witness state $\ket{\psi} = \sum_{x} \alpha_x \ket{x}$ is    
    \begin{equation*}
        \Pr[\cV_2 \text{ accepts } \ket{\psi}] = 1 - \frac{1}{2d^2}\sum_x \alpha_x^2 |H_{x,f(x)}|.
    \end{equation*}
\end{lemma}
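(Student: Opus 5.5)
We follow the template of the proof of \cref{lemma:verifier1_StoqSH_in_StoqMA}: fix a single 1-sparse term $H_j$ from \cref{lemma:1-sparse-decomposition}, compute the conditional acceptance probability, and then average over the uniformly random choice of $j$. For a fixed $j$ and a fixed basis string $x$, we reuse the coherent comparator construction from that proof. This prepares $\ket{c_x} = \sqrt{1-|H^{(j)}_{x,f(x)}|}\ket{\chi'_x}\ket{0} + \sqrt{|H^{(j)}_{x,f(x)}|}\ket{\chi_x}\ket{1}$, with $\ket{c_x}$ equal to the untouched ancilla (last qubit $\ket{0}$) when $x\notin H_j$. Here the construction is applied controlled on the first qubit being $\ket{1}$; a controlled oracle call and a controlled comparator are still reversible classical circuits (Toffoli-type). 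The resulting state is
\begin{equation*}
\ket{\Phi_j} = \frac{1}{\sqrt{2}}\sum_x \alpha_x \ket{0}\ket{x}\ket{+}^{\otimes \ell}\ket{0}^{\otimes\ell}\ket{0} + \frac{1}{\sqrt{2}}\sum_x \alpha_x\ket{1}\ket{x}\ket{c_x}.
\end{equation*}

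Next we compute $\ev{\Pi_\acc}{\Phi_j}$ with $\Pi_\acc = I_{\text{else}}\otimes\ketbra{0}_\final$. This is a single computational-basis measurement, so it is allowed by \cref{thm:gStoqMA=StoqMA} (and even directly by an $X$ gate followed by the standard stoquastic measurement rewriting). The two branches are orthogonal because of the first qubit, and within each branch the terms with different $x$ are orthogonal because of the $\ket{x}$ register. Hence there are no cross terms, and the probability is a sum of squared norms. The $\ket{0}$ branch contributes $\frac{1}{2}\sum_x\alpha_x^2 = \frac12$, since its last qubit is $\ket{0}$. The $\ket{1}$ branch contributes $\frac12\sum_x \alpha_x^2(1-|H^{(j)}_{x,f(x)}|)$, where the entries for $x\notin H_j$ are zero. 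Using normalization, this gives $\Pr[\cV_2\text{ accepts}\mid H_j] = 1 - \frac12\sum_{x\in H_j}\alpha_x^2|H_{x,f(x)}|$.

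Finally, averaging over $j$ with weight $1/d^2$ gives $1 - \frac{1}{2d^2}\sum_j\sum_{x\in H_j}\alpha_x^2|H^{(j)}_{x,f(x)}|$. By \cref{remark:each_element_appears_once}, the double sum is exactly the quantity written $\sum_x\alpha_x^2|H_{x,f(x)}|$ in \cref{lemma:verifier1_StoqSH_in_StoqMA}, with the same convention: it is summed over the pairs $(x,f(x))$ of all terms, which is consistent with that lemma so that the two contributions cancel in $\cV$. We expect no real obstacle here. The only care points are checking that controlling the state preparation on the first qubit keeps the verifier stoquastic, and that rows $x\notin H_j$ leave the flag at $\ket{0}$, which holds because of the flag-qubit convention of \cref{remark:oracle_simulation}.
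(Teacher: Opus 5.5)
Your proposal is correct and follows essentially the same route as the paper: fix one 1-sparse term $H_j$, prepare the $\ket{c_x}$ register controlled on the first qubit, read off $\frac{1}{2} + \frac{1}{2}\sum_x \alpha_x^2\bigl(1-|H^{(j)}_{x,f(x)}|\bigr)$ from the projection onto the final qubit being $\ket{0}$, then average over $j$. Your explicit statement that $\sum_x \alpha_x^2 |H_{x,f(x)}|$ means the sum over all pairs $(x,f(x))$ across all terms, matching the convention in the $\cV_1$ lemma, makes precise what the paper leaves implicit.
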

\begin{proof}
 Again, we are going to start by restricting ourselves to the case of a $1$-sparse term $H_j$. 
    
    The stoquastic verifier $\cV_2$  receives the witness $\ket{\psi}$ and prepares the state $$\ket{+}\ket{\psi}\ket{+}^{\otimes \ell} \ket{0}^{\otimes \ell} \ket{0}.$$
    Conditioned on the first qubit, we apply the same procedure as in \cref{lemma:verifier1_StoqSH_in_StoqMA} to construct $\ket{c_x}$ and we get the state
\begin{equation*}
    \ket{\Phi'} = \frac{1}{\sqrt{2}}\sum_x \alpha_x \ket{0} \ket{x} \ket{+}^{\otimes \ell} \ket{0}^{\otimes \ell} \ket{0} + \frac{1}{\sqrt{2}} \sum_x \alpha_x \ket{1}\ket{x}\ket{c_x}.
\end{equation*}

Now, we are going to measure the last qubit, so the measurement is with respect to the operator $\Pi_\acc = I_\text{else} \otimes \ketbra{0}_\text{final}$. We can observe that
    \begin{equation*}
        \Pi_\acc \ket{\Phi'} 
        = \frac{1}{\sqrt{2}}\sum_x \alpha_x \ket{0}\ket{x}\ket{+}^{\otimes \ell} \ket{0}^{\otimes \ell} \ket{0}
        + \frac{1}{\sqrt{2}} \sum_{x} \alpha_x \sqrt{1-|H^{(j)}_{x,f(x)}|}\ket{1}\ket{x}\ket{\chi_x'}\ket{0}.
    \end{equation*}
    So, the acceptance probability for the $H_j$ term will be
    \begin{align*}
        \Pr[\cV_2 \text{ accepts } \ket{\psi} \mid H_j] 
            &= \ev{\Pi_\text{acc}}{\Phi'} \\
            &= \frac{1}{2}\sum_x \alpha_x^2 + \frac{1}{2}\sum_{x} \alpha_x^2 (1- |H_{x,f(x)}|) \\
            &= \sum_x \alpha_x^2 - \frac{1}{2}\sum_x \alpha_x^2 |H^{(j)}_{x,f(x)}|.
    \end{align*}

    Notice that  $\sum_x \alpha^2_x = 1$ and that the second term of RHS contains only terms of $x\in H_j$, so the above probability can be rewritten as:
    \begin{equation*}
        \Pr[\cV_2 \text{ accepts } \ket{\psi} \mid H_j] =
            1 - \frac{1}{2} \sum_{x\in H_j} \alpha_x^2 |H_{x,f(x)}|.
    \end{equation*}
    
In the general case, the verifier $\cV_2$ chooses an integer $j$ from $1$ to $d^2$ uniformly at random. Thus, by averaging over all $j$, the total probability of $\cV_2$ accepting is 
\begin{align*}
    \Pr[\cV_2 \text{ accepts } \ket{\psi}] 
        &= \sum_{j=1}^{d^2} \frac{1}{d^2}\Pr[\cV_2 \text{ accepts} \ket{\psi} \mid H_j]\\
        &= \frac{1}{d^2} \sum_{j=1}^{d^2} \Big(1 - \frac{1}{2}\sum_{x\in H_j} \alpha_x^2 |H_{x,f(x)}|\Big)\\
        &= 1 - \frac{1}{2d^2} \sum_{x} \alpha_x^2 |H_{x,f(x)}|,
\end{align*}
    which concludes the proof.
\end{proof}
Now, we go on to put everything together as follows.

\begin{theorem}\label{thm:StoqSH_in_StoqMA}
    $\StoqSH_{\alpha, \beta} \in \StoqMA_{a, b}$, where $a = \frac{3}{4} + \frac{|\alpha|}{8d^2}$ and $b = \frac{3}{4} + \frac{|\beta|}{8d^2}$.
\end{theorem}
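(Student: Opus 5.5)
The verifier $\cV$ runs $\cV_1$ or $\cV_2$ with probability $1/2$ each. I will combine \cref{lemma:verifier1_StoqSH_in_StoqMA} and \cref{lemma:verifier2_StoqSH_in_StoqMA}, note that the diagonal-type terms $\sum_x \alpha_x^2|H_{x,f(x)}|$ cancel, and obtain an acceptance probability that is affine in the energy. Concretely,
\begin{equation*}
\Pr[\cV \text{ accepts } \ket{\psi}] = \frac{1}{2}\Big(\frac{1}{2d^2}\sum_x\alpha_x^2|H_{x,f(x)}| + \frac{1}{2d^2}\ev{-H}{\psi}\Big) + \frac{1}{2}\Big(1-\frac{1}{2d^2}\sum_x\alpha_x^2|H_{x,f(x)}|\Big) = \frac{1}{2} + \frac{1}{4d^2}\ev{-H}{\psi}.
\end{equation*}
Under the normalization of this section (entries in $[-1,0]$, so that after the shift the relevant energies are nonpositive), $\ev{-H}{\psi} = |\ev{H}{\psi}|$. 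Hence minimizing the energy is the same as maximizing the acceptance probability.

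Both $\cV_1$ and $\cV_2$ are generalized stoquastic verifiers. $\cV_1$ projects the first qubit onto $\ket{+}$ and the final qubit onto $\ket{1}$; $\cV_2$ projects the final qubit onto $\ket{0}$. They use only $X$, CNOT and Toffoli gates: the oracle queries, the reversible comparator, and the controlled application of $O_{f,H_j}$ from \cref{remark:oracle_simulation} are all classical reversible. The random choice of $j$ and the choice between $\cV_1$ and $\cV_2$ are made by coherently controlling on $\ket{+}$ ancillas. To make this a single verifier, I will unify the measurements. For $\cV_2$, the first qubit is not measured, so I will insert a CNOT-type trick or a dummy $\ket{+}$ ancilla swapped into position $1$; this turns its projector into ``$\ket{+}$ on one qubit and $\ket{0}$ on the last'' without changing the acceptance probability. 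The branch selector then decides which register is routed into the measured positions. By \cref{thm:gStoqMA=StoqMA}, the combined verifier is in $\gStoqMA$ with completeness $a_g$ and soundness $b_g$, and it converts to a $\StoqMA$ verifier with parameters $(1+a_g)/2$ and $(1+b_g)/2$.

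Next come completeness and soundness. In the yes case, take the nonnegative ground state; it exists because $H$ is stoquastic, which justifies the nonnegative-amplitude assumption used in the lemmas. Then $\ev{-H}{\psi}\ge|\alpha|$, giving $a_g = \frac12 + \frac{|\alpha|}{4d^2}$. In the no case, every nonnegative state has $\ev{-H}{\psi}\le|\beta|$. The maximum acceptance is attained by a nonnegative state, so the bound holds for all witnesses and $b_g = \frac12+\frac{|\beta|}{4d^2}$. Applying $(1+\cdot)/2$ yields exactly $a=\frac34+\frac{|\alpha|}{8d^2}$ and $b=\frac34+\frac{|\beta|}{8d^2}$, and the gap is $\frac{|\alpha|-|\beta|}{8d^2}\ge 1/\poly(n)$, since $d=\poly(n)$ and $\beta-\alpha\ge1/\poly$ carries over to the shifted instance.

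I expect the main obstacle to be the justification for restricting to nonnegative witnesses in the soundness case. The lemma formulas were derived with real amplitudes, for instance in writing $(\alpha_x+\alpha_{f(x)})^2$. I would argue that the overall accept operator $U^\dagger\Pi U$ sandwiched with the fixed ancillas has nonnegative matrix elements in the computational basis. By a Perron–Frobenius argument, replacing $\alpha_x$ with $|\alpha_x|$ can then only increase acceptance, so the maximum over all states equals the maximum over nonnegative states. This matches the paper's stated assumption. A secondary point is the sign bookkeeping in the shift $H\mapsto (H-I)/2$, which I will state explicitly so that $|\alpha|,|\beta|$ are the right quantities.
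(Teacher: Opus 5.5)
Your proposal is correct and is essentially the paper's proof: average \cref{lemma:verifier1_StoqSH_in_StoqMA} and \cref{lemma:verifier2_StoqSH_in_StoqMA} so that the $\sum_x \alpha_x^2|H_{x,f(x)}|$ terms cancel, read off $a'=\frac12+\frac{|\alpha|}{4d^2}$ and $b'=\frac12+\frac{|\beta|}{4d^2}$, and apply \cref{thm:gStoqMA=StoqMA}. Your extra bookkeeping makes explicit what the paper leaves implicit. When merging the two projectors, also flip the last qubit with a CNOT from the branch selector, since the targets are $\ket{1}$ versus $\ket{0}$. The nonnegative-witness reduction is not really needed: the combined acceptance operator is real symmetric and, by the lemma computations (valid for any real amplitudes), equals $\frac12 I-\frac{1}{4d^2}H$, so the soundness bound holds for every witness directly.
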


\begin{proof}
    First we will show that $\StoqSH_{\alpha, \beta} \in \gStoqMA_{a', b'}$ for some values of $a', b'$. The overall verifier $\cV$ accepts with probability 
    \begin{equation*}
        \Pr[\cV \text{ accepts } \ket{\psi}] = \frac{1}{2}\Pr[\cV_1 \text{ accepts } \ket{\psi}] +  \frac{1}{2}\Pr[\cV_2 \text{ accepts } \ket{\psi}].
    \end{equation*}
    From \cref{lemma:verifier1_StoqSH_in_StoqMA} and \cref{lemma:verifier2_StoqSH_in_StoqMA} we have that the total acceptance probability is 

    \begin{equation*}
        \Pr[\cV \text{ accepts } \ket{\psi}] = \frac{1}{2} -  \frac{1}{4d^2}\ev{H}{\psi}.
    \end{equation*}
    For completeness, the promise is that $\lambda_{\min{}}(H) \leq \alpha$, so there exists some state for which the verifier $\cV$ accepts with probability at least $a' = \frac{1}{2} + \frac{|\alpha|}{4d^2}$. Similarly, for soundness we have that  $\lambda_{\min{}}(H) \geq \beta$, so for all quantum states, the verifier $\cV$ accepts with probability at most $b' = \frac{1}{2} + \frac{|\beta|}{4d^2}$. It is easy to see that the gap between the completeness and soundness parameters remains at least $1/\poly(n)$. Finally, from \cref{thm:gStoqMA=StoqMA}, we have that $\StoqSH_{\alpha, \beta} \in \StoqMA_{a, b}$, where $a = \frac{1+a'}{2}$ and $b = \frac{1+b'}{2}$.
\end{proof}

\section{Separable Stoquastic Sparse Hamiltonians and StoqMA with multiple provers}

In this section, we consider the stoquastic version of the Separable Sparse Hamiltonian problem and we prove that is complete for $\StoqMA(2)$.
In \cref{sec:stoqma(k)}, we introduce a new complexity class called $\StoqMA(k)$, a version of $\StoqMA$ with $k$ unentangled proofs, and in \cref{sec:sepstoqsh} we present the problem of Separable Sparse Hamiltonians and the proof that this problem is $\StoqMA(2)$-complete. Then, in \cref{sec:stoqma(k)=stoqma(2)}, we show that in certain cases $k$ Merlins can be simulated by two Merlins in the stoquastic setting.

\subsection{StoqMA with unentangled proofs}\label{sec:stoqma(k)}

Let us first define the complexity class $\StoqMA(k)$.

\begin{definition}\label{def:StoqMA(k)} A promise problem $\cL = (\cL_\yes, \cL_\no)$ is in $\StoqMA_{a, b}(k)$ if there exists a uniform family of stoquastic verifiers, such that for any input $x\in\cL$ and for any fixed number of input bits $n$, there exist efficiently computable functions $a(n)$, $b(n)$ and $f_1, \dots, f_k$, and the corresponding verifier $\cV$ obeys:
    \begin{itemize}
        \item[] \emph{\textbf{Completeness}}: If $x \in \cL_\yes$, then there exist $k$ witnesses $\ket{\psi_1}, \dots, \ket{\psi_k}$, where each witness $\ket{\psi_i}$ consists of $f_i(n)$ qubits such that $$\Pr[\cV \text{ accepts } \ket{x}\otimes\ket{\psi_1}\otimes\dots\otimes\ket{\psi_k}] \geq a(n).$$
        \item[] \emph{\textbf{Soundness}}: If $x \in \cL_\no$, then for all $k$ witnesses $\ket{\psi_1}, \dots, \ket{\psi_k}$, where each witness $\ket{\psi_i}$ consists of $f_i(n)$ qubits, $$\Pr[\cV \text{ accepts } \ket{x}\otimes\ket{\psi_1}\otimes\dots\otimes\ket{\psi_k}] \leq b(n).$$
        \end{itemize}
\end{definition}

We can easily see that $\StoqMA(k)$ shares some properties with $\StoqMA$. First, one can easily see that there is a non-negative state that maximizes the acceptance probability. And secondly, one can see that the soundness is always greater or equal than $\frac{1}{2}$.

We can also define $\gStoqMA(k)$ as follows.
\begin{definition}[$\gStoqMA(k)$]\label{def:gStoqMA(k)}
    $\gStoqMA_{a,b}(k)$ is the same as $\StoqMA_{a,b}(k)$, except the verifier is a generalized stoquastic verifier for every input size $n$ with completeness parameter $a(n)$ and soundness parameter $b(n)$.
\end{definition}

\begin{corollary}\label{cor:gstoqMA(k)_in_stoqMA(2)}
Notice that \cref{thm:gStoqMA=StoqMA} continues to hold for $k$ Merlins as well, so we have that $\gStoqMA_{c,s}(k) \subseteq \StoqMA_{c', s'}(k)$, for $c' = \frac{1+c}{2}$ and $s' = \frac{1+s}{2}$.
\end{corollary}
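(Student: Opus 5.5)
The plan is to repeat the proof of \cref{thm:gStoqMA=StoqMA} verbatim, with the single witness $\ket{\psi}$ replaced by the product witness $\ket{\psi_1}\otimes\dots\otimes\ket{\psi_k}$. The key observation is that the SWAP-test simulation of a multi-qubit measurement is a transformation of the verifier alone. It only adds ancillas, gates, and a final single-qubit $\ket{+}$ measurement, and never touches how the proofs are supplied. Hence the unentanglement promise on the witnesses is neither used nor broken.

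Concretely, let $\cV_g$ be a generalized stoquastic verifier for a problem in $\gStoqMA_{c,s}(k)$, whose accepting projector is $\ketbra{\phi}$ on $\ell=\poly(n)$ output qubits, with $\ket{\phi}=\bigotimes_i\ket{\phi_i}$ and $\ket{\phi_i}\in\{\ket{+},\ket{0},\ket{1}\}$. Following \cref{remark:choice_of_measurement}, we may assume these are the first $\ell$ qubits. The new verifier $\cV$ proceeds as follows. It prepares $\ket{\phi}$ from $\ket{0}$ and $\ket{+}$ ancillas, using an $X$ gate to obtain each $\ket{1}$. It then runs the circuit of $\cV_g$ on $\ket{x}\otimes\ket{\psi_1}\otimes\dots\otimes\ket{\psi_k}$ together with $\cV_g$'s ancillas. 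Finally, it performs a SWAP test between the $\ell$ measured qubits and the prepared register. The SWAP test uses a $\ket{+}$ control, controlled-SWAPs (Fredkin gates, built from Toffoli and CNOT), and a final measurement of the control in the $X$ basis, which is exactly the stoquastic measurement $\ketbra{+}_1$.

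For the analysis, let $\rho_\ell$ be the reduced state on the $\ell$ output qubits after $\cV_g$'s circuit acts on the product input. Then
\[
\Pr[\cV_g \text{ accepts}] = \Tr(\ketbra{\phi}\rho_\ell),
\]
and by \cref{lemma:swap-test},
\[
\Pr[\cV \text{ accepts}] = \frac{1}{2} + \frac{1}{2}\Pr[\cV_g \text{ accepts}].
\]
This identity holds pointwise for every tuple of witnesses.

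Completeness and soundness then follow directly. In the yes case, the same $k$ unentangled witnesses that give $\cV_g$ acceptance probability at least $c$ give $\cV$ acceptance probability at least $\frac{1+c}{2}$. In the no case, every product witness makes $\cV_g$ accept with probability at most $s$, so $\cV$ accepts with probability at most $\frac{1+s}{2}$. The witness lengths $f_i(n)$ and the number $k$ of provers are unchanged.

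The only point needing care, and thus the main obstacle, is confirming that the added gates stay within the stoquastic gate set and that the final measurement is a single $\ket{+}$ projection. This was already verified in the single-prover proof, so no step here is genuinely difficult.
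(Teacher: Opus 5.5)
Your proposal is correct and matches the paper's argument. The paper justifies the corollary only by the remark that the SWAP-test simulation in the proof of \cref{thm:gStoqMA=StoqMA} changes the verifier alone, so the pointwise identity $\Pr[\cV \text{ accepts}] = \frac{1}{2} + \frac{1}{2}\Pr[\cV_g \text{ accepts}]$ holds for every tuple of unentangled witnesses $\ket{\psi_1}\otimes\dots\otimes\ket{\psi_k}$. Your extra details (Fredkin gates from Toffoli and CNOT, and the $\ket{+}$-control form of the SWAP test) spell out what the paper leaves implicit.
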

\subsection{Separable Stoquastic Sparse Hamiltonians}\label{sec:sepstoqsh}

We now define the Separable Stoquastic Sparse Hamiltonian.

\begin{definition}(Separable Stoquastic Sparse Hamiltonian Problem $(\SepStoqSH_{\alpha, \beta})$) \label{def:SepStoqSH}
For a given Stoquastic Sparse Hamiltonian $H$ on $n$ qubits together with a partition of the qubits to disjoint sets $A$ and $B$, decide whether there exists a state $\ket{\psi} = \ket{\chi_A}\otimes \ket{\chi_B}$ such that $\ev{H}{\psi} \leq \alpha$ or if $\ev{H}{\psi} \geq \beta$ for all product states $\ket{\psi} = \ket{\chi_A} \otimes \ket{\chi_B}$, where $\beta - \alpha \geq 1/\poly(n)$.
\end{definition}

We then state the main result of this section.

\begin{theorem}\label{thm:SepStoqSH_is_StoqMA2_complete}
    $\SepStoqSH_{\alpha, \beta} \text{ is } \StoqMA_{c, s}(2)\text{-complete}$, where $c-s\geq 1/\poly(n)$.
\end{theorem}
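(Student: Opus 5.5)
The plan has two directions: membership and hardness.

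\textbf{Membership.} I will reuse the verifier $\cV=\frac12\cV_1+\frac12\cV_2$ from \cref{sec:protocol}, now fed the product witness $\ket{\chi_A}\otimes\ket{\chi_B}$ in place of $\ket{\psi}$. \Cref{lemma:verifier1_StoqSH_in_StoqMA} and \cref{lemma:verifier2_StoqSH_in_StoqMA} hold for an arbitrary input state, so the acceptance probability is again
\begin{equation*}
\tfrac12-\tfrac{1}{4d^2}\ev{H}{\psi}.
\end{equation*}
Non-negativity can be assumed factor by factor, since replacing each factor by its entrywise absolute value does not increase the energy of a stoquastic $H$.

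The completeness and soundness bounds then follow exactly as in \cref{thm:StoqSH_in_StoqMA}, now as a $\gStoqMA(2)$ protocol. \Cref{cor:gstoqMA(k)_in_stoqMA(2)} converts this into a $\StoqMA(2)$ protocol with parameters $\frac{1+\cdot}{2}$, and the gap stays $1/\poly(n)$.

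\textbf{Hardness.} Given a $\StoqMA_{c,s}(2)$ verifier $U=U_T\cdots U_1$ with two witness registers $A,B$, I will adapt the \cite{CS12} construction of a separable sparse Hamiltonian, combined with the stoquastic Feynman--Kitaev construction of \cite{BBT06}. The steps are as follows.

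\begin{enumerate}
\item Use a unary or binary clock. Put the proof registers and the clock on side $A$ or $B$ so that the honest history state is product across the cut. Following \cite{CS12}, the history state is made separable by having each prover hold a clock copy, or by placing clock and ancilla in one part together with a consistency check. This is where sparsity rather than locality is used: the propagation terms $-\ketbra{t}{t-1}\otimes U_t+\text{h.c.}$ involve permutation gates. Each such term is a permutation matrix with nonpositive sign, hence stoquastic and sparse even when the clock is nonlocal.
\item Add an input penalty on the $\ket{0}$ and $\ket{+}$ ancillas at $t=0$. I will use $\ketbra{-}$ penalties, which are stoquastic up to diagonal shift as in \cite{BBT06}. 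Add an output term $\ketbra{t=T}\otimes(I-\ketbra{+})_1$, also stoquastic after a shift.
\item Since $\StoqMA(2)$ has no error reduction, I will follow \cite{BBT06}. Weight the input and propagation terms by a large polynomial $J$, and treat the output term perturbatively. The low-energy subspace of $J(H_{in}+H_{prop})$ is spanned by history states of valid inputs, and first-order perturbation theory gives a ground energy of
\begin{equation*}
\tfrac{1-p_{acc}}{T+1}\pm O(1/J).
\end{equation*}
This yields thresholds $\alpha\approx\frac{1-c}{T+1}$ and $\beta\approx\frac{1-s}{T+1}$, separated by $1/\poly$.
\end{enumerate}

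\textbf{Main obstacle.} The hard step is the soundness analysis restricted to product states. A product state $\ket{\chi_A}\otimes\ket{\chi_B}$ with low energy must be close to a history state whose witness part is itself close to a product $\ket{\psi_1}\otimes\ket{\psi_2}$. We must show this still forces acceptance probability $\geq s$-ish. I would first follow the argument of \cite{CS12}: closeness to the low-energy subspace together with separability across $A|B$ implies approximate product witnesses. I would then use \cref{lemma:close-overlap} to transfer the acceptance bound with only a $1/\poly$ loss, absorbed by choosing $J$ large enough.
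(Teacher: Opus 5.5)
\textbf{Verdict: there is a genuine gap, in the completeness half of the hardness direction.} Your membership argument is the paper's: run the verifier of \cref{sec:protocol} on $\ket{\psi_1}\otimes\ket{\psi_2}$, then convert with \cref{cor:gstoqMA(k)_in_stoqMA(2)}. Your hardness skeleton also matches the paper: a stoquastic Feynman--Kitaev construction with the output term as a small perturbation as in \cite{BBT06}, followed by the \cite{CS12} soundness chain (\cref{lemma:lemma15}, \cref{lemma:lemma17}, then \cref{lemma:close-overlap}). The problem is that you never make the honest history state separable.

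For a general $\StoqMA(2)$ verifier that acts jointly on both proofs, the snapshots $U_t\cdots U_1(\ket{0^m}\ket{\psi_1}\ket{\psi_2})$ become entangled between the two proof registers. They also evolve on both sides of any cut that separates the proofs. So no placement of clock and proof registers gives a product history state. Giving each prover a copy of the clock fails outright: the two copies are perfectly correlated, $\sum_t \ket{t}\ket{t}\otimes\cdots$, and hence entangled across the cut. In the yes case there is then no product state with energy about $\frac{1-c}{T+1}$, and your threshold $\alpha$ is not achieved. (Separately, in your soundness sketch the conclusion should be acceptance probability at most about $s$, not at least.)

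\textbf{The missing idea is a normal form for the verifier, taken from \cite{CS12}.}
\begin{itemize}
\item Both Merlins send the same state.
\item The first steps of the circuit are swap tests, i.e.\ controlled-SWAPs of entire corresponding registers of $P_1$ and $P_2$.
\item After that, the circuit touches only $C\otimes A\otimes P_1$.
\end{itemize}
For identical proofs the controlled-SWAPs act trivially. The history state is then a product across $(C\otimes A\otimes P_1)\,|\,P_2$, which is also the cut \cref{lemma:lemma17} needs.

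This is also the real reason sparsity is needed, not merely that permutation gates give stoquastic terms. Each whole-register swap must be a single time step. Splitting it into two-qubit swaps would create snapshots that are entangled across the cut whenever the register's state is entangled.

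Putting an arbitrary $\StoqMA(2)$ verifier into this form is itself a step. The paper uses \cref{proc:stoqma(k)-to-stoqma(2)} and \cref{thm:gstoqma(k)=gstoqma(2)} to do it. These require multi-qubit measurements to read out the swap tests, so your single-qubit output penalty must become a multi-qubit rejection projector (still stoquastic). They also change $(c,s)$ to $\big(\frac{1+c}{2},\frac56+\frac16(s^2-2s)^2\big)$. Since there is no error reduction, you have to carry these degraded parameters through the perturbative analysis when choosing $J$.
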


To show inclusion, Arthur receives the states $\ket{\psi_1}$ and $\ket{\psi_2}$ from the two Merlins and runs the verification procedure of \cref{sec:protocol} with the witness $\ket{\psi} = \ket{\psi_1} \otimes \ket{\psi_2}$. So, from \cref{thm:StoqSH_in_StoqMA} we get the following.

\begin{corollary}\label{corr:SepStoqSH_in_StoqMA(2)}
    $\SepStoqSH_{\alpha, \beta} \in \StoqMA_{c, s}(2)$, where $c = \frac{3}{4} + \frac{|\alpha|}{8d^2}$ and $s= \frac{3}{4} + \frac{|\beta|}{8d^2}$.
\end{corollary}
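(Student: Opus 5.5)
The plan is to reuse the single-prover verifier $\cV=\frac12\cV_1+\frac12\cV_2$ of \cref{sec:protocol} essentially unchanged. Arthur receives $\ket{\psi_1}$ on the qubits of $A$ and $\ket{\psi_2}$ on the qubits of $B$, places them into the witness register so that $\ket{\psi}=\ket{\psi_1}\otimes\ket{\psi_2}$ is ordered according to the partition (using SWAP gates as in \cref{remark:choice_of_measurement} if needed), and runs $\cV$.

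The key observation is that \cref{lemma:verifier1_StoqSH_in_StoqMA} and \cref{lemma:verifier2_StoqSH_in_StoqMA} hold for \emph{every} nonnegative witness $\ket{\psi}=\sum_x\alpha_x\ket{x}$. Their derivation never uses anything beyond normalization, so they apply in particular to product states. Combining them as in the proof of \cref{thm:StoqSH_in_StoqMA} gives
\[
\Pr[\cV \text{ accepts } \ket{\psi_1}\otimes\ket{\psi_2}] = \tfrac12-\tfrac{1}{4d^2}\ev{H}{\psi_1\otimes\psi_2}.
\]
The restriction to nonnegative witnesses needs a short justification in the two-prover setting. Replacing each $\ket{\psi_i}$ by its entrywise absolute value keeps the witness a product state. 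Since the verifier is stoquastic, this does not decrease the acceptance probability, by the remark after \cref{def:StoqMA(k)}. So it suffices to analyze nonnegative product witnesses, and the identity above covers them.

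For completeness, the yes-promise gives a product state $\ket{\chi_A}\otimes\ket{\chi_B}$ with energy at most $\alpha$ (after the normalization shift $H\mapsto (H-I)/2$, this is $\le$ the shifted $\alpha$, which is negative). The Merlins send exactly these two states, and the acceptance probability is at least $a'=\frac12+\frac{|\alpha|}{4d^2}$. For soundness, every product state has energy at least $\beta$, so every pair of unentangled witnesses is accepted with probability at most $b'=\frac12+\frac{|\beta|}{4d^2}$. Since $a'-b'=\frac{|\alpha|-|\beta|}{4d^2}=\frac{\beta-\alpha}{4d^2}$ (both shifted values are negative), the gap is $1/\poly(n)$. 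This places the problem in $\gStoqMA_{a',b'}(2)$.

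The last step is \cref{cor:gstoqMA(k)_in_stoqMA(2)}. The SWAP-test simulation of the multi-qubit measurement acts only on the verifier's output register and prepared ancillas, so it does not care whether the witness came from one or two provers. This yields $\StoqMA_{c,s}(2)$ with $c=\frac{1+a'}{2}=\frac34+\frac{|\alpha|}{8d^2}$ and $s=\frac{1+b'}{2}=\frac34+\frac{|\beta|}{8d^2}$.

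The only potential obstacle is checking that none of the ingredients silently relied on arbitrary, possibly entangled, witnesses. The nonnegativity reduction and the SWAP-test argument are the two places to check, and both are preserved by product structure, so I expect the argument to go through directly.
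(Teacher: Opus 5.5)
Your proposal is correct and follows the paper's own argument. You run the single-prover verifier of \cref{sec:protocol} on $\ket{\psi_1}\otimes\ket{\psi_2}$, read off the acceptance probability $\frac12-\frac{1}{4d^2}\ev{H}{\psi}$, and then pass from $\gStoqMA(2)$ to $\StoqMA(2)$ via the SWAP-test simulation, which gives exactly the stated $c$ and $s$. Your checks that product structure survives the nonnegativity reduction and the SWAP-test step are details the paper leaves implicit.
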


So, it remains to show that $\SepStoqSH$ is $\StoqMA(2)$-hard by applying Kitaev's circuit-to-Hamiltonian construction following the proof of $\QMA(2)$-hardness of Separable Sparse Hamiltonians from \cite{CS12}. 

In their approach, the challenge is to make the history state separable. To this end, the verifier circuit has to have a specific structure. Suppose that both the first and the second witnesses contain $r$ registers. The first $r$ steps of the verification procedure consist of applying swap tests between the corresponding registers of the witnesses, and then Arthur continues the verification procedure using only the first witness. Notice that the SWAP operators applied to the circuit are what make the Hamiltonian obtained a sparse Hamiltonian.

To show completeness, the two Merlins send identical states, so the history state is separable and using arguments from \cite{kitaevbook}, the energy of the Hamiltonian will be low. For soundness, \cite{CS12} assume that there exists a low-energy product state and they show that the history state that corresponds to that product state has high energy, leading to a contradiction.

The problem in our case is, however, that we cannot perform error amplification in the completeness and soundness parameters for $\StoqMA(2)$, which implies that we need to use the perturbation approach of \cite{BBT06}. The tricky part here is, thus, to choose the correct parameters, so that we have an inverse polynomial energy gap between the yes and no cases. We end up with the following theorem.

\begin{theorem}\label{thm:SepStoqSH_is_StoqMA(2)_hard}
    $\SepStoqSH_{\alpha, \beta}$ is $\StoqMA_{c,s}(2)$-hard.
\end{theorem}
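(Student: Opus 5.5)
We will reduce an arbitrary $\StoqMA_{c,s}(2)$ verifier to an instance of $\SepStoqSH$ using a stoquastic version of the \cite{CS12} circuit-to-Hamiltonian construction, and analyse its spectrum perturbatively as in \cite{BBT06}. This avoids the error amplification that $\StoqMA(2)$ does not allow.

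\textbf{Step 1: normal form of the verifier.} Let $\cV$ act on witnesses $\ket{\psi_1}\otimes\ket{\psi_2}$, each consisting of $r$ registers. We first modify $\cV$ into a circuit $U = U_T\cdots U_1$ whose first $r$ gates are SWAPs between corresponding registers of the two witnesses. The remaining gates (X, CNOT, Toffoli) act only on the first witness and the ancillas. Every gate is then a permutation matrix, so the propagation terms
\begin{equation*}
    -\tfrac12\big(\ket{t}\!\bra{t-1}\otimes U_t + \ket{t-1}\!\bra{t}\otimes U_t^\dagger\big)
\end{equation*}
have nonpositive off-diagonal entries. The clock ($\ket{t}\!\bra{t}$-type), input and clock-legality terms are diagonal. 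The resulting Hamiltonian $H = H_{\mathrm{in}} + H_{\mathrm{prop}} + H_{\mathrm{clock}} + \epsilon H_{\mathrm{out}}$ is therefore stoquastic.

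Following \cite{BBT06}, we take $H_{\mathrm{out}} = \ket{T}\!\bra{T}\otimes(I-\ketbra{+})_1$. This term is not diagonal, but its off-diagonal part is $-\tfrac12\ket{T}\!\bra{T}\otimes X_1$, which is nonpositive. The terms acting on $\ket{+}$-ancillas at $t=0$ are handled the same way. Sparsity holds because each row couples only to $O(1)$ clock neighbours, each with a permutation gate. The $n$-qubit SWAP contributes exactly one nonzero per row, so row and column computation is efficient and the oracles $O_F, O_H$ exist. The bipartition puts the second witness in $B$, and the first witness, clock and ancillas in $A$.

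\textbf{Step 2: completeness.} Suppose both Merlins send the same optimal nonnegative state $\ket{\phi}$. After the SWAP steps the witness state is unchanged, so the history state is
\begin{equation*}
    \ket{\eta} = \ket{\text{hist}_A(\phi)}\otimes\ket{\phi}_B .
\end{equation*}
This is a product across $A{:}B$ as in \cite{CS12}. Its energy is $0$ on the unperturbed part and $\frac{\epsilon}{T+1}(1-\Pr[\text{accept}])$ on $H_{\mathrm{out}}$. Hence $\alpha = \frac{\epsilon(1-c)}{T+1}$.

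\textbf{Step 3: soundness, which we expect to be the main obstacle.} For $\epsilon \ll \Delta$, where $\Delta = \Omega(1/T^2)$ is the gap of $H_0 = H_{\mathrm{in}}+H_{\mathrm{prop}}+H_{\mathrm{clock}}$, the projection lemma / first-order perturbation argument of \cite{BBT06} gives, for any state,
\begin{equation*}
    \ev{H}{\psi} \;\ge\; \epsilon\,\ev{\Pi H_{\mathrm{out}}\Pi}{\psi} - O\!\left(\frac{\epsilon^2}{\Delta}\right),
\end{equation*}
where $\Pi$ projects onto history states. We cannot simply restrict to product states here, because the ground space of $H_0$ consists of history states of arbitrary (entangled) inputs.

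We therefore argue as in \cite{CS12}. Let $\ket{\chi_A}\otimes\ket{\chi_B}$ have low energy. It must then have large overlap with some history state $\ket{\text{hist}(\xi)}$ of an input $\xi$. Because the clock and first witness lie in $A$ while the time-$0$ and post-SWAP configurations both appear in the history state, the $B$ part must be nearly identical to both the original and the swapped second witness. This forces $\xi$ to be close to a product $\ket{\phi_1}\otimes\ket{\phi_2}$ with $\ket{\phi_1}\approx\ket{\phi_2}$, i.e.\ an effective two-unentangled-proof input. \cref{lemma:close-overlap} then bounds the change in acceptance probability, which yields
\begin{equation*}
    \ev{H}{\psi} \ge \frac{\epsilon(1-s)}{T+1} - O(\epsilon^2 T^2) - \epsilon\cdot\mathrm{err}.
\end{equation*}

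The delicate part is choosing $\epsilon = \Theta\big((c-s)/\poly(T)\big)$ so that the $O(\epsilon^2/\Delta)$ and overlap errors (scaling as $\sqrt{\cdot}$ from \cref{lemma:close-overlap}) stay below half of $\frac{\epsilon(c-s)}{T+1}$. We would first try to make this go through with the separability-overlap estimate of \cite{CS12} tracked quantitatively. Once that holds, $\beta-\alpha \ge 1/\poly(n)$. Finally we rescale $H$ into $0\preceq H\preceq I$ by dividing by its polynomial norm bound, which preserves the inverse-polynomial gap.
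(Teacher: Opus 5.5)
\paragraph{The normal form in Step 1 is unjustified, and this breaks Step 3.} You assert that an arbitrary $\StoqMA_{c,s}(2)$ verifier can be rewritten so that its first gates are plain SWAPs between corresponding registers of the two witnesses, and all later gates touch only the first witness. You give no reduction for this. Taken literally, the rewrite discards the second proof. Your completeness step also lets both Merlins send the same $\ket{\phi}$, although the original verifier's optimal proof pair need not consist of two equal states.

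The only consistent reading is the doubled-witness one. Each Merlin sends a copy of both original proofs, the circuit swaps register by register, and then it runs $\cV$ on the first copy. Under that reading your soundness deduction is too weak. From the time-$0$ and post-SWAP slices you only obtain $\xi\approx\ket{\phi_1}\otimes\ket{\phi_2}$ with $\ket{\phi_1}\approx\ket{\phi_2}$, i.e.\ product structure across the $A{:}B$ cut. The verification stage acts only on the first copy, so what you actually need is that $\ket{\phi_1}$ is itself close to a product across the original proof registers. A cheating prover sending $\ket{\omega}\otimes\ket{\omega}$, with $\ket{\omega}$ entangled across those registers, satisfies everything you derive. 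Yet $\cV$ is only sound against product proofs, so it may accept such an input.

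\paragraph{How the paper avoids this, and what your route would need.} The paper first converts the verifier via \cref{proc:stoqma(k)-to-stoqma(2)} and \cref{thm:gstoqma(k)=gstoqma(2)}. The new verifier performs swap \emph{tests}: controlled swaps whose control qubits are measured in the $X$ basis, which is why generalized multi-qubit measurements and the $\gStoqMA\to\StoqMA$ conversion are needed. It then runs the original test on the first proof. Its soundness against inputs that are product across $P_1{:}P_2$ is the bound $s'=\frac56+\frac16(s^2-2s)^2$ from \cite{HM10,SW22}, with $c'=\frac{1+c}{2}$. Because the swap-test measurements contribute to acceptance, \cref{lemma:lemma17} suffices even though it only yields product structure across $P_1{:}P_2$.

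With plain SWAPs no measurement tests anything, so the product structure inside each copy must come from the Hamiltonian. You would need an additional quantitative lemma about the intermediate clock slices. After swapping a single register, two copies of an entangled $\ket{\omega}$ become entangled across the cut, so separability there should force near-product structure inside each copy, with errors of order $\sqrt{\alpha(T+1)}$. Your proposal neither states nor proves such a lemma.

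\paragraph{What is fine.} Your perturbative bound $\ev{H}{\psi}\ge\epsilon\,\ev{\Pi H_{\mathrm{out}}\Pi}{\psi}-\epsilon^2/\Delta$ does hold for arbitrary states. It is an acceptable substitute for the paper's use of the overlap lemma plus the $2\delta\sqrt{\alpha}$ cross-term estimate. Your explicit stoquasticity check is also correct. The problem lies solely in the normal form and the separability-to-soundness step.
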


The proof is deferred to \cref{app:SepStoqSH}.
\subsection{StoqMA(k) vs StoqMA(2)}\label{sec:stoqma(k)=stoqma(2)}

We will show that, for certain values of the completeness and soundness parameters ,$k$ Merlins can be simulated by only two Merlins, i.e. $\StoqMA(k) \subseteq \StoqMA(2)$, for $k \geq 2$. Since it is not known how to perform error reduction for $\StoqMA(k)$, our proof does not generalize for all values of $c, s$.

\begin{theorem}\label{thm:stoqma(k)=stoqma(2)}
    $\StoqMA_{c,s}(k) \subseteq \StoqMA_{c', s'}(2)$, for values of $c,s$ such that $c' - s' \geq 1/\poly(n)$, 
    \begin{equation*}
        c' = \frac{3+c}{4} \text{ and }
        s' = \frac{11}{12} + \frac{1}{12}(s^2-2s)^2.
    \end{equation*}
\end{theorem}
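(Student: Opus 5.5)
We follow the Harrow--Montanaro approach sketched in the introduction. Given a $\StoqMA_{c,s}(k)$ verifier $\cV$ with witnesses on registers of sizes $f_1,\dots,f_k$, we build a generalized two-prover verifier $\cV'$. Each Merlin sends a state on $k$ registers, $\ket{\psi_1},\ket{\psi_2}$. With probability $1/2$, $\cV'$ runs $\cV$ on $\ket{\psi_1}$, treating its $k$ registers as the $k$ proofs. With probability $1/2$, it runs the product test of \cite{HM10}: a SWAP test between each pair of corresponding registers of $\ket{\psi_1}$ and $\ket{\psi_2}$, accepting iff all $k$ tests accept. By \cref{remark:choice_of_measurement} and \cref{thm:gStoqMA=StoqMA}, each SWAP test is a controlled-SWAP (Toffoli gates) on a $\ket{+}$ ancilla followed by a projection onto $\ket{+}$. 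Accepting iff all tests accept is then a joint projection of $k$ ancillas onto $\ket{+}^{\otimes k}$, which is a legal generalized measurement. For the $\cV$ branch we project the output qubit onto $\ket{+}$ and the unused ancillas onto nothing. To combine the two branches into a single fixed projector, we use a $\ket{+}$-controlled classical switch, in the same way $\cV_1,\cV_2$ were mixed in \cref{sec:protocol}. Alternatively, we can apply \cref{cor:gstoqMA(k)_in_stoqMA(2)} to each branch, which is where the $\frac{1+\cdot}{2}$ maps come from. Converting the result to $\StoqMA(2)$ via \cref{cor:gstoqMA(k)_in_stoqMA(2)} costs $p\mapsto (1+p)/2$.

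\textbf{Completeness.} Both Merlins send $\ket{\phi_1}\otimes\cdots\otimes\ket{\phi_k}$, the optimal product witness. The product test then accepts with probability $1$, and the $\cV$ branch accepts with probability at least $c$. So $\cV'$ accepts with probability at least $(1+c)/2$, and after conversion with probability at least $(3+c)/4=c'$.

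\textbf{Soundness.} This is the main obstacle. We invoke the product-test analysis of \cite{HM10}. If $\ket{\psi_1}$ has overlap at most $1-\varepsilon$ with every product state, then the product test accepts with probability at most $1-\Omega(\varepsilon)$. Conversely, if the overlap with some product state is at least $1-\varepsilon$, then by \cref{lemma:close-overlap} the $\cV$ branch accepts with probability at most $s+\sqrt{\varepsilon}$. We then balance the two cases by choosing a threshold $\varepsilon$ that optimizes
\[
\max\Bigl\{\tfrac12(s+\sqrt\varepsilon)+\tfrac12,\ \tfrac12+\tfrac12(1-\Omega(\varepsilon))\Bigr\}.
\]
The explicit bound $s'=\frac{11}{12}+\frac1{12}(s^2-2s)^2$ strongly suggests that the actual balancing is done in terms of the acceptance probability of $\cV$ rather than a $\sqrt\varepsilon$ bound. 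Concretely, the plan is to write $1-(1-s)^2=2s-s^2$ and use a three-way mixture with weights $\tfrac13$, which would account for the factor $12=3\cdot4$ after the final conversion. With that parametrization, a quadratic-in-overlap bound yields the $(s^2-2s)^2$ term. The first thing I would try is to bound the test by $1-\tfrac{\varepsilon}{\text{const}}$ and to use the $\cV$ bound $s+\varepsilon$ where possible, then optimize $\varepsilon$ as a function of $s$. The final step is to check that the condition $c'-s'\ge 1/\poly(n)$ is exactly what the hypothesis demands.
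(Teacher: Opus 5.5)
Your overall architecture is the paper's: the \cite{HM10} reduction into $\gStoqMA(2)$, followed by the $p\mapsto\frac{1+p}{2}$ conversion of \cref{cor:gstoqMA(k)_in_stoqMA(2)}. Your completeness bound $c'=\frac{3+c}{4}$ is exactly right. The gap is in soundness, which is the only place the specific value of $s'$ comes from, and your plan would not produce it.

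First, the product test runs on $\ket{\psi_1}\otimes\ket{\psi_2}$, and these states may differ; you never address this. The paper handles it with $\PT_k(\rho,\sigma)\le\frac{\PT_k(\rho)+\PT_k(\sigma)}{2}$ (\cref{lemma:prob-acc-pt}) together with the explicit bound $\PT_k(\ket{\psi})\le 1-\frac23\varepsilon+\frac13\varepsilon^2$ of \cite{SW22} (\cref{thm:prod-test-sw-upper-bound}). A bare $1-\Omega(\varepsilon)$ cannot yield exact constants.

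Second, your protocol always runs $\cV$ on $\ket{\psi_1}$. The second Merlin can then send a product state, and these tools only certify product-test rejection $\frac{\varepsilon_1}{3}-\frac{\varepsilon_1^2}{6}$. Carrying this through gives soundness $1-\frac{(1-s)^2}{12}+\frac{(1-s)^4}{24}$ after conversion, strictly weaker than the claimed $1-\frac{(1-s)^2}{6}+\frac{(1-s)^4}{12}$. The paper's \cref{proc:stoqma(k)-to-stoqma(2)} instead runs $\cA$ on a uniformly random one of the two proofs, which makes both bounds symmetric. The verification branch is then at most $\min\{s+\frac{\sqrt{\varepsilon_1}+\sqrt{\varepsilon_2}}{2},1\}$ by \cref{lemma:close-overlap}, and the product test rejects with probability $\delta\ge\frac{\varepsilon_1+\varepsilon_2}{3}-\frac{\varepsilon_1^2+\varepsilon_2^2}{6}$.

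Third, the balancing must be joint, not a threshold case split that bounds one branch by $1$ in each case; the split gives a strictly weaker bound. The paper bounds $\frac12(1-\delta)+\frac12\min\{s+\frac{\sqrt{\varepsilon_1}+\sqrt{\varepsilon_2}}{2},1\}$ at the same $(\varepsilon_1,\varepsilon_2)$. This expression increases until $\frac{\sqrt{\varepsilon_1}+\sqrt{\varepsilon_2}}{2}=1-s$ and decreases afterwards, so the acceptance is at most $1-\frac{\delta}{2}$ on that curve. \Cref{lemma:helper-lemma-amgm} (minimum at $\varepsilon_1=\varepsilon_2=(1-s)^2$) then gives $\delta\ge\frac23(1-s)^2-\frac13(1-s)^4$. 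Hence $s'\le\frac56+\frac16(s^2-2s)^2$ in $\gStoqMA(2)$ (\cref{thm:gstoqma(k)=gstoqma(2)}), and $\frac{11}{12}+\frac{1}{12}(s^2-2s)^2$ after conversion.

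So your guesses about the constants are off in two ways. The bound is exactly the $\sqrt{\varepsilon}$ bound of \cref{lemma:close-overlap}; an $s+\varepsilon$ bound is not available. The $\frac13$ does not come from a three-way mixture. It comes from writing the \cite{SW22} bound as $\frac23+\frac13(1-\varepsilon)^2$ and evaluating it at $\varepsilon=(1-s)^2$, which is where your observation $1-(1-s)^2=2s-s^2$ actually enters. The two halvings (branch mixing and conversion) then turn $\frac13$ into $\frac{1}{12}$.
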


The proof is based on the proof of $\QMA(k) = \QMA(2)$, which was proven in \cite{HM10} using the Product Test presented below.
\begin{procedure}[Product Test \cite{HM10}]\label{proc:product-test} The product test works as follows.
    \begin{enumerate}
        \item Receive two states $\ket{\psi_1}, \ket{\psi_2} \in \C^{d_1}\otimes  \cdots \otimes \C^{d_k}$.
        \item Perform a swap test on each of the $k$ subsystems
        \item If all the tests accepted, return ``\textsf{Accept}'', else ``\textsf{Reject}''.
    \end{enumerate}
\end{procedure}

Now we present the $\StoqMA(2)$ protocol that simulates any $\StoqMA(k)$ problem for certain values of the completeness and soundness parameters, which is based on the $\QMA(k)$-to-$\QMA(2)$ protocol of \cite{HM10}. Let $\mathcal{A}$ denote the verification procedure of the original $\StoqMA(k)$ protocol.

\begin{procedure}[$\StoqMA(k)$ to $\StoqMA(2)$]\label{proc:stoqma(k)-to-stoqma(2)} The $\StoqMA(2)$ protocol proceeds as follows.
\begin{enumerate}
    \item Both Merlins send $\ket{\psi} = \ket{\psi_1}\otimes\dots \otimes \ket{\psi_k}$ to Arthur.
    \item Arthur performs one of the following tests with probability 1/2, 
    \begin{enumerate}
        \item Run the product test with the two states as input.  
        \item Choose randomly one of the states from the two Merlins and run the original verification procedure $\cA$ on that state.
    \end{enumerate}
\end{enumerate}
\end{procedure}

Technically, we can only run the product test in $\gStoqMA(2)$, so we first show the following.

\begin{theorem}\label{thm:gstoqma(k)=gstoqma(2)}
    $\gStoqMA_{c,s}(k) \subseteq \gStoqMA_{c', s'}(2)$ for values of $c,s$ that  satisfy $c' - s' \geq 1/\poly(n)$, 
    \begin{equation*}
        c' = \frac{1+c}{2} \text{ and }
        s' = \frac{5}{6} + \frac{1}{6}(s^2-2s)^2.
    \end{equation*}
\end{theorem}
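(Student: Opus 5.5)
We analyze Procedure~\ref{proc:stoqma(k)-to-stoqma(2)}, run by a generalized stoquastic verifier, and compute its acceptance probability in both cases. First we check that the protocol is a valid $\gStoqMA(2)$ verifier. The product test consists of swap tests on the $k$ subsystem pairs. A swap test is a controlled-SWAP from a $\ket{+}$ ancilla followed by a projection of that ancilla onto $\ket{+}$, and controlled-SWAP is built from Toffoli gates. Accepting ``all tests accept'' is therefore the projection of $k$ ancillas onto $\ket{+}^{\otimes k}$, which Definition~\ref{def:g-stoquastic verifier} allows. The random choice between branches (a) and (b), and between the two Merlins in (b), is made with $\ket{+}$ ancillas controlling SWAPs and the circuit. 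The branches have different accepting projectors. To combine them into one projector, we measure the unused branch's output qubits onto $\ket{+}$ after re-preparing them as $\ket{+}$ (or we route everything with controlled SWAPs into a common output register of $\ket{+}$-type qubits). Since $\mathcal{A}$ is itself a generalized verifier, its output projector is a product of $\ket{+},\ket{0},\ket{1}$ projectors, so this is admissible.

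\textbf{Completeness.} Both Merlins send the honest product witness $\ket{\psi}=\ket{\psi_1}\otimes\cdots\otimes\ket{\psi_k}$. The product test on identical pure product states accepts with probability $1$ by Lemma~\ref{lemma:swap-test}. Branch (b) accepts with probability at least $c$. The total is $\frac12+\frac c2=\frac{1+c}{2}=c'$.

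\textbf{Soundness.} We follow the analysis of \cite{HM10}. Let the Merlins send arbitrary $\ket{\psi_1'},\ket{\psi_2'}$, and let $P_{\mathrm{prod}}$ be the product-test acceptance probability. By the \cite{HM10} product-test theorem, if $P_{\mathrm{prod}}=1-\varepsilon$, then there is a product state $\ket{\phi}=\ket{\phi_1}\otimes\cdots\otimes\ket{\phi_k}$ with $|\braket{\psi_i'}{\phi}|^2\ge 1-O(\varepsilon)$ for (say) one of the $\psi_i'$. More precisely, we use the averaged version in which the symmetrized test bounds $\max_\phi$ overlaps for both states. Lemma~\ref{lemma:close-overlap} then gives that $\mathcal{A}$ accepts that state with probability at most $s+\sqrt{O(\varepsilon)}$. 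The other Merlin's state is also controlled: the product test accepting with high probability forces its overlap with $\ket{\psi_1'}$ to be large. This follows from the first swap-test factor together with $\Tr(\rho\sigma)$ bounds, so the same bound holds up to a constant.

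This yields an acceptance bound of the form $\frac12(1-\varepsilon)+\frac12\min\{1,s+C\sqrt{\varepsilon}\}$. We then maximize over $\varepsilon\in[0,1]$. The maximum occurs where the gain in branch (b) balances the loss in the product test. Optimizing (with the constants from \cite{HM10}, e.g.\ overlap $1-\Theta(\varepsilon)$ and the specific $\sqrt{\cdot}$ penalty) gives a quadratic-in-$(1-s)$ expression. We would tune the constants so that the bound reads $\frac56+\frac16(s^2-2s)^2=\frac56+\frac16\bigl((1-s)^2-1\bigr)^2$. We expect the intended calculation to be: with $\delta=1-s$, cheating costs $\varepsilon\gtrsim \delta^2$-type penalties, and the optimum gives $1-\frac16(1-(1-\delta)^4\ldots)$. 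We would redo this carefully to match the stated $s'$.

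\textbf{Main obstacle.} The main obstacle is this soundness step. We must extract from \cite{HM10} a quantitative product-test bound valid without error amplification, with explicit constants, and handle both Merlins' states. Then we must perform the optimization over $\varepsilon$ to get exactly $s'$ and verify that $c'-s'\ge 1/\poly(n)$ in the stated parameter region. We would first try the cleanest route: bound the acceptance of $\mathcal{A}$ on any state $\rho$ by $s+\sqrt{1-\max_\phi\langle\phi|\rho|\phi\rangle}$, then plug in the \cite{HM10} relation between $P_{\mathrm{prod}}$ and the maximal product overlap.
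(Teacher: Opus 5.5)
\textbf{There is a gap in your soundness argument.} Your completeness argument is correct. Your check that the product test and the branch mixing can be realized with a single tensor-product projector is fine, and more explicit than the paper.

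The gap is soundness, which is the actual content of the statement. You never derive $s'=\frac56+\frac16(s^2-2s)^2$; you only propose to ``tune the constants'' until it appears, and with the ingredients you name this cannot work.

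Take your bound $\frac12(1-\varepsilon)+\frac12\min\{1,s+C\sqrt{\varepsilon}\}$, with $\varepsilon$ the product-test rejection probability and a linear relation ``product overlap $\ge 1-O(\varepsilon)$'' from \cite{HM10}. At the optimum $C\sqrt{\varepsilon}=1-s$ it gives $1-\frac{(1-s)^2}{2C^2}$, which is purely quadratic in $1-s$. The target is $s'=1-\frac{(1-s)^2}{3}+\frac{(1-s)^4}{6}$, and its quartic term comes from the exact nonlinear bound of \cite{SW22}: $\mathrm{PT}_k(\ket{\psi})\le 1-\frac23\varepsilon+\frac13\varepsilon^2$, where $1-\varepsilon$ is the maximal overlap of $\ket{\psi}$ with a product state. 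The constants of \cite{HM10} cannot produce it.

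Your side claim is also problematic. You say high product-test acceptance forces $|\braket{\psi_1'}{\psi_2'}|^2$ to be large via a ``first swap-test factor.'' This is unjustified: the acceptance is $2^{-k}\sum_{S\subseteq[k]}\mathrm{Tr}(\rho_S\sigma_S)$, and the $S=[k]$ term has weight only $2^{-k}$. It is also unnecessary: $\mathcal{A}$ is sound on every product state, so you only need each received state to be individually close to some product state.

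The paper's route supplies what is missing:
\begin{itemize}
\item \textbf{Parametrization.} Work with the two overlap deficits $\varepsilon_1,\varepsilon_2$ of the received states, rather than with the product-test rejection.
\item \textbf{Product test.} Termwise Cauchy--Schwarz and AM--GM on $\mathrm{Tr}(\rho_S\sigma_S)$ give $\mathrm{PT}_k(\rho,\sigma)\le\frac12(\mathrm{PT}_k(\rho)+\mathrm{PT}_k(\sigma))$. Combined with \cite{SW22}, this bounds the product-test acceptance by $1-\frac{\varepsilon_1+\varepsilon_2}{3}+\frac{\varepsilon_1^2+\varepsilon_2^2}{6}$.
\item \textbf{Branch (b).} Lemma 22 of \cite{HM10}, applied to each state and averaged over which Merlin is chosen, bounds its acceptance by $\min\{s+\frac{\sqrt{\varepsilon_1}+\sqrt{\varepsilon_2}}{2},1\}$.
\item \textbf{Optimization.} The total is maximized on the line $\frac{\sqrt{\varepsilon_1}+\sqrt{\varepsilon_2}}{2}=1-s$, where it equals $1-\delta/2$, with $\delta$ the product-test rejection. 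Set $x=\sqrt{\varepsilon_1}/2$ and $y=\sqrt{\varepsilon_2}/2$, so $x+y=1-s\le\frac12$ (using $s\ge\frac12$). Then $\delta\ge\frac43(x^2+y^2)-\frac83(x^4+y^4)$, and this is minimized at $x=y$. That gives $\delta\ge\frac23(1-s)^2-\frac13(1-s)^4$, hence exactly the stated $s'$.
\end{itemize}
Without the tight \cite{SW22} bound, the two separate deficits, and this constrained minimization, your argument yields only some $s'$ of the form $1-\Theta((1-s)^2)$, not the claimed one.
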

The proof of \cref{thm:gstoqma(k)=gstoqma(2)} follows the proof of $\QMA(k) \subseteq \QMA(2)$ of \cite{HM10} with the improved analysis of the product test by \cite{SW22} and is deferred to \cref{app:gstoqma(k)=gstoqma(2)}.

\begin{proof}[Proof of \cref{thm:stoqma(k)=stoqma(2)}]
We have that
    \begin{equation*}
        \StoqMA_{c,s}(k) \subseteq \gStoqMA_{c,s}(k) \subseteq \gStoqMA_{c',s'}(2) \subseteq \StoqMA_{c'', s''}(2),
    \end{equation*}
    where the second inclusion comes from \cref{thm:gstoqma(k)=gstoqma(2)}, the third inclusion holds from \cref{cor:gstoqMA(k)_in_stoqMA(2)}, and carrying out the completeness and soundness parameters we have that $c'' = \frac{3+c}{4}$ and $s'' = \frac{11}{12} + \frac{1}{12}(s^2-2s)^2$, which completes the proof.
\end{proof}

\section*{Acknowledgments}
MR recognizes partial support by project MIS 5154714 of the National Recovery and Resilience Plan Greece 2.0 funded by the European Union under the NextGenerationEU Program. ABG and MR are supported by ANR JCJC TCS-NISQ ANR-22-CE47-0004.

\bibliographystyle{alpha}   
\bibliography{references}   

@inproceedings{BCCKS14,
author = {Berry, Dominic W. and Childs, Andrew M. and Cleve, Richard and Kothari, Robin and Somma, Rolando D.},
title = {Exponential improvement in precision for simulating sparse Hamiltonians},
year = {2014},
isbn = {9781450327107},
publisher = {Association for Computing Machinery},
address = {New York, NY, USA},
url = {https://doi.org/10.1145/2591796.2591854},
doi = {10.1145/2591796.2591854},
booktitle = {Proceedings of the Forty-Sixth Annual ACM Symposium on Theory of Computing},
pages = {283–292},
numpages = {10},
location = {New York, New York},
series = {STOC '14}
}

@article{BDOT08,
author = {Bravyi, Sergey and Divincenzo, David P. and Oliveira, Roberto and Terhal, Barbara M.},
title = {The complexity of stoquastic local Hamiltonian problems},
year = {2008},
issue_date = {May 2008},
publisher = {Rinton Press, Incorporated},
address = {Paramus, NJ},
volume = {8},
number = {5},
issn = {1533-7146},
journal = {Quantum Info. Comput.},
month = may,
pages = {361–385},
numpages = {25}
}

@misc{BBT06,
      title={Merlin-{A}rthur {G}ames and {S}toquastic {C}omplexity}, 
      author={Sergey Bravyi and Arvid J. Bessen and Barbara M. Terhal},
      year={2006},
      eprint={quant-ph/0611021},
      archivePrefix={arXiv},
      primaryClass={quant-ph},
      url={https://arxiv.org/abs/quant-ph/0611021}, 
}

@inproceedings{ATS03,
author = {Aharonov, Dorit and Ta-Shma, Amnon},
title = {Adiabatic quantum state generation and statistical zero knowledge},
year = {2003},
isbn = {1581136749},
publisher = {Association for Computing Machinery},
address = {New York, NY, USA},
url = {https://doi.org/10.1145/780542.780546},
doi = {10.1145/780542.780546},
booktitle = {Proceedings of the Thirty-Fifth Annual ACM Symposium on Theory of Computing},
pages = {20–29},
numpages = {10},
location = {San Diego, CA, USA},
series = {STOC '03}
}

@book{NC10, place={Cambridge}, title={Quantum Computation and Quantum Information: 10th Anniversary Edition}, publisher={Cambridge University Press}, author={Nielsen, Michael A. and Chuang, Isaac L.}, year={2010}}

@misc{Ioannou,
    author = "Ioannou, Marios and Piddock, Stephen and Marvian, Milad and Klassen, Joel and Terhal, Barbara M.",
    title = "{Termwise versus globally stoquastic local Hamiltonians: questions of complexity and sign-curing}",
    eprint = "2007.11964",
    archivePrefix = "arXiv",
    primaryClass = "quant-ph",
    month = "7",
    year = "2020"
}

@INPROCEEDINGS{CS12,
  author={Chailloux, André and Sattath, Or},
  booktitle={2012 IEEE 27th Conference on Computational Complexity}, 
  title={The Complexity of the Separable Hamiltonian Problem}, 
  year={2012},
  volume={},
  number={},
  pages={32-41},
  doi={10.1109/CCC.2012.42}}

@book{kitaevbook,
author = {Kitaev, A. Yu. and Shen, A. H. and Vyalyi, M. N.},
title = {Classical and Quantum Computation},
year = {2002},
isbn = {0821832298},
publisher = {American Mathematical Society},
address = {USA}
}

@inproceedings{SW22,
author = {Mehdi Soleimanifar and John Wright},
title = {Testing matrix product states},
booktitle = {Proceedings of the 2022 Annual ACM-SIAM Symposium on Discrete Algorithms (SODA)},
chapter = {},
year={2022},
pages = {1679-1701},
doi = {10.1137/1.9781611977073.68},
URL = {https://epubs.siam.org/doi/abs/10.1137/1.9781611977073.68},
eprint = {https://epubs.siam.org/doi/pdf/10.1137/1.9781611977073.68}
}

@article{HM10,
author = {Harrow, Aram W. and Montanaro, Ashley},
title = {Testing Product States, Quantum Merlin-Arthur Games and Tensor Optimization},
year = {2013},
issue_date = {February 2013},
publisher = {Association for Computing Machinery},
address = {New York, NY, USA},
volume = {60},
number = {1},
issn = {0004-5411},
url = {https://doi.org/10.1145/2432622.2432625},
doi = {10.1145/2432622.2432625},
journal = {J. ACM},
month = feb,
articleno = {3},
numpages = {43}
}

@misc{stoquasticconsistencyliu,
      title={The Local Consistency Problem for Stoquastic and 1-D Quantum Systems}, 
      author={Yi-Kai Liu},
      year={2007},
      eprint={0712.1388},
      archivePrefix={arXiv},
      primaryClass={quant-ph},
      url={https://arxiv.org/abs/0712.1388}, 
}

@inproceedings{am2,
author = {Aaronson, Scott and Impagliazzo, Russell and Moshkovitz, Dana},
title = {AM with Multiple Merlins},
year = {2014},
isbn = {9781479936267},
publisher = {IEEE Computer Society},
address = {USA},
url = {https://doi.org/10.1109/CCC.2014.13},
doi = {10.1109/CCC.2014.13},
booktitle = {Proceedings of the 2014 IEEE 29th Conference on Computational Complexity},
pages = {44–55},
numpages = {12},
series = {CCC '14}
}

@article{Hastings2021powerofadiabatic,
  doi = {10.22331/q-2021-12-06-597},
  url = {https://doi.org/10.22331/q-2021-12-06-597},
  title = {The {P}ower of {A}diabatic {Q}uantum {C}omputation with {N}o {S}ign {P}roblem},
  author = {Hastings, Matthew B.},
  journal = {{Quantum}},
  issn = {2521-327X},
  publisher = {{Verein zur F{\"{o}}rderung des Open Access Publizierens in den Quantenwissenschaften}},
  volume = {5},
  pages = {597},
  month = dec,
  year = {2021}
}

@inproceedings{gilyenHastingsVazirani,
author = {Gily\'{e}n, Andr\'{a}s and Hastings, Matthew B. and Vazirani, Umesh},
title = {(Sub)Exponential advantage of adiabatic Quantum computation with no sign problem},
year = {2021},
isbn = {9781450380539},
publisher = {Association for Computing Machinery},
address = {New York, NY, USA},
url = {https://doi.org/10.1145/3406325.3451060},
doi = {10.1145/3406325.3451060},
booktitle = {Proceedings of the 53rd Annual ACM SIGACT Symposium on Theory of Computing},
pages = {1357–1369},
numpages = {13},
location = {Virtual, Italy},
series = {STOC 2021}
}

@article{FixedNodeHamiltonian,
  title = {Proof for an upper bound in fixed-node Monte Carlo for lattice fermions},
  author = {ten Haaf, D. F. B. and van Bemmel, H. J. M. and van Leeuwen, J. M. J. and van Saarloos, W. and Ceperley, D. M.},
  journal = {Phys. Rev. B},
  volume = {51},
  issue = {19},
  pages = {13039--13045},
  numpages = {0},
  year = {1995},
  month = {May},
  publisher = {American Physical Society},
  doi = {10.1103/PhysRevB.51.13039},
  url = {https://link.aps.org/doi/10.1103/PhysRevB.51.13039}
}

@article{Bravyi2023rapidlymixingmarkov,
  doi = {10.22331/q-2023-11-07-1173},
  url = {https://doi.org/10.22331/q-2023-11-07-1173},
  title = {A rapidly mixing {M}arkov chain from any gapped quantum many-body system},
  author = {Bravyi, Sergey and Carleo, Giuseppe and Gosset, David and Liu, Yinchen},
  journal = {{Quantum}},
  issn = {2521-327X},
  publisher = {{Verein zur F{\"{o}}rderung des Open Access Publizierens in den Quantenwissenschaften}},
  volume = {7},
  pages = {1173},
  month = nov,
  year = {2023}
}

@article{JiangSuccinct,
  title = {Local Hamiltonian Problem with Succinct Ground State is MA-Complete},
  author = {Jiang, Jiaqing},
  journal = {PRX Quantum},
  volume = {6},
  issue = {2},
  pages = {020312},
  numpages = {28},
  year = {2025},
  month = {Apr},
  publisher = {American Physical Society},
  doi = {10.1103/PRXQuantum.6.020312},
  url = {https://link.aps.org/doi/10.1103/PRXQuantum.6.020312}
}

@article{classiciation2localHamiltonians,
author = {Cubitt, Toby and Montanaro, Ashley},
title = {Complexity Classification of Local Hamiltonian Problems},
journal = {SIAM Journal on Computing},
volume = {45},
number = {2},
pages = {268-316},
year = {2016},
doi = {10.1137/140998287},

URL = { 
    
        https://doi.org/10.1137/140998287
    
    

},
eprint = { 
    
        https://doi.org/10.1137/140998287
    
    

}
}

@article{BT10,
author = {Bravyi, Sergey and Terhal, Barbara},
title = {Complexity of Stoquastic Frustration-Free Hamiltonians},
journal = {SIAM Journal on Computing},
volume = {39},
number = {4},
pages = {1462-1485},
year = {2010},
doi = {10.1137/08072689X},

URL = { 
    
        https://doi.org/10.1137/08072689X
    
    

},
eprint = { 
    
        https://doi.org/10.1137/08072689X
    
    

}
}

@inproceedings{AG19,
  title={Stoquastic PCP vs. Randomness},
  author={Aharonov, Dorit and Grilo, Alex Bredariol},
  booktitle={2019 IEEE 60th Annual Symposium on Foundations of Computer Science (FOCS)},
  pages={1000--1023},
  year={2019},
  organization={IEEE},
doi = {10.1109/FOCS.2019.00065}
}

@misc{raza2025complexitygeometricallylocalstoquastic,
      title={Complexity of geometrically local stoquastic Hamiltonians}, 
      author={Asad Raza and Jens Eisert and Alex B. Grilo},
      year={2025},
      eprint={2407.15499},
      archivePrefix={arXiv},
      primaryClass={quant-ph},
      url={https://arxiv.org/abs/2407.15499}, 
}

@misc{waite2025complexitylocalstoquastichamiltonians,
      title={The Complexity of Local Stoquastic Hamiltonians on 2D Lattices}, 
      author={Gabriel Waite and Michael J. Bremner},
      year={2025},
      eprint={2502.14244},
      archivePrefix={arXiv},
      primaryClass={quant-ph},
      url={https://arxiv.org/abs/2502.14244}, 
}

@article{AGL25,
  doi = {10.22331/q-2025-09-11-1853},
  url = {https://doi.org/10.22331/q-2025-09-11-1853},
  title = {Stoq{MA} vs. {MA}: the power of error reduction},
  author = {Aharonov, Dorit and Grilo, Alex B. and Liu, Yupan},
  journal = {{Quantum}},
  issn = {2521-327X},
  publisher = {{Verein zur F{\"{o}}rderung des Open Access Publizierens in den Quantenwissenschaften}},
  volume = {9},
  pages = {1853},
  month = sep,
  year = {2025}
}

@article{BH17,
  title={On complexity of the quantum Ising model},
  author={Bravyi, Sergey and Hastings, Matthew},
  journal={Communications in Mathematical Physics},
  volume={349},
  number={1},
  pages={1--45},
  year={2017},
  publisher={Springer},
doi = {10.1007/s00220-016-2787-4}
}

@article{Kit99,
  title={Quantum $\mathsf{NP}$},
  author={Kitaev, Alexei},
  journal={Talk at AQIP},
  volume={99},
  year={1999}
}

@article{piddockmontanaro,
author = {Piddock, Stephen and Montanaro, Ashley},
title = {The complexity of antiferromagnetic interactions and 2D lattices},
year = {2017},
issue_date = {June 2017},
publisher = {Rinton Press, Incorporated},
address = {Paramus, NJ},
volume = {17},
number = {7–8},
issn = {1533-7146},
journal = {Quantum Info. Comput.},
month = jun,
pages = {636–672},
numpages = {37}
}

@article{swaptest,
  title = {Quantum Fingerprinting},
  author = {Buhrman, Harry and Cleve, Richard and Watrous, John and de Wolf, Ronald},
  journal = {Phys. Rev. Lett.},
  volume = {87},
  issue = {16},
  pages = {167902},
  numpages = {4},
  year = {2001},
  month = {Sep},
  publisher = {American Physical Society},
  doi = {10.1103/PhysRevLett.87.167902},
  url = {https://link.aps.org/doi/10.1103/PhysRevLett.87.167902}
}

@article{2localkempe,
author = {Kempe, Julia and Kitaev, Alexei and Regev, Oded},
title = {The Complexity of the Local Hamiltonian Problem},
journal = {SIAM Journal on Computing},
volume = {35},
number = {5},
pages = {1070-1097},
year = {2006},
doi = {10.1137/S0097539704445226},

URL = { 
    
        https://doi.org/10.1137/S0097539704445226
    
    

},
eprint = { 
    
        https://doi.org/10.1137/S0097539704445226
    
    

}
}

@InProceedings{hubbardchilds,
author="Childs, Andrew M.
and Gosset, David
and Webb, Zak",
editor="Esparza, Javier
and Fraigniaud, Pierre
and Husfeldt, Thore
and Koutsoupias, Elias",
title="The Bose-Hubbard Model is QMA-complete",
booktitle="Automata, Languages, and Programming",
year="2014",
publisher="Springer Berlin Heidelberg",
address="Berlin, Heidelberg",
pages="308--319",
isbn="978-3-662-43948-7"
}

@INPROCEEDINGS{1dlineaharonov,
  author={Aharonov, Dorit and Gottesman, Daniel and Irani, Sandy and Kempe, Julia},
  booktitle={48th Annual IEEE Symposium on Foundations of Computer Science (FOCS'07)}, 
  title={The Power of Quantum Systems on a Line}, 
  year={2007},
  volume={},
  number={},
  pages={373-383},
  doi={10.1109/FOCS.2007.46}}

@misc{liuUnentangledStoquasticMerlinArthur2026,
  title = {Unentangled Stoquastic {{Merlin-Arthur}} Proof Systems: The Power of Unentanglement without Destructive Interference},
  shorttitle = {Unentangled Stoquastic {{Merlin-Arthur}} Proof Systems},
  author = {Liu, Yupan and Wu, Pei},
  year = 2026,
  month = apr,
  number = {arXiv:2604.27886},
  eprint = {2604.27886},
  primaryclass = {quant-ph},
  publisher = {arXiv},
  doi = {10.48550/arXiv.2604.27886},
  urldate = {2026-05-04},
  archiveprefix = {arXiv}
}

\appendix

\newpage
\section{Proof of \cref{thm:SepStoqSH_is_StoqMA(2)_hard} }\label{app:SepStoqSH}
\begin{proof}
As discussed, we follow the proof of \cite{CS12}.

    Consider a promise problem $\cL \in \StoqMA_{c,s}(2)$ and let $\cV = (n, n_w, n_0, n_+, U)$ be the corresponding verifier, where the circuit $U$ can be decomposed into $T$ unitaries $U= U_T\dots U_1 U_0$, where $U_0 = I$. We will assume that the verifier has a specific structure: the verifier initially receives the two proofs, runs SWAP tests between them and then performs the verification algorithm on the first proof. This can be done, without loss of generality, since generalized stoquastic verifiers can perform multi-qubit measurements and this version of verifier can simulate any other verifier of $\StoqMA(2)$ using the \cref{proc:stoqma(k)-to-stoqma(2)} and \cref{thm:gstoqma(k)=gstoqma(2)}.

    We will apply Kitaev's construction \cite{kitaevbook} of the circuit with the modification that we will add the output Hamiltonian as a small perturbation, as in \cite{BBT06}, so we will consider a Hamiltonian of the form:
    \begin{equation*}
        \widetilde{H} = H_\init + H_\prop +  \delta H_\out, \quad 0 < \delta \ll \Delta
    \end{equation*}
    where we follow the same definition for the Hamiltonians as in Kitaev's proof, and $\Delta$ is the spectral gap of $H_\init + H_\prop$, which is quantified below.
    \begin{lemma}(Claim 16 of \cite{CS12})
        The smallest eigenvalue of $H_\init + H_\prop$ is 0 and the second smallest eigenvalue is $\Delta = \frac{C}{(T+1)^3}$, where $C$ is some universal constant.
    \end{lemma}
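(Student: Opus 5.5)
The plan is the standard Kitaev spectral argument, carried out in the form used in \cite{kitaevbook} and reproduced in Claim 16 of \cite{CS12}. I would first pin down the terms. The clock runs over $t\in\{0,\dots,T\}$ and
\[
H_\prop=\sum_{t=1}^{T}\tfrac12\big(\ketbra{t}+\ketbra{t-1}-\ketbra{t}{t-1}\otimes U_t-\ketbra{t-1}{t}\otimes U_t^\dagger\big).
\]
The term $H_\init=\ketbra{0}_{\text{clock}}\otimes\Pi_{\text{bad}}$ penalizes, at time $0$, ancilla registers that are not in their prescribed states. Following \cite{BBT06}, this means penalizing $\ket{1}$ on the $\ket{0}$-ancillas and $\ket{-}$ on the $\ket{+}$-ancillas. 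Both terms are positive semidefinite, so $\lambda_{\min}\ge 0$.

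\textbf{Smallest eigenvalue.} Any history state
\[
\ket{\eta}=\frac{1}{\sqrt{T+1}}\sum_t \ket{t}\otimes U_t\cdots U_1\ket{\xi}\ket{0}^{\otimes n_0}\ket{+}^{\otimes n_+}
\]
is annihilated by both terms, so $\lambda_{\min}=0$. The kernel of $H_\init+H_\prop$ is exactly $\ker H_\init\cap\ker H_\prop$, which is the span of these history states.

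\textbf{Second eigenvalue, step 1.} I would apply the unitary $W=\sum_t\ketbra{t}\otimes U_t\cdots U_1$. This maps $H_\prop$ to $L\otimes I$, where $L$ is the (normalized) Laplacian of the path on $T+1$ vertices. Its spectrum is $\{1-\cos(\pi k/(T+1))\}_{k=0}^{T}$. Hence $\ker H_\prop$ consists of the uniform-clock states, and the gap of $H_\prop$ is $1-\cos(\pi/(T+1))\ge c_1/(T+1)^2$. Since $W$ fixes the $t=0$ slice, $W$ leaves $H_\init$ unchanged, and the nonzero eigenvalues of $H_\init$ are $\ge 1$.

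\textbf{Second eigenvalue, step 2.} I would then invoke Kitaev's geometric lemma. Let $A,B\succeq 0$ have nonzero eigenvalues at least $v$, and let $\theta$ be the angle between $\ker A$ and $\ker B$ measured on the orthogonal complement of their intersection. Then on that complement $A+B\ge 2v\sin^2(\theta/2)$. Here $v\ge c_1/(T+1)^2$.

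\textbf{Main obstacle.} The hard part is lower-bounding the angle. Take a unit vector $\ket{\eta}\in\ker H_\prop$ orthogonal to the common kernel. It has the form $\frac{1}{\sqrt{T+1}}\sum_t\ket{t}\otimes U_t\cdots U_1\ket{\phi}$ with $\ket{\phi}$ orthogonal to the valid-ancilla subspace, so $\Pi_{\text{bad}}\ket{\phi}=\ket{\phi}$. Then
\[
\ev{H_\init}{\eta}=\frac{1}{T+1},
\]
and therefore $\cos^2\theta\le 1-\frac{1}{T+1}$.

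The point to check carefully is that the kernel of $H_\init$, intersected with the kernel of $H_\prop$, is exactly the history states. One must also confirm that adding the initial SWAP-test layer of the $\StoqMA(2)$ verifier does not change this structure. It should not, since those gates are just more $U_t$ factors absorbed by $W$.

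Combining the two steps gives a gap of order
\[
\frac{c_1}{(T+1)^2}\cdot\frac{1}{T+1}=\frac{C}{(T+1)^3},
\]
as claimed.
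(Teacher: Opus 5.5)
Your proposal is correct and follows essentially the same route as the paper, which gives no proof of its own but imports Claim 16 from \cite{CS12}; that claim rests on the same Kitaev argument you give, combining the path-Laplacian gap $1-\cos(\pi/(T+1)) = \Omega(1/(T+1)^2)$ of $H_\prop$ with the geometric lemma and the angle bound $\sin^2\theta \ge 1/(T+1)$. As in the original, what you actually establish is a lower bound $\lambda_2 \ge C/(T+1)^3$ rather than equality, which is all the paper uses, and the conversion $1-\sqrt{1-1/(T+1)} \ge 1/(2(T+1))$ from $\cos\theta$ to $\sin^2(\theta/2)$ costs only a constant factor.
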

    To be able to use the perturbative approach, we will need to ensure that $\delta \ll \Delta$. For this reason, let us fix the following parameters:
    \begin{equation*}
        \alpha := \frac{(c-s)^2}{1024(T+1)^2}, \quad
        \delta := \frac{1}{\big(1-c+\frac{c-s}{4}\big)}\cdot \frac{\alpha C}{(T+1)^2} 
    \end{equation*}
    We are going to show the following:
    \begin{itemize}
        \item If $x\in \cL_\yes$, then there exists $\ket{\psi} = \ket{\psi_1}\otimes\ket{\psi_2}$, such that $\ev{\widetilde{H}}{\psi} \leq \frac{\delta}{T+1}(1-c)$.
        \item If $x\in \cL_\no$, then for all $\ket{\psi} = \ket{\psi_1}\otimes\ket{\psi_2}$, $\ev{\widetilde{H}}{\psi} \geq (1 + \frac{3}{4}\cdot \frac{c-s}{1-c}) \frac{\delta}{T+1}(1-c)$.
    \end{itemize}
    \paragraph{Completeness:} The two Merlins can send identical states $\ket{\psi_1} = \ket{\psi_2}$ and Arthur initially applies swap tests on the registers of Merlins. The history states that correspond to the witness state $\ket{\psi}$ are going to be of the form
    \begin{equation*}
        \ket*{\eta_\psi} = \frac{1}{\sqrt{T+1}}\sum_{t=0}^T \ket{t}_C \otimes U_tU_{t-1}\dots U_0 \big(\ket{0^m}_A \otimes \ket{\psi_1}_{P_1} \otimes \ket{\psi_2}_{P_2}\big),
    \end{equation*}
    where $C$ is the clock subsystem, $A$ is the ancilla subsystem, and $P_1$ and $P_2$ are the first and second proof subsystems. In \cite{CS12}, it is shown that $\ket{\eta_\psi}$ is a tensor product state between the spaces $(C \otimes A \otimes P_1)$ and $P_2$. Kitaev's proof \cite{kitaevbook} shows that
    \begin{equation*}
        \ev*{(H_\init + H_\prop)}{\eta_\psi} = 0
    \end{equation*}
    and
    \begin{equation*}
        \ev*{H_\out}{\eta_\psi} = \frac{1}{T+1}(1-\Pr[\cV \text{ accepts } \ket{\psi}]) = \frac{1-c}{T+1}.
    \end{equation*}
    So, in total we get
    \begin{equation*}
        \ev*{\widetilde{H}}{\eta_\psi} \leq \frac{\delta}{T+1}(1-c).
    \end{equation*}

    \paragraph{Soundness:} We follow the approach used in \cite{CS12}. An overview of the proof is as  follows:
    \begin{itemize}
        \item Assume that there exists a low-energy product state $\ket{\omega} = \ket{\omega_1}\otimes \ket{\omega_2}$, then $\ket{\omega}$ is close to a history state $\ket{\eta_\psi}$.
        \item If $\ket{\eta_\psi}$ is close to a product state, then the corresponding witness $\ket{\psi}$ is close to a product state.
        \item If $\ket{\psi}$ is close to a product state, then $\ket{\eta_\psi}$ will have high energy, which leads to contradiction.
    \end{itemize}
    We are now going to state some important lemmas that will be used in the proof.
    \begin{lemma}[Lemma 15 of \cite{CS12}]\label{lemma:lemma15} If $\ev{\widetilde{H}}{\omega} \leq \alpha\frac{C}{(T+1)^3}$, then there exists a history state $\ket{\eta}$ such that $|\braket{\omega}{\eta}|^2 \geq 1 - \alpha$. \footnote{We notice that the original proof actually holds for the Hamiltonian $H_\init + H_\prop + H_\out$. However, in their proof, the result does not depend on $H_\out$, so our $\delta$ factor does not affect its results.} 
    \end{lemma}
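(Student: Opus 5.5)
The plan is a direct spectral-gap argument, using the lemma just stated for $H_\init + H_\prop$ together with positivity of the output term. Write $H_0 := H_\init + H_\prop$ and let $P$ be the projector onto $\ker H_0$.

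\textbf{Step 1: drop the output term.} Since $H_\out$ is a projector, it is positive semidefinite, and $\delta > 0$. Hence
\begin{equation*}
\ev{H_0}{\omega} \leq \ev{\widetilde{H}}{\omega} \leq \alpha\frac{C}{(T+1)^3}.
\end{equation*}
This is exactly why the footnote observes that the $\delta$ factor plays no role.

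\textbf{Step 2: identify the kernel of $H_0$.} I will recall from Kitaev's construction \cite{kitaevbook} that $\ker H_0$ is precisely the set of history states $\ket{\eta_\psi}$. Here $\ket{\psi}$ ranges over all states of the proof registers $P_1P_2$, including entangled ones, with the ancillas initialized to $\ket{0^m}$. Concretely:
\begin{itemize}
\item $H_\prop$ is unitarily equivalent, via $W = \sum_t \ketbra{t}_C \otimes U_t\cdots U_0$, to a path-graph Laplacian on the clock tensored with the identity. Its kernel is therefore $\{\frac{1}{\sqrt{T+1}}\sum_t \ket{t}\otimes U_t\cdots U_0\ket{\xi}\}$ for arbitrary $\ket{\xi}$.
\item $H_\init$ then forces the ancilla part of $\ket{\xi}$ to be $\ket{0^m}$, while leaving the proof part free.
\end{itemize}
The map $\ket{\psi}\mapsto\ket{\eta_\psi}$ is linear and isometric. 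Consequently every normalized vector in $\ker H_0$ is itself a history state $\ket{\eta_\psi}$ for some normalized $\ket{\psi}$. By the lemma above, the smallest eigenvalue is $0$ and every other eigenvalue is at least $\Delta = C/(T+1)^3$.

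\textbf{Step 3: decompose and bound.} Write $\ket{\omega} = P\ket{\omega} + (I-P)\ket{\omega}$. The cross terms vanish because $H_0$ commutes with $P$ and $H_0P=0$. Therefore
\begin{equation*}
\ev{H_0}{\omega} = \bra{\omega}(I-P)H_0(I-P)\ket{\omega} \geq \Delta\,\|(I-P)\ket{\omega}\|^2.
\end{equation*}
Combining with Step 1 gives $\|(I-P)\ket{\omega}\|^2 \leq \alpha$, and hence $\|P\ket{\omega}\|^2 \geq 1-\alpha$. If $\alpha \geq 1$ the statement is trivial, so we may assume $P\ket{\omega}\neq 0$. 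Set $\ket{\eta} := P\ket{\omega}/\|P\ket{\omega}\|$. By Step 2 this is a history state, and
\begin{equation*}
|\braket{\omega}{\eta}|^2 = \|P\ket{\omega}\|^2 \geq 1-\alpha.
\end{equation*}

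\textbf{Main obstacle.} The only nontrivial point is Step 2: the exact characterization of $\ker H_0$ as the history states, and the gap value $\Delta$. Both are supplied by the preceding lemma (Claim 16 of \cite{CS12}) and Kitaev's analysis. One must only check that the SWAP-test gates at the start of the circuit do not change this structure, which holds because the argument uses nothing about the gates beyond their unitarity. Note also that $\ket{\eta}$ may correspond to an entangled witness $\ket{\psi}$. That is fine here, since closeness to a product state is addressed only in the subsequent steps of the soundness argument.
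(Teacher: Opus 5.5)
Your proof is correct and is essentially the standard argument behind the paper's citation of Lemma 15 of \cite{CS12}. You discard $\delta H_\out \succeq 0$, which is precisely the footnote's observation. You then project $\ket{\omega}$ onto $\ker(H_\init + H_\prop)$, which is exactly the set of normalized history states, and use the gap $\Delta = C/(T+1)^3$ to bound the weight outside this kernel by $\alpha$.

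The paper gives no proof beyond the citation and footnote, so your write-up simply supplies these standard details. One cosmetic point: for a stoquastic verifier, $H_\init$ also pins the $\ket{+}$-ancillas, but this does not change the argument.
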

    
    \begin{lemma}[Lemma 17 of \cite{CS12}]\label{lemma:lemma17} Let $\ket{\psi}$ a witness state and $\ket{\eta_\psi}$ a corresponding history state. If there exist two states $\ket{\psi_1} \in C \otimes A \otimes P_1$ and $\ket{\psi_2} \in P_2$ such that $|\bra{\eta_\psi}(\ket{\psi_1}\otimes \ket{\psi_2})|^2 \geq 1-\varepsilon$, then there exists a product state $\ket{L} \in P_1$ such that $$|\bra{\psi}(\ket{L}\otimes\ket{\psi_2})|^2 \geq 1 - \varepsilon(T+1).$$
    \end{lemma}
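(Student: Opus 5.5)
Lemma 17 of \cite{CS12} is a statement about Kitaev history states for a verifier whose first operations are SWAP tests between the registers of $P_1$ and $P_2$. My plan is to follow the structure of the original argument and check that it uses nothing beyond the form of $\ket{\eta_\psi}$.

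First, write
\begin{equation*}
\ket{\eta_\psi}=\frac{1}{\sqrt{T+1}}\sum_{t=0}^T\ket{t}_C\otimes U_t\cdots U_0\big(\ket{0^m}_A\otimes\ket{\psi}_{P_1P_2}\big).
\end{equation*}
Now use the hypothesis $|\bra{\eta_\psi}(\ket{\psi_1}\otimes\ket{\psi_2})|^2\ge 1-\varepsilon$. The key observation is that the $t=0$ component of the history state is $\ket{0}_C\otimes\ket{0^m}_A\otimes\ket{\psi}$. It has weight exactly $1/(T+1)$, and on it no gate has yet acted on $P_2$. So I would project both $\ket{\eta_\psi}$ and the product state onto the clock value $\ket{0}_C$, then on $A$ onto $\ket{0^m}$.

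Formally, let $\Pi_0=\ketbra{0}_C\otimes\ketbra{0^m}_A$. Since $\Pi_0$ acts only on the $C\otimes A\otimes P_1$ side, $\Pi_0(\ket{\psi_1}\otimes\ket{\psi_2})=(\Pi_0\ket{\psi_1})\otimes\ket{\psi_2}$ is still a product across $P_1$ versus $P_2$. Define the normalized state $\ket{L}\propto(\bra{0}_C\bra{0^m}_A\ket{\psi_1})\in P_1$.

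The main step is to convert closeness of the global states into closeness of the $t=0$ branches, with a loss of at most a factor $T+1$. I would do this with a projector/trace-distance argument. Closeness $1-\varepsilon$ in fidelity gives $\big\|\ket{\eta_\psi}-e^{i\theta}\ket{\psi_1}\ket{\psi_2}\big\|^2\le 2-2\sqrt{1-\varepsilon}\le 2\varepsilon$ for a suitable phase $\theta$. Applying $\Pi_0$ can only shrink norms, so
\begin{equation*}
\big\|\tfrac{1}{\sqrt{T+1}}\ket{0,0^m}\ket{\psi}-e^{i\theta}\Pi_0\ket{\psi_1}\ket{\psi_2}\big\|^2\le 2\varepsilon.
\end{equation*}
Rescaling by $\sqrt{T+1}$ compares the unit vector $\ket{\psi}$ with a product vector $\ket{v}\otimes\ket{\psi_2}$ of unknown norm, at squared distance at most $2\varepsilon(T+1)$. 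Optimizing over the norm of $\ket{v}$ and normalizing then gives $|\bra{\psi}(\ket{L}\otimes\ket{\psi_2})|^2\ge 1-O(\varepsilon(T+1))$.

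The obstacle I expect is getting the exact constant $1-\varepsilon(T+1)$ rather than $1-2\varepsilon(T+1)$. To recover it, I would instead work with overlaps directly. Write $\ket{\eta_\psi}=\frac{1}{\sqrt{T+1}}\ket{0,0^m,\psi}+\sqrt{\frac{T}{T+1}}\ket{\eta'}$ with $\ket{\eta'}$ orthogonal to the range of $\Pi_0$. Decompose the product state similarly as $\sqrt{p}\,\ket{0,0^m}\ket{L}\ket{\psi_2}+\sqrt{1-p}\,\ket{r}$. Cauchy--Schwarz on the two orthogonal pieces gives
\begin{equation*}
\sqrt{1-\varepsilon}\le\sqrt{\tfrac{p}{T+1}}\,\gamma+\sqrt{\tfrac{T(1-p)}{T+1}},\qquad \gamma=|\bra{\psi}(\ket{L}\ket{\psi_2})|.
\end{equation*}
Maximizing the right-hand side over $p$ yields $1-\varepsilon\le\frac{\gamma^2+T}{T+1}$, that is, $\gamma^2\ge 1-\varepsilon(T+1)$, exactly as claimed.

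Notably, this argument does not even use the SWAP-test structure or stoquasticity. Those assumptions only matter later in the soundness argument, so I expect the lemma to transfer verbatim from \cite{CS12}.
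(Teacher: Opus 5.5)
Your proof is correct, and the second (overlap) version gives exactly the stated constant. Since $\Pi_0=\ketbra{0}_C\otimes\ketbra{0^m}_A$ acts only on the first tensor factor, both $\ket{\eta_\psi}$ and $\ket{\psi_1}\otimes\ket{\psi_2}$ split into a component in the range of $\Pi_0$ and one in the range of $I-\Pi_0$. The cross terms vanish; orthogonality of the clock states is what places the $t\ge 1$ part of $\ket{\eta_\psi}$ in the range of $I-\Pi_0$. Cauchy--Schwarz in $(\sqrt{p},\sqrt{1-p})$ then yields $1-\varepsilon\le(\gamma^2+T)/(T+1)$.

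Two small loose ends are harmless. If $p=0$, the same inequality forces $1-\varepsilon(T+1)\le 0$, so any $\ket{L}$ works. For stoquastic verifiers the ancillas start in $\ket{0}^{\otimes n_0}\ket{+}^{\otimes n_+}$, and you would simply project onto that state instead of $\ket{0^m}$. Your first, distance-based route loses a constant factor and can be dropped.

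The paper gives no proof of this lemma; it imports it verbatim from \cite{CS12}. That import implicitly relies on the observation you make explicit. The statement uses only the form of the history state (orthonormal clock states and $U_0=I$), and nothing about the SWAP-test preamble or the stoquastic gate set. Your argument is therefore a self-contained justification of that import.

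An equivalent formulation of the same idea uses marginals. One has $\max_{\ket{\phi}}|\bra{\eta_\psi}(\ket{\phi}\otimes\ket{\psi_2})|^2=\bra{\psi_2}\rho_{P_2}\ket{\psi_2}$, where $\rho_{P_2}=\frac{1}{T+1}\sum_t\rho^{(t)}_{P_2}$ is the average of the $P_2$-marginals of the time slices. Each summand is at most $1$, so the $t=0$ summand $\bra{\psi_2}\Tr_{P_1}(\ketbra{\psi})\ket{\psi_2}=\max_{\ket{L}}|\bra{\psi}(\ket{L}\otimes\ket{\psi_2})|^2$ is at least $1-(T+1)\varepsilon$. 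This version chooses $\ket{L}$ optimally rather than extracting it from $\ket{\psi_1}$, but it is the same inequality your Cauchy--Schwarz step produces.
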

    \begin{lemma}[based on Lemma 18 of \cite{CS12}] \label{lemma:lemma18}Consider a history state $\ket{\eta_\psi}$ with the corresponding witness $\ket{\psi}$. If the soundness parameter is $s$, in the no instance, if $|\bra{\psi}(\ket{\psi_1}\otimes\ket{\psi_2})|^2 \geq 1- \varepsilon$, then $\ev{\widetilde{H}}{\eta_\psi} \geq \frac{\delta}{T+1}(1-s-2\sqrt{\varepsilon})$.
    \end{lemma}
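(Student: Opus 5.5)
We prove \cref{lemma:lemma18}. The plan is to write the history-state energy as an exact multiple of the rejection probability and then use the closeness of the witness to a product state to control that probability. Since $\ket{\eta_\psi}$ is a genuine history state, Kitaev's analysis gives $\ev*{(H_\init + H_\prop)}{\eta_\psi} = 0$. The output term only sees the last clock step, so
\begin{equation*}
    \ev*{\widetilde{H}}{\eta_\psi} = \delta\,\ev*{H_\out}{\eta_\psi} = \frac{\delta}{T+1}\bigl(1-\Pr[\cV \text{ accepts } \ket{\psi}]\bigr).
\end{equation*}
It therefore suffices to show $\Pr[\cV \text{ accepts } \ket{\psi}] \leq s + 2\sqrt{\varepsilon}$.

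Set $\ket{\phi} = \ket{\psi_1}\otimes\ket{\psi_2}$. This is a product witness of the form allowed in $\StoqMA(2)$, so the soundness condition applies directly: $\Pr[\cV \text{ accepts } \ket{\phi}] \leq s$. Both acceptance probabilities can be written as $\ev{\Pi}{\cdot}$ with $\Pi = U^\dagger \Pi_\out U$, with the ancillas (in $\ket{0}$ and $\ket{+}$) appended to each state. Appending the same fixed ancilla state does not change the overlap, so the overlap between $\ket{\psi}$ and $\ket{\phi}$ is still at least $1-\varepsilon$ in squared absolute value. Writing $\varepsilon' \le \varepsilon$ for the exact deficit, \cref{lemma:close-overlap} gives
\begin{equation*}
    \bigl|\Pr[\cV \text{ accepts } \ket{\psi}] - \Pr[\cV \text{ accepts } \ket{\phi}]\bigr| \leq \sqrt{\varepsilon}.
\end{equation*}
Hence the acceptance probability on $\ket{\psi}$ is at most $s + \sqrt{\varepsilon} \le s + 2\sqrt{\varepsilon}$, which is the claimed bound; the factor $2$ is slack.

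The main obstacle is bookkeeping of which verifier is being run. The Hamiltonian is built from the structured verifier, which first performs swap tests and then the original check. Its soundness $s$ must be the parameter for product witnesses of that structured circuit, as fixed at the start of the appendix via \cref{proc:stoqma(k)-to-stoqma(2)}, so the bound $\Pr[\cV \text{ accepts } \ket{\phi}] \le s$ is legitimate.

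A second point is that $\Pi_\out$ may be a generalized multi-qubit projector. It is nonetheless a projector, so \cref{lemma:close-overlap} still applies.

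If one instead reads $\ket{\psi_1}$ as living in $C\otimes A\otimes P_1$, as in \cref{lemma:lemma17}, the argument needs one more step. First use \cref{lemma:lemma17} to replace it by some $\ket{L}\in P_1$, and then apply the argument above to $\ket{L}\otimes\ket{\psi_2}$. The extra slack of $\sqrt{\varepsilon}$ absorbs the triangle inequality incurred in doing so.
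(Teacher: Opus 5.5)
Your proof is correct and follows the paper's own argument. You kill $H_\init+H_\prop$ on the history state, write $\delta\ev{H_\out}{\eta_\psi}$ as $\frac{\delta}{T+1}$ times the rejection probability, and transfer the soundness bound from $\ket{\psi_1}\otimes\ket{\psi_2}$ to $\ket{\psi}$ via \cref{lemma:close-overlap}; you are right that this lemma already gives $\sqrt{\varepsilon}$, so the paper's factor $2$ is slack.

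Your final paragraph, however, is unnecessary and quantitatively off. Going through \cref{lemma:lemma17} only yields overlap $1-\varepsilon(T+1)$, so the loss is $\sqrt{\varepsilon(T+1)}$, which a spare $\sqrt{\varepsilon}$ cannot absorb. That is why the paper instead applies this lemma with $\varepsilon=\alpha(T+1)$.
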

    \begin{proof}[Proof of \cref{lemma:lemma18}]
    For every state $\ket{\psi}$, $(H_\init + H_\prop)\ket{\eta_\psi} = 0$, so $\ev{\widetilde{H}}{\eta_\psi} = \delta \ev{H_\out}{\eta_\psi}$.   
    Like in the original proof, we have
    \begin{equation*}
        \ev{H_\out}{\eta_\psi} = \frac{1}{T+1} \bra{0^m}\otimes\bra{\psi} U_1^\dagger \dots U_T^\dagger \Pi_\text{reject} U_T \dots U_1 \ket{0^m}\otimes \ket{\psi},
    \end{equation*}
    so 
    \begin{equation*}
        \ev{\widetilde{H}}{\eta_\psi} = \frac{\delta}{T+1} \cdot \Pr[\cV \text{ rejects } \ket{\psi}].
    \end{equation*}

    From \cref{lemma:close-overlap}, since $|\bra{\psi}(\ket{\psi_1}\otimes\ket{\psi_2})|^2 \geq 1- \varepsilon$, we have that
    \begin{equation*}
        \Pr[\cV \text{ rejects } \ket{\psi}] \geq \Pr[\cV \text{ rejects } \ket{\psi_1}\otimes\ket{\psi_2}] - 2\sqrt{\varepsilon} \geq 1-s-2\sqrt{\varepsilon}.
    \end{equation*}
    Substituting back, we get:
    \begin{equation*}
        \ev{\widetilde{H}}{\eta_\psi} \geq \frac{\delta}{T+1}(1-s-2\sqrt{\varepsilon}). \qedhere
    \end{equation*}
    \end{proof}

    \medspace 
    We will now combine the lemmas above to prove soundness. Assume that there exists a state $\ket{\omega} = \ket{\omega_1}\otimes\ket{\omega_2}$, with energy below the promise, i.e., assume that
    \begin{equation*}
        \ev{\widetilde{H}}{\omega} \leq
        \bigg(1+\frac{1}{4}\cdot\frac{c-s}{1-c}\bigg) \frac{\delta}{T+1}(1-c).
    \end{equation*}
    From the definition of $\delta$, we get that
    \begin{equation*}
        \ev{\widetilde{H}}{\omega} \leq \frac{\alpha C}{(T+1)^3},
    \end{equation*}
    so from \cref{lemma:lemma15}, there exists a state $\ket{\eta_\psi}$ such that $ |\braket*{\eta_\psi}{\omega}|^2 \geq 1 - \alpha$. From \cref{lemma:lemma17}, there exists some state $\ket{\phi} = \ket{\phi_1} \otimes \ket{\phi_2}$ such that $$|\braket{\psi}{\phi}|^2 \geq 1 - \alpha (T+1),$$ and, thus, from \cref{lemma:lemma18} with $\varepsilon = \alpha (T+1)$, we get 
    \begin{equation*}
        \ev{\widetilde{H}}{\eta_\psi} \geq \frac{\delta}{T+1}\Big[1-s-2\sqrt{\alpha(T+1)}\Big].
    \end{equation*}

    We will now find a lower bound of $\ev{\widetilde{H}}{\omega}$ using the overlap with $\ket{\eta_\psi}$. Indeed, we have that 
    \begin{equation*}
        \ket{\omega} = \sqrt{1-p}\ket{\eta_\psi} + \sqrt{p}\ket*{\eta_\psi^\perp},
    \end{equation*}
    for some $\ket*{\eta_\psi^\perp}$ such that $\braket*{\eta_\psi}{\eta_\psi^\perp} = 0$ and $0 \leq p \leq \alpha$. Then,
    \begin{equation*}
        \ev{\widetilde{H}}{\omega} = (1-p) \ev{\widetilde{H}}{\eta_\psi} + p\ev*{\widetilde{H}}{\eta_\psi^\perp} + 2\sqrt{(1-p)p}\, \text{Re}\big\{\mel*{\eta_\psi}{\widetilde{H}}{\eta_\psi^\perp}\big\}.
    \end{equation*}
    Since $\widetilde H\succeq 0$, the second term is non-negative. 
    Moreover, as $(H_{\init}+H_{\prop})\ket{\eta_\psi}=0$ and $0 \preceq H_{\out} \preceq I$, we have
    \begin{equation*} 
        \Re\{\mel*{\eta_\psi}{\widetilde H}{\eta_\psi^\perp}\}
        =\delta\,\Re\{\mel*{\eta_\psi}{H_{\out}}{\eta_\psi^\perp}\}\ge -\delta.
    \end{equation*}
    Using that $1-p \geq 1-\alpha$ and $\sqrt{(1-p)p} \leq \sqrt{p} \leq \sqrt{\alpha}$, we obtain
    \begin{equation*}
        \ev{\widetilde{H}}{\omega} \geq (1-\alpha) \ev{\widetilde{H}}{\eta_\psi} - 2\delta \sqrt{\alpha}.
    \end{equation*}
    Substituting the bound from \cref{lemma:lemma18}, we conclude that
    \begin{equation*}
        \ev{\widetilde{H}}{\omega} \geq \frac{\delta}{T+1}\Big[(1-\alpha)\Big(1-s-2\sqrt{\alpha(T+1)}\Big) - 2\sqrt{\alpha}(T+1)\Big]
    \end{equation*}
    From the definition of $\alpha$, we get
    \begin{equation*}
        2\sqrt{\alpha (T+1)} = \frac{c-s}{16\sqrt{T+1}} \leq \frac{c-s}{16}, \quad 2\sqrt{\alpha}(T+1) = \frac{c-s}{16}.
    \end{equation*}
    Using the fact that $1-s = (1-c) + (c-s)$, we expand:
    \begin{equation*}
        \ev{\widetilde{H}}{\omega} \geq \frac{\delta}{T+1}\Big[(1-c) - \alpha(1-c) + (c-s) \Big(1-\alpha  - \frac{1}{8}\Big)\Big].
    \end{equation*}
    Since $\alpha(1-c) \leq \frac{(c-s)}{1024}$ and $\alpha \leq \frac{1}{1024}$, a straightforward calculation yields
    \begin{equation*}
        \ev{\widetilde{H}}{\omega} \geq \frac{\delta}{T+1}\Big[(1-c) + \frac{3}{4}(c-s)\Big] 
        = \bigg(1 + \frac{3}{4}\cdot \frac{c-s}{1-c}\bigg) \frac{\delta}{T+1}(1-c),
    \end{equation*}
contradicting the initial assumption. This completes the proof of the soundness property. 
\end{proof}

\newpage
\section{Proof of \cref{thm:gstoqma(k)=gstoqma(2)}}\label[app]{app:gstoqma(k)=gstoqma(2)}

We first introduce some notation. Assume that Arthur receives two states $\ket{\psi_1}, \ket{\psi_2} \in \C^{d_1} \otimes \cdots \otimes \C^{d_k}$ on $k$ registers. Let $\rho := \ketbra{\psi_1}$ and $\sigma : = \ketbra{\psi_2}$.  For some subset of registers $S\subseteq [k]$, we denote $\rho_S := \Tr_{\overline{S}} (\rho)$, where $\overline{S} = [k]\, \backslash\, S$.

Let $\SWAP_i$ be the $\SWAP$ operator that acts on the $i$-th register of the two copies and, for any subset $S \subseteq [n]$, we denote the swap operator that acts on $S$ as $\SWAP_S:= \Pi_{i\in S} \SWAP_i$. 

Let $\PT_k(\ket{\psi_1}, \ket{\psi_2})$ (resp. $\PT_k(\rho, \sigma)$) be the acceptance probability of the product test on $\ket{\psi_1}$ and $\ket{\psi_2}$ (resp. on $\rho$ and $\sigma$), and we denote $\PT_k(\ket{\psi}) := \PT_k(\ket{\psi}, \ket{\psi})$ (resp. $\PT(\rho):=\PT_k(\rho,\rho)$). 

\medspace 

The improved analysis of the product test of \cite{SW22} achieves the following.
\begin{theorem}[Product test upper bound from \cite{SW22}]\label{thm:prod-test-sw-upper-bound} 
Let $\ket{\psi} \in \C^{d_1}\otimes \dots \otimes \C^{d_k}$ and  $1-\varepsilon = \max\{|\braket{\psi}{\phi}|^2 : \ket{\phi} \in \C^{d_1}\otimes \dots \otimes \C^{d_k}\}$.
For $k\geq 1$, 
\begin{equation*}
    \PT_k(\ket{\psi}) \leq 
    \begin{cases}
        1-\varepsilon+\varepsilon^2 & \text{ if } \varepsilon \leq \frac{1}{2},\\
        1 - \frac{2}{3}\varepsilon + \frac{1}{3}\varepsilon^2 &\text{otherwise}.
    \end{cases}
\end{equation*}
\end{theorem}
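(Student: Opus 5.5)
This is the Soleimanifar--Wright product-test bound (Theorem 1.4 there), so the plan reproduces their argument in the notation just introduced.

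\textbf{Step 1: express the acceptance probability through reduced states.} Each swap test on register $i$ accepts with the projector $(I+\SWAP_i)/2$. The tests act on disjoint registers, so the overall accepting projector is
\[
\prod_i \frac{I+\SWAP_i}{2} = 2^{-k}\sum_{S\subseteq[k]} \SWAP_S .
\]
Since $\Tr[\SWAP_S(\rho\otimes\rho)] = \Tr(\rho_S^2)$, this gives the exact identity
\[
\PT_k(\ket{\psi}) = \frac{1}{2^k}\sum_{S\subseteq [k]} \Tr(\rho_S^2).
\]
The task is therefore to bound the average subsystem purity in terms of the best product overlap $1-\varepsilon$.

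\textbf{Step 2: Schmidt-type decomposition around the optimal product state.} Let $\ket{\phi}=\ket{a_1}\otimes\cdots\otimes\ket{a_k}$ attain the maximum, and write
\[
\ket{\psi} = \sqrt{1-\varepsilon}\,\ket{\phi} + \sqrt{\varepsilon}\,\ket{\phi^\perp}.
\]
Stationarity of $\ket{\phi}$ under local variations gives first-order conditions: $\ket{\phi^\perp}$ has no component in which exactly one register is rotated off $\ket{a_i}$ while all others stay on $\ket{a_j}$. Accordingly, decompose $\ket{\phi^\perp}$ by the Hamming weight of the set of registers orthogonal to $\ket{a_i}$; only weights $\geq 2$ appear. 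I would then compute $\Tr(\rho_S^2)$ by expanding $\rho_S$ in the blocks $\ket{a}\bra{a}$ versus complement on each register.

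\textbf{Step 3: bound each purity and average over $S$.} The average over $S$ of the purity equals
\[
\mathbb{E}_{S}\,\Tr(\rho_S^2) = \Pr\bigl[\text{the } 2k \text{ registers, with a random subset of pairs swapped, remain equal}\bigr],
\]
which can be rewritten as $\bra{\psi}\bra{\psi}\Pi_{\mathrm{sym}}^{\otimes k}\ket{\psi}\ket{\psi}$. My plan is to bound this quadratic form. Write $\ket{\psi}^{\otimes 2}$ as the sum of three pieces, namely $(1-\varepsilon)\ket{\phi\phi}$, the cross terms, and $\varepsilon\ket{\phi^\perp\phi^\perp}$. For each piece, bound its weight under the symmetric projector:
\begin{itemize}
\item The cross terms involve at least two flipped registers, so the projector cuts their contribution by a factor of roughly $1/2$.
\item The $\ket{\phi^\perp\phi^\perp}$ piece is bounded using the maximality of $1-\varepsilon$: every product state has overlap at most $1-\varepsilon$ with $\ket{\psi}$.
\end{itemize}
This should give $\PT \le (1-\varepsilon)^2 + \varepsilon(1-\varepsilon)\cdot 1 + \varepsilon^2 \cdot(\ldots)$. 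Matching the target $1-\varepsilon+\varepsilon^2$ requires the last coefficient to be at most $1$, and the cross term is then where the $-\varepsilon$ comes from. For $\varepsilon > 1/2$, I would instead use a convexity/interpolation argument: the bound $1-\tfrac23\varepsilon+\tfrac13\varepsilon^2$ is the tangent continuation of the first bound at $\varepsilon = 1/2$. Alternatively, I would use the bound $\Tr(\rho_S^2)\le 1-\varepsilon$ for every nonempty proper $S$, obtained via the maximal-overlap characterization, and average that.

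\textbf{Main obstacle.} The hard step is the cross-term bound. Controlling $\ket{\phi}\ket{\phi^\perp}$ against the symmetric projector needs the first-order optimality conditions used in just the right way, and it also has to handle the case where $\ket{\phi^\perp}$ is close to a product state on a different basis. If my direct expansion fails, I would fall back on citing \cite{SW22}, since the statement is imported from there.
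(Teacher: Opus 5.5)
The paper gives no proof of this statement. It quotes it from \cite{SW22} and uses it as a black box, so your fallback of citing \cite{SW22} is exactly what the paper does. As a self-contained argument, however, your sketch has a genuine gap, and it is not where you locate it.

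Write $\ket\psi\ket\psi=a+b+c$ with $a=(1-\varepsilon)\ket{\phi}\ket{\phi}$, $b=\sqrt{\varepsilon(1-\varepsilon)}\,(\ket{\phi}\ket{\phi^\perp}+\ket{\phi^\perp}\ket{\phi})$, $c=\varepsilon\ket{\phi^\perp}\ket{\phi^\perp}$, and $P=2^{-k}\sum_S\SWAP_S$.

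\textbf{The easy parts.} $P$ fixes $\ket\phi\ket\phi$, and $a$ is orthogonal to $b$ and $c$, so the $a$-piece decouples. The cross piece, which you call the main obstacle, is routine. Expand $\ket{\phi^\perp}$ by flip sets $F$ with weights $p_F$; then $\mel{b}{P}{b}=4\varepsilon(1-\varepsilon)\sum_F p_F2^{-|F|}\le\varepsilon(1-\varepsilon)$, directly from the first-order condition $|F|\ge 2$. The bound $\mel{c}{P}{c}\le\varepsilon^2$ is trivial and needs no maximality.

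\textbf{The missing term.} $\PT_k(\ket\psi)=\|P(a+b+c)\|^2$ is the norm of a sum, not a sum of weights. Your accounting drops the interference term $2\,\mathrm{Re}\,\mel{b}{P}{c}$. This term is cubic in the amplitudes of $\ket{\phi^\perp}$: it comes from triples $x,y,x\sqcup y$ whose flip sets are disjoint. The first-order conditions say nothing about it, and it can be positive. For example, take $\ket\psi=\ket\chi\otimes\ket\chi$ with $\ket\chi=\sqrt{0.9}\ket{00}+\sqrt{0.1}\ket{11}$. The optimal product state is $\ket{0000}$ and $\varepsilon=0.19$. The three diagonal pieces give $0.6561+0.147825+0.020125$, while $\PT_4(\ket\psi)=0.91^2=0.8281$. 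So the interference term equals $+0.00405$ and cannot be dropped.

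Bounding it by Cauchy--Schwarz only yields $1-\varepsilon+\varepsilon^2+2\varepsilon^{3/2}\sqrt{1-\varepsilon}$. Removing the $\varepsilon^{3/2}$ term requires showing that this interference is always absorbed by the slack in $\mel{b}{P}{b}$ and $\mel{c}{P}{c}$, or some other genuinely new idea. That is what \cite{SW22} provides and your plan does not.

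\textbf{The $\varepsilon>\frac12$ branch.} This branch is also unsupported.
\begin{itemize}
\item The two expressions agree at $\varepsilon=\frac12$, where both equal $\frac34$. Their derivatives there are $0$ and $-\frac13$, so the second is not a tangent continuation of the first.
\item Nothing you establish (no concavity or monotonicity in $\varepsilon$ of the worst-case acceptance probability) justifies an interpolation.
\item Your alternative, $\Tr(\rho_S^2)\le 1-\varepsilon$ for every nonempty proper $S$, is false. The maximal-overlap characterization gives the reverse inequality, $\lambda_{\max}(\rho_S)\ge 1-\varepsilon$. For $\ket{\Phi^+}_{12}\otimes\ket{0}_3$ one has $\varepsilon=\frac12$ but $\Tr(\rho_{\{3\}}^2)=1$. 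Averaging your claimed bound would give $\PT_3(\ket\psi)\le\frac58$, whereas the true value is $\frac34$.
\end{itemize}
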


We then provide a useful lemma, based on \cite{HM10}, and its proof for the sake of completeness.

\begin{lemma}[Lemma 2 and Lemma 5 of \cite{HM10} (rephrased)] \label{lemma:prob-acc-pt}
    Let $\rho, \sigma$ two mixed states on $k$ registers. Then, the following holds for the probability of acceptance of the product test: 
    \begin{equation}
        \PT_k(\rho, \sigma) \leq \frac{\PT_k(\rho) + \PT_k(\sigma)}{2}
    \end{equation}
\end{lemma}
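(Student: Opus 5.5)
We prove the inequality by writing the acceptance probability of the product test as the expectation of a Hermitian operator and then applying an operator-level Cauchy--Schwarz inequality. By \cref{lemma:swap-test} applied register by register, the swap test on the $i$-th subsystem accepts with the projector $\frac{1}{2}(I + \SWAP_i)$. All $k$ tests act on disjoint pairs of registers, so these projectors commute. Hence the product test accepts with the projector
\begin{equation*}
    \Pi_{\mathrm{PT}} = \prod_{i=1}^k \frac{I+\SWAP_i}{2} = \frac{1}{2^k}\sum_{S\subseteq[k]} \SWAP_S ,
\end{equation*}
and therefore
\begin{equation*}
    \PT_k(\rho,\sigma) = \frac{1}{2^k}\sum_{S\subseteq [k]} \Tr\big((\rho\otimes\sigma)\,\SWAP_S\big).
\end{equation*}

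The next step is to evaluate each term. We use the standard swap-trick identity $\Tr((A\otimes B)\SWAP) = \Tr(AB)$. First trace out the registers in $\overline{S}$ on both copies, on which $\SWAP_S$ acts as the identity. This gives
\begin{equation*}
    \Tr\big((\rho\otimes\sigma)\SWAP_S\big) = \Tr(\rho_S\,\sigma_S).
\end{equation*}
Similarly, $\PT_k(\rho) = 2^{-k}\sum_S \Tr(\rho_S^2)$ and $\PT_k(\sigma) = 2^{-k}\sum_S \Tr(\sigma_S^2)$.

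We now compare term by term. Fix $S$. Since $\rho_S$ and $\sigma_S$ are Hermitian, the Hilbert--Schmidt Cauchy--Schwarz inequality followed by AM--GM gives
\begin{equation*}
    \Tr(\rho_S\sigma_S) \le \sqrt{\Tr(\rho_S^2)\Tr(\sigma_S^2)} \le \frac{\Tr(\rho_S^2)+\Tr(\sigma_S^2)}{2}.
\end{equation*}
A more direct route is $\Tr((\rho_S-\sigma_S)^2)\ge 0$, which rearranges to the same bound. Averaging over all $S\subseteq[k]$ with weight $2^{-k}$ yields $\PT_k(\rho,\sigma) \le \frac{1}{2}(\PT_k(\rho)+\PT_k(\sigma))$, as claimed.

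The only step that needs care is justifying the expansion of $\Pi_{\mathrm{PT}}$ and the identity for the partial swap. This rests on two facts: the individual swap operators act on disjoint pairs of registers and hence commute, and for a product operator the trace against $\SWAP_S$ factorizes after taking the partial trace over $\overline{S}$. Once these two facts are in place, the rest is routine. The argument holds for arbitrary mixed $\rho,\sigma$, so no purity assumption is needed.
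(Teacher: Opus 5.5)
Your proposal is correct and follows the paper's proof essentially step for step: expand the product-test projector as $2^{-k}\sum_{S\subseteq[k]}\SWAP_S$, reduce each term to $\Tr(\rho_S\sigma_S)$, and bound it by Cauchy--Schwarz followed by AM--GM. Your remark that $\Tr((\rho_S-\sigma_S)^2)\ge 0$ gives the per-term bound directly is a harmless, slightly cleaner shortcut to the same inequality.
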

\begin{proof}
    For the product test, the acceptance POVM is 
    \begin{equation*}
        \Pi_{\PT} = \bigotimes_{i=1}^k \frac{I + \SWAP_i}{2}.
    \end{equation*}
    Expanding the tensor product this can be rewritten as
    \begin{equation*}
        \Pi_{\PT} = \frac{1}{2^k} \sum_{S \subseteq [k]} \SWAP_S.
    \end{equation*}
    Hence,
    \begin{equation*}
        \PT_k(\rho, \sigma) = \Tr(\Pi_{\PT}\, \rho \otimes \sigma) = \frac{1}{2^k} \sum_{S \subseteq [k]} \Tr(\rho_S \sigma_S).
    \end{equation*}
    Finally, observe that
    \begin{align*}
        \PT_k(\rho,\sigma) = \frac{1}{2^k} \sum_{S \subseteq [k]} \Tr(\rho_S \sigma_S) 
        &\leq \frac{1}{2^k} \sum_{S\subseteq [k]} \sqrt{\Tr(\rho_S^2)} \sqrt{\Tr(\sigma_S^2)}\\
        &\leq \frac{1}{2^k}\sum_{S \subseteq [k]} \frac{\Tr(\rho_S^2) + \Tr(\sigma_S^2)}{2} = \frac{\PT_k(\rho) + \PT_k(\sigma)}{2}.
    \end{align*}
\end{proof}

\newpage
We also provide the following lemma.
\begin{lemma}\label{lemma:helper-lemma-amgm}
    Let $x,y,z \geq 0$ with $x+y=z$ and $0\leq z\leq \frac{1}{2}$. Then, 
    \begin{equation*}
        4(x^2 + y^2) - 8(x^4+y^4) \geq 2z^2 - z^4.
    \end{equation*}
\end{lemma}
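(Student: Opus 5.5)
The plan is a direct algebraic verification: parametrize the pair $(x,y)$ around its midpoint so that the constraint $x+y=z$ is built in, expand both sides, and show that the difference is a manifestly nonnegative expression under the hypothesis $z\le \frac12$. No earlier result from the paper is needed.

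\textbf{Step 1: parametrize.} Write $x = \frac{z}{2}+u$ and $y=\frac{z}{2}-u$. Since $x,y\ge 0$, this forces $|u|\le \frac z2$, i.e.\ $u^2\le \frac{z^2}{4}$. This substitution kills all odd powers of $u$, and we get
\begin{equation*}
x^2+y^2 = \frac{z^2}{2}+2u^2, \qquad x^4+y^4 = \frac{z^4}{8} + 3z^2u^2 + 2u^4,
\end{equation*}
where the second identity is $2\big((z/2)^4 + 6(z/2)^2u^2 + u^4\big)$.

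\textbf{Step 2: expand the left-hand side.} Substituting these into $4(x^2+y^2)-8(x^4+y^4)$ gives
\begin{equation*}
4(x^2+y^2)-8(x^4+y^4) = 2z^2 - z^4 + 8u^2\big(1-3z^2-2u^2\big).
\end{equation*}
This splits off exactly the right-hand side $2z^2-z^4$, so the inequality reduces to showing $8u^2(1-3z^2-2u^2)\ge 0$.

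\textbf{Step 3: show the correction term is nonnegative.} It suffices to check $1-3z^2-2u^2\ge 0$. From $z\le\frac12$ we have $3z^2\le\frac34$, and from $u^2\le \frac{z^2}{4}\le\frac1{16}$ we have $2u^2\le\frac18$. Hence $1-3z^2-2u^2\ge 1-\frac34-\frac18=\frac18>0$, which gives the inequality, with equality exactly when $u=0$, i.e.\ $x=y$.

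\textbf{Where care is needed.} The only real point is the hypothesis $z\le \frac12$ together with $|u|\le z/2$: these are what keep the bracket $1-3z^2-2u^2$ nonnegative. Without them the inequality fails for large $z$, so I will track the constants in Step 3 carefully. The remaining work is routine expansion, which I will double-check term by term.
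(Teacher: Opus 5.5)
Your proof is correct. The identity $4(x^2+y^2)-8(x^4+y^4)=2z^2-z^4+8u^2(1-3z^2-2u^2)$ checks out, and $1-3z^2-2u^2\ge\frac18$ follows from $z\le\frac12$ and $u^2\le z^2/4$.

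The paper uses the same one-variable reduction along the line $x+y=z$: it substitutes $y=z-x$. It then argues by calculus, simply asserting that the critical-point analysis places the minimum at $x=y=z/2$.

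Your symmetric parametrization replaces that calculus step with an exact algebraic identity, which gives you three things the paper's sketch leaves implicit:
\begin{itemize}
\item \emph{Global minimum.} You show the midpoint is the minimum over the whole feasible interval $|u|\le z/2$. In the calculus route one must still check that the other critical points, $u^2=(1-3z^2)/4$, are local maxima lying at or beyond the endpoints. This holds precisely because $z\le\frac12$.
\item \emph{Role of the hypothesis.} You make explicit where $z\le\frac12$ is used, which the paper never mentions. The inequality genuinely fails for larger $z$, e.g.\ at $z=1$ with $y=0$.
\item \emph{Equality case.} You get equality exactly when $x=y$, for free.
\end{itemize}

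The paper's argument is shorter to state but is only a sketch as written; yours is a complete proof.
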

\begin{proof}
    Let's set  $y = z - x$ and minimize the LHS of the expression above with respect to $x$. By computing the critical points of the function, the quantity is minimized, when $x = y = z/2$, and, by substituting that on the expression above, we get the lower bound of the lemma.
\end{proof}

We are now ready to prove \cref{thm:gstoqma(k)=gstoqma(2)}.
\begin{proof}[Proof]
For completeness, it is easy to see that the verification protocol described in \cref{proc:stoqma(k)-to-stoqma(2)} achieves completeness $c' = \frac{1+c}{2}$. Indeed, the two Merlins have sent the same $k$-partite state, so the product test passes with certainty. Consequently, Arthur either accepts with probability 1, or accepts with the same probability as algorithm $\cA$. which is at least $c$. 

For soundness, we are going to follow the proof of Lemma 5 of \cite{HM10}.  Let us assume that Arthur receives states $\ket{\psi_1}$ and $\ket{\psi_2}$, whose maximal overlap with a product state is $1-\varepsilon_1$ and $1-\varepsilon_2$ respectively. 

Let us assume, also, that $1-\delta$ is the probability that product test accepts on $\ket{\psi_1} \otimes \ket{\psi_2}$. Then, in Lemma 5 of \cite{HM10} it is shown that the product test on $\ket{\psi_1}\otimes\ket{\psi_2}$ can be upper bounded in terms of $\PT_k(\ket{\psi_1})$ and $\PT_k(\ket{\psi_2})$ as follows.
\begin{align*}
    \PT_k(\ket{\psi_1}, \ket{\psi_2})  \leq \frac{\PT_k(\ket{\psi_1}) + \PT_k(\ket{\psi_2})}{2} \leq 1 - \frac{\varepsilon_1 + \varepsilon_2}{3} + \frac{\varepsilon_1^2+\varepsilon_2^2}{6},
\end{align*}
where at the last inequality we applied the general upper bound of \cref{thm:prod-test-sw-upper-bound}. So, we found an upper bound for the acceptance probability $1-\delta$ of item $(a)$ of \cref{proc:stoqma(k)-to-stoqma(2)}. For item $(b)$ of \cref{proc:stoqma(k)-to-stoqma(2)}, using \cref{lemma:close-overlap}, the probability of acceptance can be upper bounded by 
\begin{equation*}
    \min \bigg\{ s + \frac{\sqrt{\varepsilon_1} + \sqrt{\varepsilon_2}}{2}, 1\bigg\}.
\end{equation*}
 So, the total probability of acceptance is 
    \begin{equation*}
        s' \leq \max \bigg\{ \frac{1}{2}(1-\delta) + \frac{1}{2} \min\Big\{s+\frac{\sqrt{\varepsilon_1}+\sqrt{\varepsilon_2}}{2}, 1 \Big\}  \bigg\} \text{ s.t. } \delta \geq \frac{\varepsilon_1+\varepsilon_2}{3} - \frac{\varepsilon_1^2+\varepsilon_2^2}{6}.
    \end{equation*}
    The maximum is achieved, when 
    \begin{equation*}
        \frac{\sqrt{\varepsilon_1}+\sqrt{\varepsilon_2}}{2} = 1-s \text{ and }
        \delta =\frac{\varepsilon_1+\varepsilon_2}{3} - \frac{\varepsilon_1^2+\varepsilon_2^2}{6},
    \end{equation*}
    and then $s' \leq 1-\delta/2$. So, in order to find a lower bound for $\delta$, we apply \cref{lemma:helper-lemma-amgm} with $x=\sqrt{\frac{\varepsilon_1}{4}}, y = \sqrt{\frac{\varepsilon_2}{4}}, z=1-s$, and we get  
    \begin{equation*}
        \delta = \frac{4}{3}(x^2+y^2) - \frac{8}{3} (x^4 + y^4) \geq  \frac{2}{3} (1-s)^2 - \frac{1}{3} (1-s)^4.
    \end{equation*}
    Thus, the acceptance probability is
    \begin{equation*}
        s' \leq  1 - \frac{\delta}{2} \leq 1 - \frac{(1-s)^2}{3} + \frac{(1-s)^4}{6}  = \frac{5}{6} + \frac{1}{6}(s^2-2s)^2,
    \end{equation*}
    which concludes the proof of \cref{thm:gstoqma(k)=gstoqma(2)}.

\end{proof}

\end{document}